\newif\iflong
\longtrue
\documentclass[acmsmall,screen]{acmart}\settopmatter{printfolios=true}
\iflong
\setcopyright{none}
\else
\setcopyright{rightsretained}
\fi
\acmJournal{PACMPL}
\acmYear{2020} \acmVolume{4} \acmNumber{POPL} \acmArticle{5} \acmMonth{1} \acmPrice{}\acmDOI{10.1145/3371073}
\copyrightyear{2020}

\iflong
\renewcommand\footnotetextcopyrightpermission[1]{}
\pagestyle{plain} %
\fancyfoot{}
\fi

\bibliographystyle{ACM-Reference-Format}
\citestyle{acmauthoryear}   %

\usepackage{listings}
\lstdefinelanguage{program}{%
  keywords={%
    let,pass,function,%
    var,const,bool,int,void,atomic,%
    while,do,if,then,else,assume,assert,call,return,rule,forall,with,new,choose,skip,%
    task,async,yield,for,wait,%
    type,relation,init, action, safety, invariant, axiom, input
  },
  morecomment=[l]{//},
  morecomment=[s]{/*}{*/},
  morecomment=[n]{(*}{*)},
  mathescape=true,
  escapeinside=`',
}
\lstset{
  language={program},
  basicstyle=\ttfamily\footnotesize
}

\usepackage{booktabs}
\usepackage{mathpartir}
\usepackage{multirow}
\usepackage{amsmath}
\usepackage{amsthm}
\usepackage{stmaryrd}
\usepackage{array}
\usepackage{subcaption}
\usepackage[T1]{fontenc}
\usepackage[nodayofweek]{datetime}
\usepackage{graphicx}
\usepackage{algorithm}
\usepackage[noend]{algpseudocode}
\usepackage[flushleft]{threeparttable}
\usepackage{stmaryrd}
\usepackage{multicol}
\usepackage{pifont}
\usepackage[xcolor]{changebar}
\cbcolor{magenta}

\usepackage{xcolor}
\usepackage{extarrows}
\usepackage{mathtools}
\usepackage{paralist}

\usepackage{tikz}
\usetikzlibrary{positioning,fit}

\usepackage[noabbrev,capitalise]{cleveref} %

\newif\ifnitpick
\nitpickfalse

\newif\ifproofs
\proofstrue

\iflong
\newcommand{\refappendix}[1]{\Cref{#1}}
\else
\newcommand{\refappendix}[1]{the extended version~\cite{extendedVersion}}
\fi

\crefformat{section}{\S#2#1#3} %
\crefformat{subsection}{\S#2#1#3}
\crefformat{subsubsection}{\S#2#1#3}

\Crefname{conjecture}{Conjecture}{Conjectures}
\Crefname{proposition}{Proposition}{Propositions}
\Crefname{lemma}{Lemma}{Lemmas}
\Crefname{corollary}{Corollary}{Corollaries}
\Crefname{example}{Example}{Examples}
\Crefname{definition}{Def.}{Defs.}
\Crefname{algorithm}{Alg.}{Alg.}
\Crefname{theorem}{Thm.}{Thm.}
\Crefname{figure}{Fig.}{Fig.}
\crefname{line}{line}{lines}

\ifproofs
\else
\excludecomment{proof}
\fi

\newtheorem{theo}{Theo}[section] %

\newtheorem{remark}[theo]{Remark}
\newcommand{\para}[1]{\vspace{2pt}\noindent\textbf{\textit{#1.}}}

\newcommand{\Nat}{{\mathbb{N}}}

\newcommand{\ov}{\overline}

\newcommand{\A}{\mathcal{A}}
\newcommand{\B}{\mathcal{B}}

\newcommand{\card}[1]{{\left\vert{#1}\right\vert}} %

\renewcommand{\implies}{\Longrightarrow}

\newcommand{\true}{{\textit{true}}}
\newcommand{\false}{{\textit{false}}}
\newcommand{\vocabulary}{\Sigma}
\newcommand{\voc}{\vocabulary}

\newcommand{\Init}{{\textit{Init}}}
\newcommand{\Bad}{\textit{Bad}}

\newcommand{\TS}{{\textit{TS}}}

\newcommand{\tr}{{\delta}}

\newcommand{\sat}{\mbox{{\rm\sc SAT}}}

\newcommand{\Formulas}[1]{\mathcal{F}({#1})}
\newcommand{\cnf}[2]{\text{CNF}_{#1}^{#2}}
\newcommand{\dnf}[2]{\text{DNF}_{#1}^{#2}}
\newcommand{\moncnf}[2]{\text{Mon-}\cnf{#1}{#2}}

\newcommand{\coNP}{\textbf{coNP}}

\newcommand{\fts}[1]{\TS^{{#1}}}
\newcommand{\psigmatr}[1]{\tr_{#1}^{P}}
\newcommand{\hardclass}{\progs{\psigma{2}}}
\newcommand{\tsclass}{\mathcal{P}}
\newcommand{\progs}[1]{\mathcal{P}_{#1}}

\newcommand{\ftset}{\hardclass}
\newcommand{\psigma}[1]{\Sigma_{#1}^{P}}
\newcommand{\ppi}[1]{\Pi_{#1}^{p}}

\renewcommand{\vec}{\ov}
\newcommand{\horaclesym}{\mathcal{H}}
\newcommand{\horacle}[2]{\horaclesym({#1},{#2})}
\newcommand{\exthoracle}[3]{\exthoraclet{\horaclesym}{#1}{#2}{#3}}
\newcommand{\exthoraclet}[4]{{#1}^{({#2})}({#3},{#4})}
\newcommand{\itporaclesym}{\textsc{itp}}
\newcommand{\indclesym}{\mathcal{I}}
\newcommand{\indcle}[2]{\indclet{\indclesym}{#1}{#2}}
\newcommand{\indclet}[3]{{#1}({#2},{#3})}

\newcommand{\set}[1]{\{{#1}\}}
\newcommand{\seq}[2]{\{{#1}\}_{#2}}

\newcommand{\maxtr}[1]{\tr_{#1}^{\mathcal{M}}}
\newcommand{\maxinv}[1]{{#1}}

\newcommand{\bbalg}[2]{\bbalgt{\A}{#1}{#2}}

\newcommand{\bbalgt}[3]{{#1}^{#2}_{#3}}
\newcommand{\bb}[1]{[{#1}]}

\newcommand{\eqdef}{\stackrel{\rm def}{=}}

\newcommand{\qoraclefam}{\mathcal{Q}}

\newcommand{\querycomplexity}{q}

\newcommand{\badts}[1]{{E}_{#1}}
\newcommand{\badtsclass}[1]{\mathcal{E}_{#1}}

\newcommand{\invtswbad}[1]{{\mathcal{R}^{+}}({#1})}

\newcommand{\monmax}{\mathcal{M}_{\badts{}}}

\newcommand{\monmaxerror}{\mathcal{M}_{\badts{}}}
\newcommand{\monmaxnoerror}{\mathcal{M}}

\newcommand{\explowerbound}{2^{\Omega(n)}}

\newcommand{\dual}[1]{{#1}^{*}}

\newcommand{\cube}[1]{\textit{cube}({#1})}
\newcommand{\qbf}[1]{{\rm QBF}_{#1}}
\newcommand{\hoaretrip}[3]{\{{#1}\}{#2}\{{#3}\}}

\newcommand{\prop}{x} 

\newcommand{\vmark}{\ding{51}}%
\newcommand{\xmark}{\ding{55}} 
\begin{document}
\newif\ifcomments
\commentsfalse
\nochangebars
\definecolor{dg}{cmyk}{0.60,0,0.88,0.27}

\ifcomments
\newcommand{\sharon}[1]{{\textcolor{blue}{SS: {\em #1}}}}
\newcommand{\jrw}[1]{{\textcolor{green}{JRW: {\em #1}}}}
\newcommand{\mooly}[1]{{\textcolor{cyan}{MS: {\em #1}}}}
\newcommand{\neil}[1]{{\textcolor{dg}{NI: {\em #1}}}}
\newcommand{\yotam}[1]{{\textcolor{magenta}{YF: {\em #1}}}}
\newcommand{\TODO}[1]{{\textcolor{red}{TODO: {\em #1}}}}

\else
\newcommand{\sharon}[1]{}
\newcommand{\jrw}[1]{}
\newcommand{\mooly}[1]{}
\newcommand{\neil}[1]{}
\newcommand{\yotam}[1]{}
\newcommand{\TODO}[1]{}

\fi

\newcommand{\commentout}[1]{}
\newcommand{\OMIT}[1]{}

\title{Complexity and Information in Invariant Inference}

\author{Yotam M. Y. Feldman}
\affiliation{
  \institution{Tel Aviv University}
  \country{Israel}
}
\email{yotam.feldman@gmail.com}

\author{Neil Immerman}
\affiliation{
  \institution{UMass Amherst}
  \country{USA}
}
\email{immerman@cs.umass.edu}

\author{Mooly Sagiv}
\affiliation{
  \institution{Tel Aviv University}
  \country{Israel}
}
\email{msagiv@acm.org}

\author{Sharon Shoham}
\affiliation{
  \institution{Tel Aviv University}
  \country{Israel}
}
\email{sharon.shoham@gmail.com}

\begin{abstract}
This paper addresses the complexity of SAT-based invariant inference, a prominent approach to safety verification.
We consider the problem of inferring an inductive invariant of \emph{polynomial length} given a transition system and a safety property.
We analyze the complexity of this problem in a black-box model, called \emph{the Hoare-query model}, which is general enough to capture algorithms such as IC3/PDR and its variants.
An algorithm in this model learns about the system's reachable states by querying the validity of Hoare triples.

We show that in general an algorithm in the Hoare-query model requires an exponential number of queries. %
Our lower bound is information-theoretic and applies even to computationally unrestricted algorithms, showing that no choice of generalization from the partial information obtained in a polynomial number of Hoare queries can lead to an efficient invariant inference procedure in this class.

We then show, for the first time, that by utilizing rich Hoare queries, as done in PDR, inference can be exponentially more efficient than approaches such as ICE learning, which only utilize inductiveness checks of candidates.
We do so by constructing a class of transition systems for which a simple version of PDR with a single frame infers invariants in a polynomial number of queries, whereas every algorithm using only inductiveness checks and counterexamples requires an exponential number of queries.

Our results also shed light on connections and differences with the classical theory of exact concept learning with queries, and \begin{changebar}imply that learning from counterexamples to induction is harder than classical exact learning from labeled examples. This demonstrates that the convergence rate of Counterexample-Guided Inductive Synthesis depends on the form of counterexamples.\end{changebar}
\end{abstract}

 \begin{CCSXML}
<ccs2012>
<concept>
<concept_id>10003752.10010070</concept_id>
<concept_desc>Theory of computation~Theory and algorithms for application domains</concept_desc>
<concept_significance>500</concept_significance>
</concept>
<concept>
<concept_id>10003752.10010124.10010138.10010142</concept_id>
<concept_desc>Theory of computation~Program verification</concept_desc>
<concept_significance>500</concept_significance>
</concept>
<concept>
<concept_id>10011007.10010940.10010992.10010998</concept_id>
<concept_desc>Software and its engineering~Formal methods</concept_desc>
<concept_significance>500</concept_significance>
</concept>
</ccs2012>
\end{CCSXML}

\ccsdesc[500]{Theory of computation~Theory and algorithms for application domains}
\ccsdesc[500]{Theory of computation~Program verification}
\ccsdesc[500]{Software and its engineering~Formal methods}
\keywords{invariant inference, complexity, synthesis, exact learning, property-directed reachability}

\maketitle
\iflong
\thispagestyle{empty}
\fi

\section{Introduction}
The inference of inductive invariants is a fundamental technique in safety verification, and the focus of many works~\cite[e.g.][]{DBLP:conf/cav/McMillan03,ic3,pdr,DBLP:conf/popl/CousotC77,DBLP:journals/sttt/SrivastavaGF13,DBLP:series/natosec/AlurBDF0JKMMRSSSSTU15,DBLP:conf/tacas/FedyukovichB18,DBLP:conf/oopsla/DilligDLM13}.
The task is to find an assertion $I$ that holds in the initial states of the system, excludes all bad states, and is closed under transitions of the system, namely, the Hoare triple $\hoaretrip{I}{\tr}{I}$ is valid, where $\tr$ denotes one step of the system. Such an $I$ overapproximates the set of reachable states and establishes their safety.

The advance of SAT-based reasoning has led to the development of successful algorithms inferring inductive invariants using SAT queries. A prominent example is IC3/PDR~\cite{ic3,pdr}, which has led to a significant improvement in the ability to verify realistic hardware systems.
Recently, this algorithm has been extended and generalized to software systems~\cite[e.g.][]{DBLP:conf/sat/HoderB12,DBLP:conf/cav/KomuravelliGC14,DBLP:conf/vmcai/BjornerG15,DBLP:journals/jacm/KarbyshevBIRS17,DBLP:conf/tacas/CimattiGMT14}.

Successful SAT-based inference algorithms are typically tricky and employ many clever heuristics.
This is in line with the inherent asymptotic complexity of invariant inference, which is hard even with access to a SAT solver~\cite{DBLP:conf/cade/LahiriQ09}.
However, the practical success of inference algorithms calls for a more refined complexity analysis, with the objective of understanding the principles on which these algorithms are based. %
This paper studies the asymptotic complexity of SAT-based invariant inference through the decision problem of \emph{polynomial length inference} in the black-box \emph{Hoare-query model}, as we now explain.

\paragraph{\bf Inference of polynomial-length CNF}
Naturally, inference algorithms succeed when the invariant they infer is not too long.
Therefore, this paper considers the complexity of \emph{inferring invariants of polynomial length}. %
We follow the recent trend in invariant inference, advocated in~\cite{DBLP:conf/cav/McMillan03,ic3}, to search for invariants in rich syntactical forms, beyond those usually considered in template-based invariant inference~\cite[e.g.][]{DBLP:conf/popl/JeannetSS14,DBLP:conf/cav/ColonSS03,DBLP:conf/sas/SankaranarayananSM04,DBLP:conf/pldi/SrivastavaG09,DBLP:journals/sttt/SrivastavaGF13,DBLP:series/natosec/AlurBDF0JKMMRSSSSTU15}, with the motivation of achieving generality of the verification method and potentially improving the success rate.
We thus study the inference of invariants expressed in Conjunctive Normal Form (CNF) of polynomial length.
Interestingly, our results also apply to inferring invariants in Disjunctive Normal Form.

\paragraph{\bf The Hoare-query model}
Our study of SAT-based methods focuses on an algorithmic model called the \emph{Hoare-query model}.
The idea is that the inference algorithm is not given direct access to the program, but performs \emph{queries} on it. In the Hoare-query model, algorithms repeatedly choose $\alpha,\beta$ and query for the validity of Hoare triples $\hoaretrip{\alpha}{\tr}{\beta}$, where $\tr$ is the transition relation denoting one step of the system, inaccessible to the algorithm but via such Hoare queries. The check itself is implemented by an oracle, which in practice is a SAT solver.
This model is general enough to capture algorithms such as PDR and its variants, and leaves room for other interesting design choices, but does not capture white-box approaches such as abstract interpretation~\cite{DBLP:conf/popl/CousotC77}.
The advantage of this model for a theoretical study is that it enables an information-based analysis,
which
\begin{inparaenum}[(i)]
	\item sidesteps open computational complexity questions, and therefore results in unconditional lower bounds on the computational complexity of SAT-based algorithms captured by the model, and
	\item grants meaning to questions about generalization from partial information we discuss later.
\end{inparaenum}

\paragraph{\bf Results}
This research addresses two main questions related to the core ideas behind PDR, and theoretically analyzes them in the context of the Hoare-query model:
\begin{enumerate}
	\item \label{q:gen} These algorithms revolve around the question of \emph{generalization}: from observing concrete states (to be excluded from the invariant), the algorithm seeks to produce assertions that hold for \emph{all} reachable states.
	The different heuristics in this context are largely understood as clever ways of performing this generalization.
	The situation is similar in interpolation-based algorithms, only that generalization is performed from bounded safety proofs rather than states.
	How should generalization be performed to achieve efficient invariant inference?

	\item \label{q:ind} %
	A key aspect of PDR is the form of SAT checks it uses, as part of \emph{relative inductiveness checks}, of Hoare triples $\hoaretrip{\alpha}{\tr}{\beta}$ in which in general $\alpha \neq \beta$.\footnote{For the PDR-savvy: $\beta$ is typically a candidate clause, and $\alpha$ is derived from the previous frame.} %
	Repeated queries of this form are potentially richer than presenting a series of candidate invariants, where the check is $\hoaretrip{\alpha}{\tr}{\alpha}$.
	Is there a benefit in using relative inductiveness beyond inductiveness checks?
\end{enumerate}
We analyze these questions in the foundational case of Boolean programs, \begin{changebar}which is applicable to infinite-state systems through predicate abstraction~\cite{DBLP:conf/cav/GrafS97,DBLP:conf/popl/FlanaganQ02,DBLP:conf/cade/LahiriQ09},\end{changebar} and is also a core part of other invariant inference techniques for infinite-state systems~\cite[e.g.][]{DBLP:conf/sat/HoderB12,DBLP:conf/cav/KomuravelliGC14,DBLP:journals/jacm/KarbyshevBIRS17}.

In \Cref{sec:hoare-lower-bound}, we answer question \ref{q:gen} with an impossibility result, by showing that no choice of generalization can lead to an inference algorithm using only a polynomial number of Hoare queries.
Our lower bound is information-theoretic, and holds even with unlimited computational power, showing that the problem of generalization is chiefly a question of information gathering.

In \Cref{sec:rice-vs-ice}, we answer question \ref{q:ind} in the affirmative, by showing an
exponential gap between algorithms utilizing rich $\hoaretrip{\alpha}{\delta}{\beta}$ checks and algorithms that
perform only inductiveness checks $\hoaretrip{\alpha}{\delta}{\alpha}$. Namely, we
construct a class of programs for which a simple version of PDR can infer invariants efficiently,
but every algorithm learning solely from counterexamples to the inductiveness of candidates requires an
exponential number of queries.
This result shows, for the first time theoretically, the significance of relative inductiveness checks %
as the foundation of PDR's mechanisms,
in comparison to a machine learning approach pioneered in the ICE model~\cite{ICELearning,DBLP:conf/popl/0001NMR16} that infers invariants based on inductiveness checks only (but of course this result does not mean that PDR is \emph{always} more efficient than every ICE algorithm).

Our results also clarify the relationship between the problem of invariant inference and the classical theory of exact concept learning with queries~\cite{DBLP:journals/ml/Angluin87}.
\begin{changebar}
In particular, our results imply that learning from counterexamples to induction is harder than learning from positive \& negative examples (\Cref{sec:concept-vs-invariant}), providing a formal justification to the existing intuition~\cite{ICELearning}.
This demonstrates that the convergence rate of learning in Counterexample-Guided Inductive Synthesis~\cite[e.g.][]{DBLP:conf/asplos/Solar-LezamaTBSS06,DBLP:conf/icse/JhaGST10,DBLP:journals/acta/JhaS17} depends on the form of examples.
We also establish impossibility results for directly applying algorithms from concept learning to invariant inference.
\end{changebar}

The contributions of the paper are summarized as follows:
\begin{itemize}
	\item We define the problem of polynomial-length invariant inference, and show it is $\psigma{2}$-complete (\Cref{sec:inference-decision-problem}), strengthening the hardness result of template-based abstraction by \citet{DBLP:conf/cade/LahiriQ09}.

\item %
We introduce  the Hoare-query model, a black-box model of invariant inference capable of modeling PDR (\Cref{sec:hoare-query-algorithms}), and study the query complexity of polynomial-length invariant inference in this model.

	\item We show that in general an algorithm in this model requires an exponential number of queries to solve polynomial-length inference, even though Hoare queries are rich and versatile (\Cref{sec:hoare-lower-bound}). %

	\item We also extend this result to a model capturing interpolation-based algorithms (\Cref{sec:interpolation-query-algorithms}).

	\item We show that Hoare queries are more powerful than inductiveness queries (\Cref{sec:rice-vs-ice}). This also proves that ICE learning cannot model PDR, and that the extension of the model by~\citet{DBLP:conf/vmcai/VizelGSM17} is necessary.

	\item %
	\begin{changebar}
	We prove that exact learning from counterexamples to induction is harder than exact learning from positive \& negative examples, and derive impossibility results for translating some exact concept learning algorithms to the setting of invariant inference (\Cref{sec:concept-vs-invariant}).
	\end{changebar}
\end{itemize}

\OMIT{

\sharon{general comment: the paper deals with two orthogonal points: first, it addresses a variant of invariant inference, where the size of the invariant matters. Second, it consider oracles, i.e., specific ways of accessing the transition relation to model SAT-based algorithms. Since these two things are orthogonal, how can we justify combining them instead of focusing on one or on each of them independently?}
\yotam{Perhaps through the motivation: understanding the complexity of invariant inference algorithms, which (1) actually construct an invariant, not only prove its existence, and (2) do that in a SAT-based way. ?}

\yotam{I later brag about lower bounds holding even when candidates are exponentially long. we are looking into computationally unrestricted procedures, so we maybe we should say that it is important to consider invariant of length $n$ because otherwise the search space is larger than exponential, and then it is trivial that it takes exponential time to narrow it down (e.g.\ finding a Boolean function amongst all $2^{2^n}$)}

Invariant inference is a search problem: given a transition system, find an inductive invariant proving it safe, if one exists.
To be able to prove interesting programs, modern inference algorithms employ large classes of candidate invariants, seeking expressive power and generality. However, a broad class of candidates mandates effective search strategies:
somewhere in the haystack of candidates a correct invariant must be found.
Many complex invariant inference algorithms have been developed to tackle this challenge, including interpolation~\cite{DBLP:conf/cav/McMillan03}, IC3/PDR~\cite{ic3,pdr}, and abduction~\cite{DBLP:conf/oopsla/DilligDLM13}.
\TODO{this is already specific: SAT-based. We need to narrow down the scope to that}

The area of effective invariant inference is still developing. \TODO{explain difficulties}

What can we say theoretically about these algorithms?

\subsubsection{The Complexity of Invariant Inference}
We want to study the complexity of inferring an invariant, in terms of the size of the program and/or
the size of the smallest invariant. There are reasons.
$\psigma{2}$ hardness.
Say we don't care about the innate complexity of a SAT call.
\neil{I think what you want to say here is that even though NP complete problems can be intractible,
  there has been substantial practical progress using SAT and SMT solvers, so proving that a problem
  is NP complete will not be considered a proof that it is intractible, nor that it is easy, but
  rather an indication that we should cautiously proceed.}

\paragraph{\bf Related work.}
The complexity of invariant inference has been studied by \citet{DBLP:conf/cade/LahiriQ09}. They
show that deciding the existence of an invariant, possibly of exponential size, is PSPACE-complete,
and that the problem of template-based inference is $\psigma{2}$-complete. Polynomial-length inference for CNF formulas can be encoded as specific instances of template-based inference; the $\psigma{2}$-hardness proof of \citet{DBLP:conf/cade/LahiriQ09} uses more general templates and therefore does not directly imply the $\psigma{2}$-hardness of polynomial length inference. They also show that inference is only $\ppi{1}=\coNP$-complete when
candidates are only conjunctions, (or, dually, disjunctions).
\neil{When they are disjunctions is it still NP complete or now co-NP complete?} \yotam{they are actually both coNP, thanks!}
In this paper we focus on inferring
invariants of length $n$ from richer syntactical classes.

\citet{DBLP:conf/popl/PadonISKS16} analyze classes of programs and candidate invariants where the safety problem is undecidable, and consider the decidability of finding an invariant in specific syntactic classes in these cases. In this paper we also take into account the length of the target invariant, leading to the study of complexity rather than of decidability.

\subsubsection{Hoare-Query Invariant Inference Algorithms}
Blackbox access, SAT-based inference, not abstract interpretation (in standard form).
}
\section{Overview}
\OMIT{	
Backward reachability. How to generalize? We can try simple scheme, that of PDR-1, it doesn't always work because it treats every state non-reachable in one step as completely unreachable. That's why PDR is more complicated.
But must generalize, otherwise it's exponential.

Example system: parity, error is all 1's, invariant that lsb=0, can find this quickly if you know. if you don't, might choose very strange generalizations.

Our results are not specific to backward reachability, this is just an illustration here.
}
Coming up with inductive invariants is one of the most challenging tasks of formal verification---it
is often referred to as the \emph{``Eureka!''} step.
This paper studies the asymptotic complexity of automatically inferring CNF invariants of polynomial length, a problem we call \textbf{\textit{polynomial-length inductive invariant inference}}, in a SAT-based black-box model.

Consider the dilemmas Abby faces when she attempts to develop an algorithm %
for this problem from first principles.
Abby is excited about the popularity of SAT-based inference algorithms.
Many such algorithms operate by repeatedly performing checks of Hoare triples of the form $\hoaretrip{\alpha}{\tr}{\beta}$, where $\alpha,\beta$ are a precondition and postcondition (resp.) chosen by the algorithm in each query and $\tr$ is the given transition relation (loop body). A SAT solver implements the check. %
We call such checks \textbf{\textit{Hoare queries}}, and focus in this paper on \emph{black-box} inference algorithms in the \textbf{\textit{Hoare-query model}}: algorithms that access the transition relation solely through Hoare queries.

\iflong

\begin{figure*}[t]
  \centering
\begin{minipage}{\textwidth}
  \begin{minipage}{\textwidth}
  \begin{lstlisting}[numbers=left, numberstyle=\tiny, numbersep=5pt, escapeinside={(*}{*)}, xleftmargin=3.0ex]
init $x_1 = \ldots = x_n = 0$
axiom $\exists! i, \, 1\leq i \leq n. \ c_i = 1$

function add-double($\mathbf{a}$,$\mathbf{b}$) = $(\mathbf{a} + 2 \cdot \mathbf{b}) \, \bmod{2^n}$

while *
   input $y_1,\ldots,y_n$
   if $c_1$:
      $(x_1,x_2,\ldots,x_{n-1},x_n)$ $\ \ \,$ := add-double($(x_1,x_2,\ldots,x_{n-1},x_n)$,$(y_1,y_2,\ldots,y_{n-1},y_n)$)
   if $c_2$:
      $(x_2,x_3,\ldots,x_n,x_1)$     $\ \ \ \ \;$ := add-double($(x_2,x_3,\ldots,x_n,x_1)$,$(y_2,y_3,\ldots,y_n,y_1)$)
   ...
   if $c_n$:
      $(x_n,x_1,\ldots,x_{n-2},x_{n-1})$ $\,$ := add-double($(x_n,x_1,\ldots,x_{n-2},x_{n-1})$,$(y_n,y_1,\ldots,y_{n-2},y_{n-1})$)
   assert $\neg (x_1 = \ldots = x_n = 1)$
  \end{lstlisting}
\end{minipage}
\captionof{figure}{\footnotesize An example propositional transition system for which we would like to infer an inductive invariant. The state is over $x_1,\ldots,x_n$. The variables $y_1,\ldots,y_n$ are inputs and can change arbitrarily in each step. $c_1,\ldots,c_n$ are immutable, with the assumption that exactly one is true.}
  \label{fig:add-even}
  \end{minipage}
\end{figure*}  \fi
\Cref{fig:add-even} displays one example program that Abby is interested in inferring an inductive invariant for.
In this program, a number $\mathbf{x}$, represented by $n$ bits, is initialized to zero, and at each iteration incremented by an even number that is decided by the input variables $\mathbf{y}$ (all computations are mod $2^n$). The representation of the number $\mathbf{x}$ using the bits $x_1,\ldots,x_n$ is determined by another set of bits $c_1,\ldots,c_n$, which are all immutable, and only one of them is true: if $c_1=\true$, the number is represented by $x_1,x_2,\ldots,x_n$, if $c_2=\true$ the least-significant bit (lsb) shifts and the representation is $x_2,x_3,\ldots,x_n,x_1$ and so on.
The safety property is that $\mathbf{x}$ is never equal to the number with all bits 1. Intuitively, this holds because the number $\mathbf{x}$ is always even. An \emph{inductive invariant} states this fact, taking into account the differing representations, by stating that the lsb (as chosen by $\mathbf{c}$) is always 0:
$
	I = (c_1 \rightarrow \neg x_1) \land \ldots (c_n \rightarrow \neg x_n).
$
Of course, Abby aims to verify many systems, of which \Cref{fig:add-even} is but one example.

\subsection{Example: Backward-Reachability with Generalization}
\label{sec:overview-gen-dilemma}
\iflong
\begin{figure*}[t]
\begin{footnotesize}
\vspace*{-\baselineskip}
\begin{minipage}[t]{0.5\textwidth}
\begin{algorithm}[H]
\caption{\\Backward-reachability}
\label{alg:backward-reach}
\begin{algorithmic}[1]
\Procedure{Block-Cube}{$\tr$}
	\State $I \gets \neg\Bad$
	\While{$\hoaretrip{I}{\tr}{I}$ not valid}
		\State $\sigma,\sigma' \gets \Call{cti}{\tr,I}$ %
		\State $\label{algln:min-ret}$ $d \gets$ \Call{Block}{$\tr$,$\sigma$}
		\State $\label{algln:backgen-strengthen}$ $I \gets I \land \neg d$
	\EndWhile
	\Return $I$
\EndProcedure
\end{algorithmic}%
\end{algorithm}%
\vspace{-0.7cm}%
\begin{algorithm}[H]
\caption{Naive Block}
\label{alg:block-backward-reach}
\begin{algorithmic}[1]
\Procedure{Block-Cube}{$\tr$,$\sigma$}
	\State \Return $\bigwedge_{i, \, \sigma \models p_i}{p_i} \land 
			\bigwedge_{i, \, \sigma \models \neg p_i}{\neg p_i}$
\EndProcedure
\end{algorithmic}
\end{algorithm}%
\end{minipage}%
\begin{minipage}[t]{0.5\textwidth}
\begin{algorithm}[H]
\caption{\\Generalization with Init-Step Reachability}
\label{alg:pdr1-gen}
\begin{algorithmic}[1]
\Procedure{Block-PDR-1}{$\tr,\sigma$}
	\State $\label{algln:genpdr1:cube}$ $d \gets \Call{cube}{\sigma}$
	\For{$l \in \Call{cube}{\sigma}$}
		\State $t \gets d \setminus \set{l}$
		$\label{algln:genpdr1:gen-cond}$ \If{$(\Init \implies \neg t) \land \hoaretrip{\Init}{\tr}{\neg t}$}
			\State $d \gets t$
		\EndIf
	\EndFor
	\Return $d$
\EndProcedure
\end{algorithmic}
\end{algorithm}
\end{minipage}%
\end{footnotesize}
\end{figure*} \fi
How should Abby's algorithm go about finding inductive invariants?
One known strategy is that of \emph{backward reachability}, in which the invariant is strengthened to exclude states from which bad states may be reachable.\footnote{Our results are not specific to backward-reachability algorithms; we use them here for motivation and illustration.}
\Cref{alg:backward-reach} is an algorithmic backward-reachability scheme: it repeatedly checks for the existence of a counterexample to induction (a transition $\sigma,\sigma'$ of $\tr$ from $\sigma \models I$ to $\sigma' \not\models I$), and strengthens the invariant to exclude the pre-state $\sigma$ using the formula \textsc{Block} returns.

\Cref{alg:backward-reach} depends on the choice of \textsc{Block}.
The most basic approach is of \Cref{alg:block-backward-reach}, which excludes \emph{exactly} the pre-state, by conjoining to the invariant the negation of the cube of $\sigma$ (the cube is the conjunction of all literals that hold in the state; the only state that satisfies $\cube{\sigma}$ is $\sigma$ itself\sharon{can remove to save space:}, and thus the only one to be excluded from $I$ in this approach).
For example, when \Cref{alg:backward-reach} needs to block the state $\mathbf{x} = 011\ldots1, \mathbf{c}=000\ldots1$ (this state reaches the bad state $\mathbf{x} = 111\ldots1, \mathbf{c}=000\ldots1$), \Cref{alg:block-backward-reach} does so by conjoining to the invariant %
the negation of $\neg x_n \land x_{n-1} \land x_{n-2} \land \ldots x_1 \land \neg c_n \land \neg c_{n-1} \land \neg c_{n-2} \land \ldots c_1$, \sharon{can remove to save space:}and this is a formula that other states do not satisfy.

Alas, \Cref{alg:backward-reach} with blocking by \Cref{alg:block-backward-reach} is not efficient. In essence it operates by enumerating and excluding the states backward-reachable from bad. The number of such states is potentially exponential, making \Cref{alg:block-backward-reach} unsatisfactory. For instance, the example of \Cref{fig:add-even} requires the exclusion of all states in which $\mathbf{x}$ is odd for every choice of lsb, a number of states exponential in $n$. The algorithm would thus require an exponential number of queries to arrive at a (CNF) inductive invariant, even though a CNF invariant with only $n$ clauses exists (as above).

Efficient inference hence requires Abby to exclude more than a single state at each time, namely, to \emph{generalize} from a counterexample---as real algorithms do.
What generalization strategy could Abby choose that would lead to efficient invariant inference?

\subsection{All Generalizations Are Wrong}
\iflong
\else

\begin{figure*}[t]
  \centering
\begin{minipage}{\textwidth}
  \begin{minipage}{\textwidth}
  \begin{lstlisting}[numbers=left, numberstyle=\tiny, numbersep=5pt, escapeinside={(*}{*)}, xleftmargin=3.0ex]
init $x_1 = \ldots = x_n = 0$
axiom $\exists! i, \, 1\leq i \leq n. \ c_i = 1$

function add-double($\mathbf{a}$,$\mathbf{b}$) = $(\mathbf{a} + 2 \cdot \mathbf{b}) \, \bmod{2^n}$

while *
   input $y_1,\ldots,y_n$
   if $c_1$:
      $(x_1,x_2,\ldots,x_{n-1},x_n)$ $\ \ \,$ := add-double($(x_1,x_2,\ldots,x_{n-1},x_n)$,$(y_1,y_2,\ldots,y_{n-1},y_n)$)
   if $c_2$:
      $(x_2,x_3,\ldots,x_n,x_1)$     $\ \ \ \ \;$ := add-double($(x_2,x_3,\ldots,x_n,x_1)$,$(y_2,y_3,\ldots,y_n,y_1)$)
   ...
   if $c_n$:
      $(x_n,x_1,\ldots,x_{n-2},x_{n-1})$ $\,$ := add-double($(x_n,x_1,\ldots,x_{n-2},x_{n-1})$,$(y_n,y_1,\ldots,y_{n-2},y_{n-1})$)
   assert $\neg (x_1 = \ldots = x_n = 1)$
  \end{lstlisting}
\end{minipage}
\captionof{figure}{\footnotesize An example propositional transition system for which we would like to infer an inductive invariant. The state is over $x_1,\ldots,x_n$. The variables $y_1,\ldots,y_n$ are inputs and can change arbitrarily in each step. $c_1,\ldots,c_n$ are immutable, with the assumption that exactly one is true.}
  \label{fig:add-even}
  \end{minipage}
\end{figure*}  %
\begin{figure*}[t]
\begin{footnotesize}
\vspace*{-\baselineskip}
\begin{minipage}[t]{0.5\textwidth}
\begin{algorithm}[H]
\caption{\\Backward-reachability}
\label{alg:backward-reach}
\begin{algorithmic}[1]
\Procedure{Block-Cube}{$\tr$}
	\State $I \gets \neg\Bad$
	\While{$\hoaretrip{I}{\tr}{I}$ not valid}
		\State $\sigma,\sigma' \gets \Call{cti}{\tr,I}$ %
		\State $\label{algln:min-ret}$ $d \gets$ \Call{Block}{$\tr$,$\sigma$}
		\State $\label{algln:backgen-strengthen}$ $I \gets I \land \neg d$
	\EndWhile
	\Return $I$
\EndProcedure
\end{algorithmic}%
\end{algorithm}%
\vspace{-0.7cm}%
\begin{algorithm}[H]
\caption{Naive Block}
\label{alg:block-backward-reach}
\begin{algorithmic}[1]
\Procedure{Block-Cube}{$\tr$,$\sigma$}
	\State \Return $\bigwedge_{i, \, \sigma \models p_i}{p_i} \land 
			\bigwedge_{i, \, \sigma \models \neg p_i}{\neg p_i}$
\EndProcedure
\end{algorithmic}
\end{algorithm}%
\end{minipage}%
\begin{minipage}[t]{0.5\textwidth}
\begin{algorithm}[H]
\caption{\\Generalization with Init-Step Reachability}
\label{alg:pdr1-gen}
\begin{algorithmic}[1]
\Procedure{Block-PDR-1}{$\tr,\sigma$}
	\State $\label{algln:genpdr1:cube}$ $d \gets \Call{cube}{\sigma}$
	\For{$l \in \Call{cube}{\sigma}$}
		\State $t \gets d \setminus \set{l}$
		$\label{algln:genpdr1:gen-cond}$ \If{$(\Init \implies \neg t) \land \hoaretrip{\Init}{\tr}{\neg t}$}
			\State $d \gets t$
		\EndIf
	\EndFor
	\Return $d$
\EndProcedure
\end{algorithmic}
\end{algorithm}
\end{minipage}%
\end{footnotesize}
\end{figure*} \fi
\label{sec:generalization-partial-info-motivation}
One simple generalization strategy Abby considers appears in \Cref{alg:pdr1-gen}, based on the standard ideas in IC3/PDR~\cite{ic3,pdr} \sharon{can remove for space:}and subsequent developments~\cite[e.g.][]{DBLP:conf/sat/HoderB12,DBLP:conf/cav/KomuravelliGC14}.
It starts with the cube (as \Cref{alg:block-backward-reach}) and attempts to drop literals, resulting in a smaller conjunction, which many states satisfy; all these states are excluded  from the candidate in \cref{algln:backgen-strengthen} of \Cref{alg:backward-reach}. Hence with this generalization \Cref{alg:backward-reach} can exclude many states in each iteration\sharon{can remove for space:}, overcoming the problem with the naive algorithm above. %
\Cref{alg:pdr1-gen} chooses to drop a literal from the conjunction if no state reachable in at most one step from $\Init$ satisfies the conjunction even when that literal is omitted (\cref{algln:genpdr1:gen-cond} of \Cref{alg:pdr1-gen}); %
we refer to this algorithm as PDR-1, since it resembles PDR with a single frame.

For example, when in the example of \Cref{fig:add-even} the algorithm attempts to block the state with $\mathbf{x} = 011\ldots1, \mathbf{c}=000\ldots1$, \Cref{alg:pdr1-gen} minimizes the cube to $d = x_1 \land c_1$, because no state reachable in at most one step satisfies $d$, but this is no longer true when another literal is omitted.
Conjoining the invariant with $\neg d$ (in \cref{algln:backgen-strengthen} of \Cref{alg:backward-reach}) produces a clause of the invariant, $c_1 \rightarrow \neg x_1$.
In fact, our results show that PDR-1 finds the aforementioned invariant in $n^2$ queries.
\OMIT{
In fact, we show that in cases such as \Cref{fig:add-even}, where (1) the set of reachable states can be expressed as a monotone formula with $n$ clauses, and (2) every reachable state is reachable in at most one step (in this example, the reachable states are the even numbers and they are reachable in one step), then PDR-1 is guaranteed to find an inductive invariant in $O(n^2)$ queries (\Cref{lem:pdr1-complexity}).\footnote{
	An inductive invariant has the advantage over Bounded Model Checking (BMC) even for systems with diameter one, because once they are established they constitute a sound proof of safety independent of the assumption that the diameter is one. The result reads that PDR-1 always finds a sound safety proof for systems satisfying (1),(2), although it might fail for other systems.
}
}

Yet there is a risk in over-generalization, that is, of dropping \emph{too many} literals and excluding too many states. In \Cref{alg:backward-reach}, generalization must not return a formula that some reachable states satisfy, or the candidate $I$ would exclude reachable states and would not be an inductive invariant.
\Cref{alg:pdr1-gen} chooses to take the strongest conjunction that does not exclude any state reachable in at most one step; it is of course possible (and plausible) that some states are reachable in two steps but not in one. \Cref{alg:backward-reach} with the generalization in \Cref{alg:pdr1-gen} might fail in such cases.

The necessity of generalization, on the one hand, and the problem of over-generalization on the other leads in practice to complex heuristic techniques. Instead of simple backward-reachability with generalization per \Cref{alg:backward-reach}, PDR never commits to a particular generalization~\cite{pdr} through a sequence of \emph{frames}, which are (in some sense) a sequence of candidate invariants. The clauses resulting from generalization are used to strengthen frames according to a bounded reachability analysis; \Cref{alg:pdr1-gen} corresponds to generalization in the first frame. %

Overall, the study of backward-reachability and the PDR-1 generalization leaves us with the question:
\textbf{\textit{Is there a choice of generalization that can be used---in any way---to achieve an efficient invariant inference algorithm?}}

In a non-interesting way, the answer is \emph{yes}, there is a ``good'' way to generalize:
Use \Cref{alg:backward-reach}, with the following generalization strategy: Upon blocking a pre-state $\sigma$, compute an inductive invariant of polynomial length, and return the clause of the invariant that excludes $\sigma$,\footnote{Such a clause exists because $\sigma$ is backward-reachable from bad states, and thus excluded from the invariant.}
and this terminates in a polynomial number of steps. %

Such generalization is clearly unattainable. It requires (1) perfect information of the transition system, and (2) solving a computationally hard problem, since we show that polynomial-length inference is $\psigma{2}$-hard (\Cref{thm:sigma2-hardness}).
What happens when generalization is computationally unbounded (an arbitrary function), but operates based on \emph{partial information} of the transition system?
Is there a \textbf{\textit{generalization from partial information}}, be it computationally intractable, that facilitates \textbf{\textit{efficient inference}}?
If such a generalization exists we may wish to view invariant inference heuristics as \emph{approximating} it in a computationally efficient way. %

Similar questions arise in interpolation-based algorithms, only that generalization is performed not from a concrete state, but from a bounded unreachability proof. Still it is challenging to generalize enough to make progress but not too much as to exclude reachable states (or include states from which bad is reachable).

\OMIT{
\emph{Ideal generalization}: yes, if you know everything and can compute everything. We obviously don't.

\emph{Computational limitation}: no, unless $\psigma{2} = \psigma{1}$.
}

\subsubsection{Our Results}
\label{sec:overview-results-hoare-lower}
\begin{changebar}
Our first main result in this paper
\end{changebar}is that \textbf{\textit{in general, there does not exist a generalization scheme from partial information}} leading to efficient inference based on Hoare queries.
Technically, we prove that even a computationally unrestricted generalization from information gathered from Hoare queries requires an exponential number of queries.
This result applies to any generalization strategy and any algorithm using it that can be modeled using Hoare queries, including \Cref{alg:backward-reach} as well as more complex algorithms such as PDR.
We also extend this lower bound to a model capturing interpolation-based algorithms (\Cref{thm:interpolation-nopoly}).

These results are surprising because a-priori it would seem possible, using unrestricted computational power, to devise queries that repeatedly halve the search space, yielding an invariant with a polynomial number of queries (the number of candidates is only exponential because we are interested in invariants up to polynomial length). %
We show that this is impossible to achieve using Hoare queries. %

\subsection{Inference Using Rich Queries}
\label{overview:rice-vs-ice}
\OMIT{
Even what we used for the previous example generalization is beyond inductiveness. Is it important? Oh yes. (Ask questions also earlier.)
}

So far we have established strong impossibility results for invariant inference based on Hoare queries in the general case, even with computationally unrestricted generalization.
We now turn to shed some light on the techniques that inference algorithms such as PDR employ in practice.
One of the fundamental principles of PDR is the incremental construction of invariants relying on rich Hoare queries.
PDR-1 demonstrates a simplified realization of this principle. When PDR-1 considers a clause to strengthen the invariant, it checks the reachability of that individual clause from  $\Init$, rather than the invariant as a whole. This is the Hoare query $\hoaretrip{\Init}{\tr}{\neg t}$ in \cref{algln:genpdr1:gen-cond} of \Cref{alg:pdr1-gen}, in which, crucially, the precondition is different from the postcondition. The full-fledged PDR is similar in this regard, strengthening a frame according to reachability from the previous frame via relative induction checks~\cite{ic3}.

The algorithm in \Cref{alg:block-backward-reach} is fundamentally different, and uses only inductiveness queries $\hoaretrip{I}{\tr}{I}$, a specific form of Hoare queries where the precondition and postcondition are the same.
Algorithms performing only inductiveness checks can in fact be very sophisticated, traversing the domain of candidates in clever ways.
This approach was formulated in the \emph{ICE learning} framework for learning inductive invariants~\cite{ICELearning,DBLP:conf/popl/0001NMR16} \sharon{remove for space:}(later extended to general Constrained-Horn Clauses~\cite{DBLP:journals/pacmpl/EzudheenND0M18}), in which algorithms present new candidates based on positive, negative, and implication examples returned by a ``teacher'' in response to incorrect candidate invariants.\footnote{Our formulation focuses on implication examples---counterexamples to inductiveness queries---and strengthens the algorithm with full information about the set of initial and bad states instead of positive and negative examples (resp.).} %
The main point is that such algorithms do not perform queries other than inductiveness, and choose the next candidate invariant based solely on the counterexamples to induction showing the previous candidates were unsuitable.

The contrast between the two approaches raises the question: \textbf{\textit{Is there a benefit to invariant inference in Hoare queries richer than inductiveness?}}
For instance, to model PDR in the ICE framework, \citet{DBLP:conf/vmcai/VizelGSM17} extended the framework with relative inductiveness checks, but the question whether such an extension is necessary remained open.

\subsubsection{Our Results}
\label{sec:overview-results-inductiveness}
\begin{changebar}
Our second significant result in this paper
\end{changebar}is showing an exponential gap between the general Hoare-query model and the more specific inductiveness-query model. To this end, we construct a class of transition systems, including the example of \Cref{fig:add-even}, for which
\begin{inparaenum}
	\item PDR-1, which is a Hoare-query algorithm, infers an invariant in a polynomial number of queries, but
	\item \emph{every} inductiveness-query algorithm requires an \emph{exponential} number of queries, that is, an exponential number of candidates before it finds a correct inductive invariant.
\end{inparaenum}
This demonstrates that analyzing the reachability of clauses separately can offer an exponential advantage in certain cases.
This also proves that PDR cannot be cast in the ICE framework, and that the extension by~\citet{DBLP:conf/vmcai/VizelGSM17} is necessary and strictly increases the power of inference with a polynomial number of queries. %
To the best of our knowledge, this is not only the first lower bound on ICE learning demonstrating such an exponential gap (also see the discussion in \Cref{sec:related-work}), but also the first polynomial upper bound on PDR for a class of systems. %
We show this separation on a class of systems constructed using a technical notion of \emph{maximal systems for monotone invariants}. These are systems for which there exists a monotone invariant (namely, an invariant propositional variables appear only negatively) with a linear number of clauses, and the transition relation includes \emph{all} transitions allowed by this invariant. For example, a maximal system can easily be constructed from~\Cref{fig:add-even}: this system allows every transition between states satisfying the invariant (namely, between all even $\mathbf{x}$'s with the same representation), and also every transition between states violating the invariant (namely, between all odd $\mathbf{x}$'s with the same representation).\footnote{Transitions violating the $\mathbf{c}$ axiom or modifying it are excluded in this modeling.}; a maximal system also includes all the transitions from states that violate that invariant to the states that satisfy it (here, between odd $\mathbf{x}$ and even $\mathbf{x}$ with the same $\mathbf{c}$). %
The success of PDR-1 on such systems relies on the small diameter (every reachable state is reachable in one step) and harnesses properties of prime consequences of monotone formulas. %
In contrast, we show that for inductiveness-query algorithms this class is as hard as the class of \emph{all} programs admitting monotone invariants, whose hardness is established from the results of \Cref{sec:overview-results-hoare-lower}.
For example, from the perspective of inductiveness-query algorithms, the example of \Cref{fig:add-even}, which is a maximal program as explained above, %
is as hard as \emph{any} system that admits its invariant (and also respects the $\mathbf{c}$ axiom and leaves $\mathbf{c}$ unchanged). %
This is because an inductiveness-query algorithm can only benefit from having \emph{fewer} transitions and hence fewer counterexamples to induction, whereas maximal programs include as many transitions as possible.
If an inductiveness query algorithm is to infer an invariant for the example of \Cref{fig:add-even}, it must also be able to infer an invariant for all systems whose transitions are a \emph{subset} of the transitions of this example. This includes systems with an exponential diameter, %
as well as systems admitting other invariants, potentially exponentially long.
This program illustrates our lower bound construction, which takes \emph{all} maximal programs for monotone-CNF invariants.

In our lower bound we follow the existing literature on the analysis of inductiveness-query
algorithms, which focuses on the worst-case notion w.r.t.\ potential examples (strong convergence
in~\citet{ICELearning}). An interesting direction is to analyze inductiveness-query algorithms that
exercise some control over the choice of counterexamples to induction, or under probabilistic
assumptions on the distribution of examples.

\subsection{A Different Perspective: Exact Learning of Invariants with Hoare Queries}
This paper can be viewed as developing a theory of exact learning of inductive invariants with Hoare queries, akin to the classical theory of concept learning with queries~\cite{DBLP:journals/ml/Angluin87}.
The results outlined above are consequences of natural questions about this model:
The impossibility of generalization from partial information (\Cref{sec:overview-results-hoare-lower}) stems from an exponential lower bound on the Hoare-query model. %
The power of rich Hoare queries (\Cref{sec:overview-results-inductiveness}) is demonstrated by an exponential separation between the Hoare- and inductiveness-query models, in the spirit of the gap between concept learning using both equivalence and membership queries and concept learning using equivalence queries alone~\cite{DBLP:journals/ml/Angluin90}. %

The similarity between invariant inference (and synthesis in general) and exact concept learning has been observed before~\cite[e.g.][]{DBLP:conf/icse/JhaGST10,ICELearning,DBLP:journals/acta/JhaS17,DBLP:series/natosec/AlurBDF0JKMMRSSSSTU15,DBLP:conf/colt/BshoutyDVY17}. %
Our work highlights some interesting differences and connections between invariant learning with Hoare, and concept learning with equivalence and membership queries.
\begin{changebar}
This comparison yields (im)possibility results for translating algorithms from concept learning with queries to invariant inference with queries.
Another outcome is the third significant result of this paper: a proof that learning
from counterexamples to induction is inherently harder than learning from examples labeled
as positive or negative, formally corroborating the intuition advocated by~\citet{ICELearning}.
More broadly, the complexity difference between learning from labeled examples and learning from counterexamples to induction demonstrates that the convergence rate of learning
in Counterexample-Guided Inductive Synthesis~\cite[e.g.][]{DBLP:conf/asplos/Solar-LezamaTBSS06,DBLP:conf/icse/JhaGST10,DBLP:journals/acta/JhaS17} depends on the form of examples.
The proof of this result builds on the lower bounds discussed earlier, and is discussed in \Cref{sec:concept-vs-invariant}.\end{changebar}\footnote{
	\begin{changebar}
	It may also be interesting to note that one potential difference between classical learning and invariant inference, mentioned by~\citet{DBLP:conf/tacas/LodingMN16}, does not seem to manifest in the results discussed in \Cref{sec:overview-results-hoare-lower}: the transition systems in the lower bound for inductiveness queries in \Cref{cor:monmax-lower} have a \emph{unique} inductive invariant, and still the problem is hard.
	\end{changebar}
}

\OMIT{
Comparing learning invariants from a class of formulas to learning
We show that learning invariants in the class of \sharon{polynomial} monotone CNF invariants %
requires an \emph{exponential} number of Hoare queries, even when exponentially long queries are allowed (\Cref{thm:query-nopoly}), whereas in concept learning
\begin{enumerate}
	\item %
Using membership queries and equivalence queries the class of monotone CNF formulas can be learned in a \emph{polynomial} number of queries~\cite{DBLP:journals/ml/Angluin87}.

	\item With just equivalence queries, \sharon{the even larger class of} CNF formulas can be learned in \emph{subexponential} number of queries~\cite{DBLP:journals/ipl/Bshouty96}.

	\item %
    Allowing exponentially-long equivalence queries, a \emph{polynomial} number of equivalence queries suffice to learn CNF formulas through the halving algorithm/majority vote~\cite{barzdins1972prediction,DBLP:journals/ml/Angluin87,DBLP:journals/ml/Littlestone87}.
\end{enumerate}
These differences reflect the fact that in invariant learning the queries access the transition relation rather than candidate invariant(s) directly. Transition relations with complex reachability patterns (e.g.\ an exponential diameter) incur challenges beyond identifying the invariant as a formula. Further, a counterexample to induction can be handled by either strengthening or weakening the invariant, so learning from them is more challenging than from positive/negative examples.
\sharon{ replace with much shorter and focused: Another important difference is the fact that a counterexample to induction can be handled in two ways---strengthen the invariant to exclude the pre-state, or weaken the invariant to include the post-state---and there is no telling which is the correct choice. This is unlike the situation with equivalence queries, where the learner knows whether the example the teacher returns is negative or positive. In maximal systems the algorithm can perform a membership query by checking reachability from the initial states in at most one step, but this is not possible in general transition systems (as \citet{ICELearning} also point out). The feasibility of membership queries in maximal systems lies at the heart of our upper bound on inference for maximal systems for monotone invariants, and PDR-1 can be viewed as an incarnation of the polynomial algorithm for concept learning monotone formulas~\cite{DBLP:journals/ml/Angluin87}. } \yotam{wrote}

\sharon{should this be said in the section on concept learning: Another important difference is the fact that a counterexample to induction can be handled in two ways---strengthen the invariant to exclude the pre-state, or weaken the invariant to include the post-state---and there is no telling which is the correct choice. This is unlike the situation with equivalence queries, where the learner knows whether the example the teacher returns is negative or positive.}
}

\section{Background}
\subsection{States, Transitions Systems, and Inductive Invariants}
In this paper we consider safety problems defined via formulas in propositional logic. Given a propositional vocabulary $\voc$ that consists of a finite set of Boolean variables, we denote by $\Formulas{\voc}$ the set of well formed propositional formulas defined over $\voc$. A \emph{state} is a valuation to $\voc$. For a state $\sigma$, the \emph{cube of $\sigma$}, denoted $\cube{\sigma}$, is the conjunction of all literals that hold in $\sigma$.
A \emph{transition system} is a triple $\TS = (\Init,\tr,\Bad)$ such that $\Init,\Bad \in \Formulas{\voc}$ define the \emph{initial states} and the \emph{bad states}, respectively, and $\tr \in \Formulas{\voc \uplus \voc'}$ defines the \emph{transition relation}, where $\voc' = \{ \prop' \mid \prop \in \voc\}$ is a copy of the vocabulary used to describe the post-state of a transition.
A class of transition systems, denoted $\progs{}$, is a set of transition systems.
A transition system $\TS$ is \emph{safe} if all the states that are reachable from the initial states via steps of $\tr$ satisfy $\neg \Bad$. %
An \emph{inductive invariant} for $\TS$ is a formula $I \in \Formulas{\voc}$ such that $\Init \implies I$, $I \wedge \tr \implies I'$, and $I \implies \neg\Bad$, where $I'$ %
denotes the result of substituting each $\prop \in \voc$ for $\prop' \in \voc'$ in $I$, and $\varphi \implies \psi$ denotes the validity of the formula $\varphi \to \psi$. In the context of propositional logic, a transition system is safe if and only if it has an inductive invariant. %
When $I$ is not inductive, a \emph{counterexample to induction} is a pair of states $\sigma, \sigma'$ such that $\sigma,\sigma'\models I \wedge \tr \wedge \neg I'$ (where the valuation to $\voc'$ is taken from $\sigma'$).

\paragraph{The classes $\cnf{n}{}$, $\dnf{n}{}$ and $\moncnf{n}{}$}
$\cnf{n}{}$ is the set of propositional formulas in Conjunctive Normal Form (CNF) with at most $n$ clauses (disjunction of literals). $\dnf{n}{}$ is likewise for Disjunctive Normal Form (DNF), where $n$ is the maximal number of cubes (conjunctions of literals).
$\moncnf{n}{}$ is the subset of $\cnf{n}{}$ in which all literals are negative. %

\subsection{Invariant Inference Algorithms}
In this section we briefly provide background on inference algorithms that motivate our theoretical development in this paper. The main results of the paper do not depend on familiarity with these algorithms or their details; this (necessarily incomprehensive) ``inference landscape'' is presented here for context and motivation
for defining the Hoare-query model (\Cref{sec:hoare-query-algorithms}), studying its complexity and the feasibility of generalization (\Cref{sec:hoare-lower-bound}), and analyzing the power of Hoare queries compared to inductiveness queries (\Cref{sec:rice-vs-ice}). We allude to specific algorithms in motivating each of these sections.

\subsubsection*{\bf {IC3/PDR}}
\label{sec:prelim-pdr}

IC3/PDR maintains a sequence of formulas $F_0,F_1,\ldots$, called \emph{frames}, each of which can be understood as a candidate inductive invariant.
The sequence is gradually modified and extended throughout the algorithm's run. It is maintained as an approximate reachability sequence, meaning that
	(1) $\Init \implies F_0$,
	(2) $F_j \implies F_{j+1}$,
	(3) $F_j \land \tr \implies (F_{j+1})'$, and
    (4) $F_j \implies \neg \Bad$.
These properties ensure that $F_j$ overapproximates the set of states reachable in $j$ steps, and that the approximations contain no bad states.
(We emphasize that $F_j \implies \neg \Bad$ does \emph{not} imply that a bad state is unreachable in \emph{any} number of states.)
The algorithm terminates when one of the frames implies its preceding frame ($F_j \implies F_{j-1}$), in which case it constitutes an inductive invariant, or when a counterexample trace is found.
In iteration $N$, a new frame $F_N$ is added to the sequence. One way of doing so is by initializing $F_N$ to $\true$, and strengthening it until it excludes all bad states.
Strengthening is done by \emph{blocking} bad states: given a bad state $\sigma_b \models F_N \land \Bad$, the algorithm strengthens $F_{N-1}$ to exclude all $\sigma_b$'s pre-states---states that satisfy $F_{N-1} \land \tr \land (\cube{\sigma_b})'$---one by one (thereby demonstrating that $\sigma_b$ is unreachable in $N$ steps). Blocking a pre-state $\sigma_a$ from frame $N-1$ is performed by a recursive call to block its own pre-states from frame $N-2$, and so on. If this process reaches a state from $\Init$, the sequence of states from the recursive calls constitutes a trace reaching $\Bad$ from $\Init$, which is a counterexample to safety.
Alternatively, when a state $\sigma$ is successfully found to be unreachable from $F_{j-1}$ in one step, i.e., $F_{j-1} \land \tr \land (\cube{\sigma_b})'$ is unsatisfiable, frame $F_j$ is strengthened to reflect this fact.
Aiming for efficient convergence (see \Cref{sec:overview-gen-dilemma}), PDR chooses to \emph{generalize}, and exclude more states. A basic form of generalization is performed by dropping literals from $\cube{\sigma}$ as long as the result $t$ is still unreachable from $F_{j-1}$, i.e., $F_{j-1} \land \tr \land t'$ is still unsatisfiable. This is very similar to PDR-1 above (\Cref{sec:generalization-partial-info-motivation}), where $F_{j-1}$ was always $F_0 = \Init$. Often \emph{inductive generalization} is used, dropping literals as long as %
$F_{j-1} \land \neg t \land \tr \land t'$, reading that $\neg t$ is inductive \emph{relative} to $F_{j-1}$, which can drop more literals than basic generalization.
A core optimization of PDR is \emph{pushing}, in which a frame $F_j$ is ``opportunistically'' strengthened with a clause $\alpha$ from $F_{j-1}$, if $F_{j-1}$ is already sufficiently strong to show that $\alpha$ is unreachable in $F_j$.

For a more complete presentation of PDR and its variants as a set of abstract rules that may be applied nondeterministically see e.g.~\citet{DBLP:conf/sat/HoderB12,DBLP:conf/fmcad/GurfinkelI15}.
The key point from the perspective of this paper is that the algorithm and its variants access the transition relation $\tr$ in a very specific way, checking whether some $\alpha$ is unreachable in one step of $\tr$ from the set of states satisfying a formula $F$ (or those satisfying $F \land \alpha$), and obtains a counterexample when it is reachable
(see also~\citet{DBLP:conf/vmcai/VizelGSM17}).
Crucially, other operations (e.g., maintaining the frames, checking whether $F_j \implies F_{j-1}$, etc.) do not use $\tr$. %
We will return to this point when discussing the Hoare-query model, which can capture IC3/PDR (\Cref{sec:hoare-query-algorithms}).

\subsubsection*{\bf ICE}
\label{sec:prelim-ice}
The ICE framework~\cite{ICELearning,DBLP:conf/popl/0001NMR16} (later extended to general Constrained-Horn Clauses~\cite{DBLP:journals/pacmpl/EzudheenND0M18}), is a learning framework for inferring invariants from positive, negative and implication counterexamples.
We now review the framework using the original terminology and notation; later in the paper we will use a related formulation that emphasizes the choice of candidates (in \Cref{sec:inductiveness-query-algorithms}). %

In ICE learning, the teacher holds an unknown target $(P,N,R)$, where $P,N \subseteq D, R \subseteq D \times D$ are sets of examples. The learner's goal is to find a hypothesis $H \in C$ s.t.\ $P \subseteq H, N \cap H = \emptyset$, and for each $(x,y) \in R$, $x \in H \implies y \in H$. The natural way to cast inference in this framework %
is, given a transition system $(\Init,\tr,\Bad)$ and a set of candidate invariants $\mathcal{L}$, to take $D$ as the set of program states, $P$ a set of reachable states including $\Init$, $N$ a set of states including $\Bad$ from which a safety violation is reachable, $R$ the set of transitions of $\tr$, and $C = \mathcal{L}$.
Iterative ICE learning operates in rounds. In each round, the learner is provided with a sample---$(E,B,I)$ s.t.\ $E \subseteq P, B \subseteq N, I \subseteq R$---and outputs an hypothesis $H \in C$. The teacher returns that the hypothesis is correct, or extends the sample with an example showing that $H$ is incorrect.
The importance of implication counterexamples is that they allow implementing a teacher using a SAT/SMT solver without ``guessing'' what a counterexample to induction indicates~\cite{ICELearning,DBLP:conf/tacas/LodingMN16}.
Examples of ICE learning algorithms include Houdini~\cite{DBLP:conf/fm/FlanaganL01} and symbolic abstraction~\cite{DBLP:conf/vmcai/RepsSY04,DBLP:journals/entcs/ThakurLLR15}, as well as designated algorithms~\cite{ICELearning,DBLP:conf/popl/0001NMR16}.
Theoretically, the analysis of \citet{ICELearning} focuses on \emph{strong convergence} of the learner, namely, that the learner can always reach a correct concept, no matter how the teacher chooses to extend samples between rounds.
In this work, we will be interested in the \emph{number of rounds} the learner performs. We will say that the learner is \emph{strongly-convergent} with \emph{round-complexity} $r$ if for every ICE teacher, the learner finds a correct hypothesis in at most $r$ rounds, provided that one exists. We extend this definition to a class of target descriptions in the natural way.

\subsubsection*{\bf {Interpolation}}
\label{sec:prelim-interpolation}
The idea of interpolation-based algorithms, first introduced by~\citet{DBLP:conf/cav/McMillan03}, is to generalize proofs of bounded unreachability into elements of a proof of \emph{un}bounded reachability, utilizing \emph{Craig interpolation}. Briefly, this works as follows: encode a bounded reachability from a set of states $F$ in $k$ steps, and use a SAT solver to find that this cannot reach $\Bad$. When efficient interpolation is supported in the logic and solver, the SAT solver can produce an interpolant $C$: a formula representing a set of states that
\begin{inparaenum}[(i)]
	\item overapproximates the set of states reachable from $F$ in $k_1$ steps, and still
	\item cannot reach $\Bad$ in $k_2$ steps
\end{inparaenum}
(any choice $k_1+k_2=k$ is possible).
Thus $C$ overapproximates concrete reachability from $F$ without reaching a bad state, although both these facts are known in only a bounded number of steps. The hope is that $C$ would be a useful generalization to include as part of the invariant. The original algorithm~\cite{DBLP:conf/cav/McMillan03} sets some $k$ as the current unrolling bound, starts with $F=\Init$, obtains an interpolant $C$ with $k_1=1,k_2=k-1$, sets $F \gets F \lor C$ and continues in this fashion, until an inductive invariant is found, or $\Bad$ becomes reachable in $k$ steps from $F$, in which case $k$ is incremented and the algorithm is restarted. The use of interpolation and generalization from bounded unreachability has been used in many works since~\cite[e.g.][]{DBLP:conf/fmcad/VizelG09,DBLP:conf/cav/McMillan06,DBLP:journals/lmcs/JhalaM07,DBLP:conf/popl/HenzingerJMM04,DBLP:conf/tacas/VizelGS13}. Combining ideas from interpolation and PDR has also been studied~\cite[e.g.][]{DBLP:conf/cav/VizelG14}. The important point for this paper is that many interpolation-based algorithms only access the transition relation when checking bounded reachability (from some set of states $\alpha$ to some set of states $\beta$), and extracting interpolants when the result is unreachable. We will return to this point when discussing the interpolation-query model, which aims to capture interpolation-based algorithms (\Cref{sec:interpolation-query-algorithms}).   
\section{Polynomial-Length Invariant Inference}
\label{sec:inference-decision-problem}
In this section we formally define the problem of \emph{polynomial-length invariant inference} for CNF formulas, which is the focus of this paper. We then relate the problem to the problem of inferring DNF formulas with polynomially many cubes via duality (see \refappendix{sec:forward-backward-duality}), and focus on the case of CNF in the rest of the paper.

\OMIT{

\paragraph{Parameterized classes of candidate invariants.}
To enable restricting the length of the inductive invariants being inferred,
we parameterize the language of candidate invariants by $n \in \Nat$, a parameter that determines the (upper bound on the) target length of the invariant.
Formally, rather than considering a single set $\Linv{}$ of candidate invariants, we consider sequences $\seq{\Linv{n}}{n \in \Nat}$ where for every $n$, $\Linv{n}$ %
is a set of formulas of length polynomial in $n$ and the vocabulary. Namely, for every $n\in \mathbb{N}$ and vocabulary $\voc$, all the formulas in $\Linv{n}$ over $\voc$
are represented by at most $p(n, \card{\voc})$ bits in some fixed representation, for some fixed polynomial $p(\cdot,\cdot)$. %
In this paper we are interested in classes $\seq{\Linv{n}}{n \in \Nat}$ where $\Linv{n}$ contains all CNF formulas with at most $n$ clauses:
}

Our object of study is the problem of polynomial-length inference:
\begin{definition}[Polynomial-Length Inductive Invariant Inference]
\label{def:bounded-inference}
The \emph{polynomial-length inductive invariant inference problem} (\emph{invariant inference} for short) for a class of transition systems  $\progs{}$ and a polynomial %
$p(n) = \Omega(n)$
is the problem:
Given a transition system $\TS \in \progs{}$ over $\voc$, \emph{decide} whether there exists an inductive invariant
$I \in \cnf{p(n)}{}$
for $\TS$, where $n = \card{\voc}$.
\end{definition}

\paragraph{Notation.} In the sequel, when considering the polynomial-length inductive invariant inference problem of a transition system $\TS = (\Init,\tr,\Bad) \in \progs{}$, we denote by $\voc$ the vocabulary of $\Init,\Bad$ and $\tr$. Further, we denote $n = \card{\voc}$.

\subsubsection*{\bf Complexity} The \emph{complexity} of polynomial-length inference is measured in %
$\card{\TS} = \card{\Init} + \card{\tr} + \card{\Bad}$. %
Note that the invariants are required to be polynomial in $n=\card{\voc}$.
$\cnf{p(n)}{}$ is a rich class of invariants. Inference in more restricted classes can be solved efficiently.
For example, when only conjunctive candidate invariants are considered, %
and $\progs{}$ is the set of all propositional transition systems, the problem can be decided in a polynomial number of SAT queries through the Houdini algorithm~\cite{DBLP:conf/fm/FlanaganL01,DBLP:conf/cade/LahiriQ09}.
Similar results hold also for CNF formulas with a constant number of literals per clause (by defining a new predicate for each of the polynomially-many possible clauses and applying Houdini), and for CNF formulas with a constant number of clauses (by translating them to DNF formulas with a constant number of literals per cube and applying the dual procedure).
However, a restricted class of invariants may miss invariants for some programs and reduces the generality of the verification procedure.
Hence in this paper we are interested in the richer class of polynomially-long CNF invariants. In this case the problem is no longer tractable even with a SAT solver:
\begin{theorem}
\label{thm:computational-hardness}
\label{thm:sigma2-hardness}
Let $\progs{}$ be the set of all propositional transition systems.
Then polynomial-length inference for $\progs{}$ %
is $\psigma{2}$-complete,
where  $\psigma{2} = \text{NP}^{\text{\,SAT}}$ is the second level of the polynomial-time hierarchy.
\end{theorem}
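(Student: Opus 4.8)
The plan is to prove the two inclusions separately: membership in $\psigma{2}$ is the routine direction, and $\psigma{2}$-hardness is where the real work lies.

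For membership, observe that a candidate invariant $I \in \cnf{p(n)}{}$ has a description of size polynomial in $n = \card{\voc}$ (hence in $\card{\TS}$): it consists of at most $p(n)$ clauses, each a subset of the $2n$ literals over $\voc$. An algorithm witnessing $\psigma{2} = \NP^{\sat}$ therefore guesses such an $I$ nondeterministically and then verifies, with a constant number of calls to a SAT oracle, that $I$ is an inductive invariant: that $\Init \wedge \neg I$, $I \wedge \tr \wedge \neg I'$, and $I \wedge \Bad$ are all unsatisfiable. Each of these is a propositional formula of size polynomial in $\card{\TS}$, so all three checks are single oracle queries (using $\NP^{\sat} = \NP^{\coNP}$ to read off the validity of the three implications). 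The machine accepts iff all three are unsatisfiable, which by definition of inductive invariant happens iff a short invariant exists. Hence the problem lies in $\psigma{2}$.

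For hardness, I would reduce from the canonical $\psigma{2}$-complete problem, the validity of a quantified Boolean formula $\Psi = \exists \ov x\, \forall \ov y.\ \phi(\ov x, \ov y)$ with $\ov x = x_1\til x_k$ and $\ov y = y_1 \til y_m$. The guiding correspondence is: the \emph{existential} block $\exists \ov x$ is realized by the \emph{choice of invariant} --- a short CNF invariant must ``commit'' to an assignment of $\ov x$ through its unit clauses on a set of immutable selector variables --- while the \emph{universal} block $\forall \ov y$ is realized by the validity checks defining inductiveness, since the $\ov y$ are modeled as nondeterministic inputs that the transition relation may set arbitrarily in each step. Concretely, I would build $\TS_\Psi$ so that from any state the system can expose an assignment to $\ov y$ and step to $\Bad$ unless the committed value of $\ov x$ already guarantees $\phi(\ov x, \ov y)$; then $I$ is inductive iff the assignment it commits to is total and satisfies $\forall \ov y.\ \phi(\ov x, \ov y)$. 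The bound $p(n)$ is chosen just large enough to encode one total assignment plus a fixed structural part, which is precisely what makes the problem $\psigma{2}$ rather than PSPACE: it rules out the trivial invariant consisting of the exact reachable set, which may be exponentially large. One then checks that $\Psi$ is valid iff $\TS_\Psi$ admits an invariant in $\cnf{p(n)}{}$, that the reduction is polynomial-time, and that it stays within the restricted CNF class --- unlike the more general templates of \citet{DBLP:conf/cade/LahiriQ09}, whose reduction does not specialize to this case.

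The main obstacle is making the length bound bite in the ``no'' direction: I must ensure that when $\Psi$ is \emph{false}, there is truly no short CNF inductive invariant, not merely none of the intended committing form. This requires a lower bound on invariant length --- that any invariant not pinned to a good $\ov x$ is forced to separate an essentially unstructured reachable set from the bad set and so needs more than $p(n)$ clauses --- while simultaneously guaranteeing in the ``yes'' direction that every good $\ov x^*$ yields a genuinely inductive short invariant (containing $\Init$, excluding $\Bad$, and closed under $\tr$). Balancing these two demands --- enough transitions to force the universal check over $\ov y$, yet few enough that a correct commitment actually induces a $\tr$-closed set --- is the delicate part of the gadget design and the crux of the proof.
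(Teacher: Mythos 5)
Your membership direction is fine and is essentially the paper's (guess $I \in \cnf{p(n)}{}$, verify the three conditions with a SAT oracle). The hardness direction, however, has a genuine gap, located exactly where your two quantifier blocks meet the definition of an inductive invariant. You want the invariant to ``commit'' to a witness $\ov{x}^*$ via unit clauses on immutable selector variables, but an inductive invariant must satisfy $\Init \implies I$, and $\Init$ is fixed by the reduction before any witness is known. This creates a dilemma your gadget cannot escape. If selector states are reachable from $\Init$ (because $\Init$ leaves the selectors unconstrained, or a neutral initial state can transition to them), then \emph{every} selector value appears in a reachable state; since in your gadget any state whose selector value violates $\forall \ov{y}.\,\phi$ has a transition into $\Bad$, the system is safe iff $\forall \ov{x}\,\forall \ov{y}.\,\phi$, and the reduction collapses to a \coNP{} question. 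If instead no selector state is reachable, the reachable set itself is a trivial short inductive invariant and the answer is always ``yes.'' There is no way out through the transition structure: consecution $I \wedge \tr \implies I'$ is a purely universal closure condition, so it can never force ``at least one witness state'' into $I$---the reachable set is always the minimal candidate. The only remaining lever is the length bound, i.e., precisely the CNF-size lower bound (``any invariant not pinned to a good $\ov{x}$ needs more than $p(n)$ clauses'') that you acknowledge but leave entirely open. Under your plan that lower bound is not a finishing touch; it \emph{is} the proof, and proving size lower bounds against all CNF invariants of a safe system is a hard problem in its own right. This is, in essence, an attempt to port the template-style argument of \citet{DBLP:conf/cade/LahiriQ09}: a template can exclude the reachable set as a candidate, but the class $\cnf{p(n)}{}$ cannot, which is exactly why the paper states that the template-based hardness proof does not directly transfer.

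The paper's reduction (\Cref{sec:hard-class}, \Cref{lem:validity-vs-inference}) avoids both problems at once, and the contrast is instructive. First, when the $\qbf{2}$ instance is false, the constructed system $\fts{\phi}$ is \emph{unsafe}---it deterministically enumerates all assignments and actually reaches $\Bad$---so no invariant of any length exists, and no formula-size lower bound is ever needed; the length bound only has to work in the ``yes'' direction. Second, the existential witness enters the invariant not as a pinned value of immutable variables but as a lexicographic bound on \emph{mutable} counter variables: $I = \neg e \wedge (b \rightarrow \ov{y} \leq \ov{v}) \wedge ((b \wedge a) \rightarrow \ov{y} < \ov{v})$, where $\ov{v}$ is the first witness. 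Initiation holds automatically because the counter starts at $\ov{0} \leq \ov{v}$ for \emph{every} $\ov{v}$, and the universal check $\forall \ov{x}.\,\phi(\ov{v},\ov{x})$ is encoded in the consecution of $I$ at the point where the machine's counter reaches $\ov{v}$. To salvage your approach you would either have to adopt this ``witness as a bound on a mutable counter, inside a machine that is unsafe in the no-case'' design, or genuinely prove the invariant-size lower bound---at which point you would be proving something strictly harder than the theorem requires.
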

We defer the proof to \Cref{sec:proof-of-comp-hardness}.
We note that polynomial-length inference can be encoded as specific instances of template-based inference; the $\psigma{2}$-hardness proof of \citet{DBLP:conf/cade/LahiriQ09} uses more general templates and therefore does not directly imply the $\psigma{2}$-hardness of polynomial-length inference.
Lower bounds on polynomial-length inference entail lower bounds for template-based inference.

\begin{remark}
In the above formulation, an efficient procedure for deciding \emph{safety} does not imply polynomial-length inference is tractable, since the program may be safe, but all inductive invariants may be too long. %
To overcome this technical quirk, we can consider a \emph{promise problem}~\cite{DBLP:conf/birthday/Goldreich06a} variant of polynomial-length inference:

Given a transition system $\TS \in \tsclass$,
\begin{itemize}
	\item \textbf{(Completeness)} If $\TS$ has an inductive invariant $I \in \cnf{p(n)}{}$, the algorithm must return \emph{yes}. %
	\item \textbf{(Soundness)} If $\TS$ is not safe the algorithm must return \emph{no}.
\end{itemize}
Other cases, including the case of safety with an invariant outside $\cnf{p(n)}{}$, are not constrained.
An algorithm deciding safety thus solves also this problem.
All the results of this paper apply both to the standard version above and the promise problem: upper bounds on the standard version trivially imply upper bounds on the promise problem, and in our lower bounds we use transition systems that are either (i) safe \emph{and} have an invariant in $\cnf{p(n)}{}$, or (ii) unsafe.
\end{remark}

\OMIT{
\begin{remark}[Decision and search]
\label{lem:decide-by-search}
\TODO{classic results about decision vs. search, because it's NP-complete?}
To further motivate the formulation of polynomial-length inference as a decision problem, we note that an algorithm for invariant \emph{search} induces an algorithm for invariant \emph{decision}.
Thus a lower bound on the decision problem induces a lower bound on invariant \emph{search}.
This holds also when considering for search only systems that admit an invariant in $\cnf{p(n)}{}$, namely, even when omitting all systems for which the decision is ``no''.
This is because given an algorithm for invariant search for $\progs{}$ %
we can decide $\progs{} \uplus \badtsclass{}$, %
where $\badtsclass{}$ is the set of unsafe transition systems \yotam{make sure it doesn't collide}, as follows: run the search algorithm, and check the resulting invariant. If the result is either error or not actually an inductive invariant, then return ``no'', otherwise ``yes''.
When we prove upper bounds on invariant decision in \Cref{sec:monotone-upper} we use such an algorithm.
\end{remark}
}

\section{Invariant Inference with Queries and the Hoare Query Model}
\label{sec:hoare-query-algorithms}

In this paper we study algorithms for polynomial-length inference through \emph{black-box} models of \emph{inference with queries}.
In this setting, the algorithm accesses the transition relation through (rich) queries, but cannot read the transition relation directly. Our main model is of \emph{Hoare-query algorithms}, which query the validity of a postcondition from a precondition in one step of the system. Hoare-query algorithms faithfully capture a large class of SAT-based invariant inference algorithms, including PDR and related methods.
A black-box model of inference algorithms facilitates an analysis of the \emph{information} of the transition relation the algorithm acquires. %
The advantage is that such an information-based analysis sidesteps open computational complexity questions, and therefore results in unconditional lower bounds on the computational complexity of SAT-based algorithms captured by the model. Such an information-based analysis is also necessary for questions involving unbounded computational power and restricted information, in the context of computationally-unrestricted bounded-reachability generalization (see \Cref{sec:gen-intro}).

In this section we define the basic notions of queries and query-based inference algorithms. We also define the primary query model we study in the paper: the Hoare-query model.
In the subsequent sections we introduce and study additional query models: the interpolation-query model (\Cref{sec:interpolation-query-algorithms}), and the inductiveness-query model (\Cref{sec:inductiveness-query-algorithms}).

\subsubsection*{\bf Inference with queries} %
We model queries of the transition relation in the following way:
A \emph{query oracle} $Q$ is an oracle that accepts a transition relation $\tr$, as well as additional inputs, and returns some output.
The additional inputs and the output, together also called the \emph{interface} of the oracle, depend on the query oracle under consideration. A \emph{family} of query oracles $\qoraclefam$ is a set of query oracles with the same interface.
We consider several different query oracles, representing different ways of obtaining information about the transition relation.

\begin{definition}[Inference algorithm in the query model] %
\label{def:inference-query-algorithm}
An \emph{inference algorithm from queries}, denoted $\bbalg{Q}{}(\Init,\Bad,\bb{\tr})$, is defined w.r.t.\ a query oracle $Q$ and is given:
\begin{itemize}
	\item access to the query oracle $Q$,
	\item the set of initial states ($\Init$) and bad states ($\Bad$);
	\item \begin{changebar}the transition relation $\tr$, encapsulated---hence the notation $\bb{\tr}$---meaning that the algorithm cannot access $\tr$ (not even read it) except for extracting its vocabulary; $\tr$ can only be passed as an argument to the query oracle $Q$.\end{changebar}
\end{itemize}
$\bbalg{Q}{}(\Init,\Bad,\bb{\tr})$ solves the problem of polynomial-length invariant inference for $(\Init,\tr,\Bad)$.
\end{definition}

\subsubsection*{\bf The Hoare-query model}

Our main object of study in this paper is the Hoare-query model of invariant inference algorithms. It captures SAT-based invariant inference algorithms querying the behavior of a single step of the transition relation at a time. %
\begin{definition}[Hoare-Query Model] \label{def:hoareQuery}
For a transition relation $\tr$ and input formulas $\alpha, \beta\in \Formulas{\Sigma}$, the
\emph{Hoare-query  oracle}, $\horacle{\tr}{\alpha,\beta}$,  returns $\false$ if $(\alpha \land \tr
\land \lnot \beta') \in \sat$; otherwise it returns $\true$.

An algorithm in the \emph{Hoare-query model}, also called a \emph{Hoare-query algorithm}, is an inference from queries algorithm
expecting the Hoare query oracle. %
\end{definition}
Intuitively, a Hoare-query algorithm gains access to the transition relation, $\tr$, exclusively by
repeatedly choosing $\alpha, \beta\in \Formulas{\Sigma}$,  and calling
$\horacle{\tr}{\alpha,\beta}$.

If we are using a $\sat$ solver to compute the Hoare-query, $\horacle{\tr}{\alpha,\beta}$, then when
the answer is $\false$, the $\sat$ solver will also produce a counterexample pair of states
$\sigma,\sigma'$ such that $\sigma,\sigma' \models \alpha \land \tr \land \neg \beta'$.

We observe that using binary search, a Hoare-query algorithm can do the same:

\begin{lemma}
\label{lem:hoare-cti}
Whenever $\horacle{\tr}{\alpha,\beta} = \false$, a Hoare-query algorithm can find $\sigma,\sigma'$ such that $\sigma,\sigma' \models \alpha \land \tr \land \neg \beta'$ using $n=\card{\voc}$ Hoare queries.
\end{lemma}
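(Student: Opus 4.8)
The plan is to recognize the Hoare oracle as a disguised satisfiability check and then run the classical search-to-decision self-reduction for SAT. Since $\horacle{\tr}{\alpha,\beta}=\false$ exactly says that $\alpha \land \tr \land \neg\beta'$ is satisfiable, I would recover a concrete witness $\sigma,\sigma'$ by committing to one variable at a time, spending a single query per variable to test whether a satisfying assignment survives the commitment. I would run this in two phases: first fix the pre-state, then fix the post-state.

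\emph{Phase 1 (pre-state).} I maintain a precondition $\alpha_i$ with the invariant that $\alpha_i \land \tr \land \neg\beta'$ is satisfiable, starting from $\alpha_0 = \alpha$. For each $p_i \in \voc$ I issue $\horacle{\tr}{\alpha_{i-1}\land p_i,\,\beta}$. If it returns $\false$ a witness can keep $p_i$ true, so I set $\alpha_i = \alpha_{i-1}\land p_i$; otherwise a witness exists under $\alpha_{i-1}$ but none with $p_i$ true, so it must set $p_i$ false and I take $\alpha_i = \alpha_{i-1}\land\neg p_i$. Either branch preserves the invariant, and after $\card{\voc}$ queries $\alpha_{\card\voc}$ is a full cube $\cube{\sigma}$ fixing a pre-state $\sigma$, with $\cube{\sigma}\land\tr\land\neg\beta'$ still satisfiable.

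\emph{Phase 2 (post-state).} Here is the one non-obvious point, which I expect to be the main obstacle: the oracle constrains the post-state only through the negated postcondition, so to force the post-state onto a partial assignment I must \emph{weaken} $\beta$ by adding disjuncts rather than strengthen it. I maintain $\beta_j$ with the invariant that $\cube{\sigma}\land\tr\land\neg\beta_j'$ is satisfiable and $\beta\implies\beta_j$, starting from $\beta_0=\beta$. To decide $p_i'$ I query $\horacle{\tr}{\cube{\sigma},\,\beta_{i-1}\lor\neg p_i}$, which tests satisfiability of $\cube{\sigma}\land\tr\land\neg\beta_{i-1}'\land p_i'$. If the answer is $\false$ I set $p_i'$ true and $\beta_i=\beta_{i-1}\lor\neg p_i$; otherwise, since a witness exists before the commitment, it must set $p_i'$ false and I take $\beta_i=\beta_{i-1}\lor p_i$. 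I would then check that each move keeps $\cube{\sigma}\land\tr\land\neg\beta_i'$ satisfiable and keeps $\beta\implies\beta_i$. After another $\card{\voc}$ queries, $\neg\beta_{\card\voc}'$ pins down every post-variable, yielding a unique post-state $\sigma'$; since $\beta\implies\beta_{\card\voc}$ we get $\sigma'\models\neg\beta$ and $\sigma\models\alpha$, so $\sigma,\sigma'\models\alpha\land\tr\land\neg\beta'$ as required.

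The query count is one per propositional variable of the pre- and post-state, i.e.\ $\card{\voc}$ per state, matching the claimed linear bound. Besides the disjunctive encoding in Phase 2, the only routine obligations are the two invariant-preservation checks (each using that a witness already exists before a commitment, so it must survive at least one polarity of the branching variable).
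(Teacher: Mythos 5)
Your proof is correct and takes essentially the same route as the paper's: a one-variable-at-a-time search-to-decision reduction over $\voc \uplus \voc'$, spending a single Hoare query per variable and preserving satisfiability of the residual formula at each step. In fact, your Phase~2 is a more careful rendering of the paper's terse instruction to ``conjoin'' post-state variables to $\beta$ (which, read literally, yields a query that always returns $\false$ and gives no information); your disjunctive weakening $\beta_i = \beta_{i-1} \lor \neg p_i$, i.e.\ conjoining $p_i'$ to $\neg\beta'$, is the correct formalization of what the paper intends, and your total of $2\card{\voc}$ queries matches the paper's own proof, which also exceeds the stated bound of $\card{\voc}$ by the same asymptotically harmless factor of two.
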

\begin{proof}
For each  $\prop_i \in \voc \uplus \voc'$, if $\prop_i \in \voc$, conjoin it to $\alpha$, else to
$\beta$, and check whether $\horacle{\tr}{\alpha_i,\beta_i}$ is still $\false$. If it is, continue
to $\prop_{i+1}$; otherwise flip $\prop_i$ and continue to $\prop_{i+1}$.
\end{proof}

\begin{changebar}
\paragraph{Example: PDR as a Hoare-query algorithm}
The Hoare-query model captures the prominent PDR algorithm, facilitating its theoretical analysis.
As discussed in \Cref{sec:prelim-pdr}, PDR accesses the transition relation via checks of unreachability in one step and counterexamples to those checks. These operations are captured in the Hoare query model by checking %
$\horacle{\tr}{F,\alpha}$ or $\horacle{\tr}{F \land \alpha,\alpha}$ (for the algorithm's choice of $F, \alpha \in \Formulas{\voc}$), %
and obtaining a counterexample using a polynomial number of Hoare queries, if one exists (\Cref{lem:hoare-cti}).
Furthemore, the Hoare-query model is general enough to express a broad range of PDR variants %
that differ in the way they use such checks %
but still access the transition relation only through such queries.

\medskip
The Hoare-query model is not specific to PDR. It also captures algorithms in the ICE learning model~\cite{ICELearning}, as we discuss in \Cref{sec:inductiveness-query-algorithms}, and as result can model algorithms captured by the ICE model (see~\Cref{sec:prelim-ice}). %
In \Cref{sec:ind-vs-hoare-all} we show that the Hoare-query model is in fact strictly more powerful than the ICE model.
\end{changebar}

\begin{remark}
Previous black-box models for invariant inference~\cite{ICELearning} encapsulated access also to $\Init,\Bad$. In our model we encapsulate only access to $\tr$, since (1)~it is technically simpler, (2)~a simple transformation can make $\Init,\Bad$ uniform across all programs, embedding the differences in the transition relation; indeed, our constructions of classes of transition systems in this paper are such that $\Init,\Bad$ are the same in all transition systems that share a vocabulary, hence $\Init,\Bad$ may be inferred from the vocabulary.
(Unrestricted access to $\Init,\Bad$ is stronger, thus lower bounds on our models apply also to models restricting access.)
\end{remark}

\subsubsection*{\bf Complexity.}
Focusing on \emph{information}, we do not impose computational restrictions on the algorithms, and only count the number of queries the algorithm performs to reveal information of the transition relation. In particular, when establishing lower bounds on the query complexity, we even consider algorithms that may compute non-computable functions. However, whenever we construct algorithms demonstrating upper bounds on query complexity, these algorithms in fact have polynomial \emph{time} complexity, and we note this when relevant.

Given a query oracle and an inference algorithm that uses it, we analyze the number of queries the algorithm performs as a function of $n= \card{\voc}$, in a \emph{worst-case} model w.r.t.\ to possible transition systems over $\voc$ in the class of interest.

The definition is slightly more complicated by considering, as we do later in the paper, query-models in which more than one oracle exists, i.e., an algorithm may use any oracle from a \emph{family} of query oracles. In this case, we analyze the query complexity of an algorithm in a \emph{worst-case} model w.r.t.\ the possible query oracles in the family as well.

Formally, the query complexity is defined as follows:
\begin{definition}[Query Complexity]
\label{def:query-complexity}
For a class of transitions systems $\tsclass$, the \emph{query complexity} of (a possibly computationally unrestricted) $\bbalg{}{}$ w.r.t.\ a query oracle family $\qoraclefam$ is defined as
\begin{equation}
\querycomplexity_{\bbalg{}{}}^{\qoraclefam}(n) = \sup_{Q \in \qoraclefam}{\sup_{\substack{(\Init,\tr,\Bad) \in \tsclass,\\\card{\voc} = n}}{\mbox{\#query}(\bbalg{Q}{}(\Init,\Bad,\bb{\tr}))}}
\end{equation}
where $\mbox{\#query}(\bbalg{Q}{}(\Init,\Bad,\bb{\tr}))$ is the number of times the algorithm accesses $Q$ given this oracle and the input. (These numbers might be infinite.)
\end{definition}

The query complexity in the Hoare-query model is %
$\querycomplexity_{\bbalg{}{}}^{\set{\horaclesym}}(n)$.

\begin{remark}
\label{rem:short-tr}
In our definition, query complexity is a function of the size of the vocabulary $n=\card{\voc}$, but not of the size of the representation of the transition relation $\card{\tr}$. This reflects the fact that an algorithm in the black-box model does not access $\tr$ directly. %
In \refappendix{sec:poly-tr} we discuss the complexity w.r.t.\ $\card{\tr}$ as well. The drawback of such a complexity measure is that learning $\tr$ itself becomes feasible, undermining the black-box model. Efficiently learning $\tr$ is possible when using unlimited computational power and exponentially-long queries. However, whether the same holds when using unlimited computational power with only polynomially-long queries is related to open problems in classical concept learning.
\end{remark}

\section{The Information Complexity of Hoare-Query Algorithms}
\label{sec:hoare-lower-bound}
In this section we prove an information-based lower bound on Hoare-query invariant inference algorithms, and also extend the results to algorithms using \emph{interpolation}, another SAT-based operation. %
We then apply these results to study the role of information in generalization as part of inference algorithms.

\subsection{Information Lower Bound for Hoare-Query Inference}
\label{sec:information-bounds}

We show that a Hoare-query inference algorithm requires $\explowerbound$ Hoare-queries in the worst case to decide whether a CNF invariant of length polynomial in $n$  exists.
(Recall that $n$ is a shorthand for $\card{\voc}$, the size of the vocabulary of the input transition system.) %
This result applies even when allowing the choice of queries to be inefficient, and when allowing the queries to use exponentially-long formulas.
It provides a lower bound on the time complexity of actual algorithms, such as PDR, that are captured by the model.
Formally:

\begin{theorem}
\label{thm:hoare-nopoly}
Every Hoare-query inference algorithm $\bbalg{\horaclesym}{}$ deciding polynomial-length inference for
the class of all propositional transition systems %
has query complexity of $\explowerbound$.
\end{theorem}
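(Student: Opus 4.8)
The plan is to prove the bound information-theoretically, via an adversary (indistinguishability) argument that ignores computational cost entirely, as the model permits. Concretely, I would fix an arbitrary (possibly non-computable) Hoare-query algorithm $\bbalg{\horaclesym}{}$ and a query budget $q = 2^{o(n)}$, and exhibit two transition systems $\TS_{\text{yes}} = (\Init,\tr_{\text{yes}},\Bad)$ and $\TS_{\text{no}} = (\Init,\tr_{\text{no}},\Bad)$ over the same vocabulary, sharing $\Init$ and $\Bad$, such that (i) $\TS_{\text{yes}}$ is safe and has an inductive invariant in $\cnf{p(n)}{}$, (ii) $\TS_{\text{no}}$ is unsafe, and yet (iii) the oracle can answer each of the algorithm's (at most $q$) queries identically whether the hidden relation is $\tr_{\text{yes}}$ or $\tr_{\text{no}}$. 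Since the algorithm's output depends only on the transcript of query answers, it must return the same verdict on both, contradicting correctness (by the promise-problem formulation it must say \emph{yes} on the former and \emph{no} on the latter). This forces query complexity $\explowerbound$.

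The creative core is the hard family. I would use an exponentially large family $\{\tr_s\}_{s \in \mathcal{S}}$ with $|\mathcal{S}| = \explowerbound$, indexed by a ``secret'' $s$, all sharing $\Init,\Bad$. In the safe instances the reachable set is exactly the models of a secret monotone CNF $\phi_s$ of $O(n)$ clauses (a downward-closed set, which supplies the required short invariant, just as the running example's invariant $\bigwedge_i(c_i\to\neg x_i)$ is monotone), while an unsafe instance is obtained by a tiny perturbation routing a reachable state to $\Bad$. Crucially --- and in contrast to the maximal systems of \Cref{sec:rice-vs-ice}, on which a single $\true$-answered Hoare query already pins down the reachable set via the equivalence $\horacle{\tr}{\alpha,\beta}=\true \iff [\alpha]\subseteq R\subseteq[\beta]$ (writing $[\alpha]$ for the set of states satisfying $\alpha$) that makes PDR-1 efficient --- here I would make $\tr_s$ far from maximal, e.g.\ nearly functional with a large diameter, so that reachability is spread over exponentially many steps and a single one-step Hoare query is weak.

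The adversary maintains the set $\mathcal{S}_t \subseteq \mathcal{S}$ of secrets consistent with the first $t$ answers, and answers each query $\horacle{\tr}{\alpha,\beta}$ so as to keep $\mathcal{S}_t$ large. Answering $\true$ commits that no edge crosses $[\alpha]\times[\neg\beta]$; when forced to answer $\false$ it must also return a witness edge (cf.\ \Cref{lem:hoare-cti}), committing that edge into every surviving relation. The heart of the argument is a combinatorial lemma showing that each query --- regardless of the sizes of $\alpha,\beta$, which may describe exponentially large regions --- is consistent with all but a $2^{-\Omega(n)}$ fraction of $\mathcal{S}_t$: a $\true$ answer can be made correct for $\tr_s$ unless the perturbation of $s$ happens to cross the queried cut, and a $\false$ answer can be witnessed by an edge shared by almost all surviving secrets. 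Hence $|\mathcal{S}_{q}| > 1$ for $q = 2^{o(n)}$, leaving both a safe and an unsafe instance alive, which gives the desired transcript indistinguishability.

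The main obstacle is exactly this information-preservation lemma. Because Hoare queries admit arbitrary, exponentially long preconditions and postconditions, a naive ``single secret edge to a fixed bad state'' family is defeated by one global query; so the family and the secret-to-target assignment must be designed (and the witnesses chosen) so that no cut $[\alpha]\times[\neg\beta]$ --- equivalently, no pair of disjoint state-sets --- can localize or ``sandwich'' the secret for more than a negligible fraction of candidates. Exhibiting such a design, and simultaneously ensuring that the large-diameter structure still guarantees a genuine $\cnf{p(n)}{}$ invariant in the safe case, is where the real work lies; the remaining transcript bookkeeping is routine.
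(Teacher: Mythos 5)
Your high-level strategy---an information-theoretic adversary that maintains a set of transition systems consistent with the transcript, a hard family of large-diameter systems, and final indistinguishability between a safe instance with a short CNF invariant and an unsafe instance---is exactly the shape of the paper's argument (which proves the strengthening \Cref{thm:query-nopoly}). But your proposal is a plan rather than a proof: its entire weight rests on the ``information-preservation lemma'' and on exhibiting a family for which it holds, and you explicitly leave both open as ``where the real work lies.'' That is precisely the hard part, and your sketched family (reachable set equal to the models of a secret monotone CNF, unsafe instances obtained by one perturbed edge into $\Bad$) does not obviously withstand the very attack you name: if safe systems have no edge into $\Bad$ at all, the single query $\horacle{\tr}{\true,\neg\Bad}$ splits the family in one shot---answering $\true$ eliminates every unsafe candidate, while answering $\false$ followed by the $O(n)$-query witness extraction of \Cref{lem:hoare-cti} pins down the perturbation and leaves only unsafe candidates; either way the surviving set becomes \emph{pure} and the algorithm answers correctly. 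Relatedly, your counting argument (each query kills at most a $2^{-\Omega(n)}$ fraction) only keeps the consistent set \emph{large}; the theorem needs it to stay \emph{mixed}, i.e., to still contain both a yes-instance and a no-instance after all queries, and nothing in your plan addresses that.

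The paper closes exactly these gaps with a concrete construction. The class $\hardclass$ (\Cref{sec:hard-class}) encodes $\qbf{2}$: the system $\fts{\phi}$ enumerates all valuations $(\vec{y},\vec{x})$ lexicographically and is safe iff $\exists\vec{y}.\forall\vec{x}.\,\phi$ is true, in which case it has an invariant in $\moncnf{2n+1}{}$ (\Cref{lem:validity-vs-inference}); in particular, every system in the class, safe or not, has edges into $\Bad$ from (possibly unreachable) states, which defuses the global query above. The key structural property is that each transition of $\fts{\phi}$ depends on the value of the hidden $\phi$ at a single valuation, so a counterexample edge witnessing a $\false$ answer constrains $\phi$ at one point only, no matter how long $\alpha$ and $\beta$ are. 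Queries answered $\true$ are handled by the ``maximally satisfiable'' device: the adversary iteratively re-chooses $\phi$ so that each next query answers $\false$ whenever any candidate consistent with the previous answers permits it, so that $\true$ answers carry no information beyond the accumulated point constraints. Finally, with $m < 2^n$ point constraints the paper explicitly exhibits two consistent formulas, $\psi_1$ (a conjunction of disequalities, a true $\qbf{2}$ instance, hence safe with a monotone invariant) and $\psi_2$ (a disjunction of equalities, false, hence unsafe), giving the required mixed surviving set. Without ingredients playing these three roles, your outline does not yet constitute a proof of the $\explowerbound$ bound.
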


The rest of this section proves a strengthening of this theorem, for a specific class of transition systems (which we construct next), for any class of invariants that includes monotone CNF,
and for computationally-unrestricted algorithms:

\begin{theorem}
\label{thm:query-nopoly}
\label{thm:query-nopoly-mon}
Every Hoare-query inference algorithm $\bbalg{\horaclesym}{}$, even computationally-unrestricted, deciding invariant inference for the class of transition systems $\hardclass$ (\Cref{sec:hard-class}) and for any class of target invariants $\mathcal{L}$ s.t.\ $\moncnf{n}{} \subseteq \mathcal{L}$\footnote{%
Here we extend the definition of polynomial-length invariant inference to $\mathcal{L}$ instead of $\cnf{p(n)}{}$.}
has query complexity of $\explowerbound$.
\end{theorem}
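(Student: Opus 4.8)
The plan is to prove \Cref{thm:query-nopoly} by an information-theoretic adversary argument and then read off \Cref{thm:hoare-nopoly} as a corollary: since $\hardclass$ is a subclass of all propositional transition systems and $\moncnf{n}{}\subseteq\cnf{p(n)}{}$ (because $p(n)=\Omega(n)$), any algorithm solving inference for the larger class with target $\cnf{p(n)}{}$ in particular solves it on $\hardclass$ with a target class containing $\moncnf{n}{}$, so the lower bound transfers. The heart is therefore to exhibit, inside $\hardclass$, a family $\mathcal{F}$ of $\explowerbound$ transition systems over a common $\voc$ (with common $\Init,\Bad$) that no Hoare-query algorithm can sort into yes-instances (safe, with a monotone-CNF invariant of length $\le\card{\voc}$) and no-instances (unsafe) using subexponentially many queries.

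First I would pin down exactly how much a single query reveals. For maximal systems $\maxts{I}$, whenever $\alpha$ and $\lnot\beta$ are satisfiable one checks directly that $\horacle{\maxtr{I}}{\alpha,\beta}=\true$ iff $\alpha\implies I$ and $I\implies\beta$: from a $\lnot I$-state the maximal relation reaches everywhere, so any $\alpha$-state outside $I$, or any $\lnot\beta$-state inside $I$, yields a counterexample, and otherwise the triple holds. Thus a \true answer is an \emph{interval constraint} ``$I$ lies between $\alpha$ and $\beta$'', while a \false answer, together with the counterexample extracted by \Cref{lem:hoare-cti}, pins down only a single transition, i.e.\ the membership status of one state in $I$ --- and crucially the adversary chooses which of the two possible witnesses (an $\alpha$-side state outside $I$, or a $\beta$-side state inside $I$) to expose. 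This is the quantitative statement that each query leaks essentially one adversarially-controlled bit about the hidden $I$.

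With this in hand I would run the standard adversary: maintain the set $C\subseteq\mathcal{F}$ of systems consistent with all answers so far, and answer each query by whichever option --- \true, or \false with a specific counterexample --- preserves the largest subset of $C$ still containing both a yes- and a no-instance. As long as $C$ stays mixed the algorithm cannot commit, so everything reduces to a per-query \emph{low-yield} lemma: the adversary's best answer eliminates at most $\mathrm{poly}(n)$ members of $C$. This is where the design of $\hardclass$ does the work, and where I expect the main difficulty. The naive choice --- hiding a single minterm, e.g.\ taking the $I$'s to be principal ideals of middle-layer states --- \emph{fails}: a single query with $\beta$ selecting half of the candidate minterms bisects $C$, so interval/membership queries would binary-search the secret in $O(n)$ steps. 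To defeat this the family must be ``spread'': no postcondition $\beta$ may be consistent with the \true side of a constant fraction of the survivors while excluding the rest, and no single adversarial counterexample may be inconsistent with more than polynomially many of them, so that each query is a genuine probe of at most $\mathrm{poly}(n)$ candidates. Constructing such a spread family of short-invariant (yes) systems together with Hoare-indistinguishable unsafe (no) perturbations is the crux of the whole proof.

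Given the low-yield lemma the count is immediate: starting from $\card{C}=\explowerbound$ and losing $\mathrm{poly}(n)$ per query keeps $C$ mixed for $\explowerbound$ queries, forcing any correct algorithm --- even computationally unrestricted, and even one issuing exponentially long formulas --- to use $\explowerbound$ queries. Two loose ends I would verify are that $\moncnf{n}{}$ already suffices as the target (the yes-instances carry invariants that are monotone CNF of length $\le\card{\voc}$, so any $\mathcal{L}\supseteq\moncnf{n}{}$ inherits the bound) and that the no-instances are genuinely unsafe, so the result survives even for the promise-problem variant. I expect the spread-family construction and its low-yield lemma to carry essentially all the technical weight; the adversary bookkeeping and the reduction to \Cref{thm:hoare-nopoly} are routine.
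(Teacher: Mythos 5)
Your high-level shape---exhibit two instances in $\hardclass$, one safe with a $\moncnf{n}{}$ invariant and one unsafe, that the algorithm cannot tell apart after subexponentially many queries---matches the paper, but there are two genuine gaps. First, your per-query accounting is derived from maximal systems $(\Init,\maxtr{I},\Bad)$, and that setting provably cannot support this lower bound: the paper's \Cref{cor:monmax-upper} shows that PDR-1 decides inference for maximal monotone systems in a \emph{polynomial} number of Hoare queries, precisely because a $\true$ answer there is an interval constraint and membership queries become simulable. So ``hiding an invariant $I$ behind a maximal relation'' is doomed no matter how ``spread'' the family is; the hardness of $\hardclass$ comes from a different source entirely, namely that the relation $\psigmatr{\phi}$ depends on an unknown formula $\phi$ in such a way that any \emph{single transition} (hence any counterexample to a Hoare query) reveals the value of $\phi$ on \emph{at most one} valuation of $(\vec{y},\vec{x})$, while the system's long lexicographic-counting diameter keeps global reachability hidden.

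Second, the ``low-yield lemma''/``spread family'', which you correctly identify as carrying all the technical weight, is exactly what you leave unconstructed---and the paper's substitute for it is an idea absent from your plan. The paper does not (and cannot) bound how many candidates each answer eliminates: a single point constraint $\psi(v)=\phi(v)$ already kills half of the consistent formulas, so your invariant ``$C$ shrinks by at most $\mathrm{poly}(n)$ per query'' is the wrong bookkeeping. Instead the paper (a) steers the execution onto a \emph{maximally satisfiable} path, iteratively re-choosing $\phi$ so that every query that can consistently be answered $\false$ is answered $\false$; this renders the $\true$ answers informationless, in the precise sense that any $\psi$ agreeing with $\phi$ on the $\false$ answers automatically agrees on \emph{all} answers; (b) notes that each $\false$ answer constrains $\phi$ at a single valuation $v_i$, so $m<2^n$ queries constrain fewer than $2^n$ points of the truth table; and (c) from these point constraints builds $\psi_1$ (true everywhere except points forced false), for which $\exists\vec{y}\,\forall\vec{x}.\,\psi_1$ is true, and $\psi_2$ (false everywhere except points forced true), for which it is false; by \Cref{lem:validity-vs-inference} the first yields a system with a $\moncnf{2n+1}{}$ invariant and the second an unsafe one, yet the algorithm behaves identically on both. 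Without the maximal-satisfiability device (or some replacement that neutralizes $\true$ answers), your adversary cannot close the argument, and the ``crux'' construction you defer is, in effect, the entire proof.
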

\begin{changebar}
(That classes containing $\moncnf{n}{}$ are already hard becomes important in \Cref{sec:rice-vs-ice}.)
\end{changebar}

\subsubsection{A Hard Class of Transition Systems}
\label{sec:hard-class}
In this section we construct a $\hardclass$, a hard class of transition systems, on which we prove hardness results.

\paragraph{The $\qbf{2}$ problem} The construction of $\hardclass$ follows the $\psigma{2}$-complete problem of $\qbf{2}$  from classical computational complexity theory. In this problem, the input is a quantified Boolean formula $\exists \vec{y}. \, \forall \vec{x}. \ \phi(\vec{x},\vec{y})$ where $\phi$ is a Boolean (quantifier-free) formula, and the problem of $\qbf{2}$ is to decide whether the quantified formula is \emph{true}, namely, there exists a Boolean assignment to $\vec{y}$ s.t.\ $\phi(\vec{x},\vec{y})$ is true for every Boolean assignment to $\vec{x}$.

\paragraph{The class $\hardclass$.}
For each $k \in \mathbb{N}$, we define $\hardclass^k$. Finally, $\hardclass = \bigcup_{k \in \mathbb{N}} \hardclass^k$.

Let $k \in \mathbb{N}$. %
For each formula $\exists \vec{y}. \, \forall \vec{x}. \, \phi(\vec{y},\vec{x})$, where $\vec{y} = y_1,\ldots,y_k$, $\vec{x} = x_1,\ldots,x_k$ are variables and $\phi$ is a quantifier-free formula over the variables $\vec{x} \cup \vec{y}$, we define a transition system $\fts{\phi} = (\Init_k, \psigmatr{\phi},\Bad_k)$. Intuitively, it iterates through $\vec{y}$ lexicographically, and for each $\vec{y}$ it iterates lexicographically through $\vec{x}$ and checks if all assignments to $\vec{x}$ satisfy $\phi(\vec{y},\vec{x})$. If no such $\vec{y}$ is found, this is an error. More formally,
\begin{enumerate}
	\item $\voc_k = \{y_1,\ldots,y_k,x_1,\ldots,x_k,a,b,e\}$.
	\item $\Init_k = \vec{y} = \vec{0} \land \vec{x} = \vec{0} \land \neg a \land b \land \neg e$.
	\item $\Bad_k = e$.
	\item $\psigmatr{\phi}$: evaluate $\phi(\vec{y},\vec{x})$, and perform the following changes (at a single step):
	If the result is false, set $a$ to true.
	If $\vec{x} = \vec{1}$ and $a$ is still false, set $b$ to false.
	If in the pre-state $\vec{x} = \vec{1}$, increment $\vec{y}$ lexicographically, reset $a$ to false, and set $\vec{x} = 0$; otherwise increment $\vec{x}$ lexicographically.
	If in the pre-state $\vec{y} = \vec{1}$, set $e$ to $b$.
	(Intuitively, $a$ is false as long as no falsifying assignment to $\vec{x}$ has been encountered for the current $\vec{y}$, $b$ is true as long as we have not yet encountered a $\vec{y}$ for which there is no falsifying assignment.)
\end{enumerate}
We denote the resulting class of transition systems $\ftset^k = \{\fts{\phi} \mid \phi = \phi(y_1,\ldots,y_k,x_1,\ldots,x_k)\}$.

The following lemma relates the $\qbf{2}$ problem for $\phi$ to the inference problem of $\fts{\phi}$:
\begin{lemma}
\label{lem:validity-vs-inference}
Let $\fts{\phi} \in \ftset^k$. Then $\fts{\phi}$ is safe iff it has an inductive invariant in $\moncnf{2k+1}{}$ iff the formula $\exists \vec{y}. \, \forall \vec{x}. \, \phi(\vec{y},\vec{x})$ is true.
\end{lemma}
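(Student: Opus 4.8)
\emph{Overview and setup.} I would prove the two ``iff''s as the cycle $(2)\Rightarrow(1)\Rightarrow(3)\Rightarrow(2)$, where $(1)$ is ``$\fts{\phi}$ is safe'', $(2)$ is ``$\fts{\phi}$ has an inductive invariant in $\moncnf{2k+1}{}$'', and $(3)$ is ``$\exists\vec y\,\forall\vec x.\,\phi(\vec y,\vec x)$ holds''. The implication $(2)\Rightarrow(1)$ is immediate, since in propositional logic any inductive invariant witnesses safety. The starting point is an operational reading of $\psigmatr{\phi}$: because $\Init_k$ pins down a single state and the update is deterministic, the reachable states form a single chain enumerating the pairs $(\vec y,\vec x)$ in lexicographic order. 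Along this chain I would verify by induction the bookkeeping meaning stated in the construction: $a$ is $\true$ exactly when some $\vec x'\le\vec x$ has already falsified $\phi(\vec y,\cdot)$ for the current $\vec y$, and $b$ is $\true$ exactly while no fully-processed block $\vec y$ was \emph{good} (satisfied $\forall\vec x.\,\phi(\vec y,\vec x)$); $e$ takes the final value of $b$ once the scan reaches the terminal configuration $\vec y=\vec x=\vec 1$.

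\emph{From safety to validity.} For $(1)\Rightarrow(3)$ I argue the contrapositive. If $(3)$ fails then no $\vec y$ is good, so at each block end $\vec x=\vec 1$ the condition that would reset $b$ (``$a$ still false'') never fires; hence $b$ remains $\true$ along the whole chain, the terminal step sets $e:=b=\true$, and $\Bad_k=e$ becomes reachable, so the system is unsafe.

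\emph{Constructing the monotone invariant.} The implication $(3)\Rightarrow(2)$ is the heart of the argument. Let $\vec y^*$ be the lexicographically least good assignment. I would exhibit
\[
  I_\phi \;=\; \neg e \;\land\; \bigl(b\rightarrow \vec y\le\vec y^*\bigr)\;\land\;\bigl(a\land b\rightarrow \vec y<\vec y^*\bigr),
\]
with $\le,<$ the lexicographic order. The key structural point is that $\vec y\le\vec y^*$ is \emph{downward closed} under the coordinatewise order (coordinatewise $\vec z\le\vec y$ implies lexicographic $\vec z\le\vec y$), hence expressible as a conjunction of purely negative clauses; counting its minimal coordinatewise violations shows it needs $\le k$ clauses, and likewise for $\vec y<\vec y^*$. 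Prepending $\neg b$ (resp.\ $\neg a\lor\neg b$) preserves negativity, so together with the clause $\neg e$ we obtain $I_\phi\in\moncnf{2k+1}{}$. It then remains to check the three conditions: $\Init_k\models I_\phi$ (only $b$ holds initially), $I_\phi\Rightarrow\neg\Bad_k$ (the $\neg e$ clause), and consecution $I_\phi\land\psigmatr{\phi}\Rightarrow I_\phi'$, by a case split on whether the step ends an $\vec x$-block and whether it is terminal.

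\emph{Main obstacle.} The consecution check is the delicate part, since it must hold for \emph{every} state satisfying $I_\phi$, not merely for states on the reachable chain. The subtle case is the terminal step at $\vec y=\vec 1$: if the pre-state has $b$ true, then $b\rightarrow\vec y\le\vec y^*$ forces $\vec y^*=\vec 1$, and goodness of $\vec y^*$ gives $\phi(\vec 1,\vec 1)=\true$; combined with $a\land b\rightarrow\vec y<\vec y^*$ (which rules out $a$ here) this makes $b$ flip to false, so $e:=b=\false$ and $\neg e$ survives. Making this precise---that the clauses of $I_\phi$ together with the semantics of $\psigmatr{\phi}$ (which ``knows'' $\phi$, and in particular that $\vec y^*$ is good) rule out every $e$-setting step and every $(a\land b)$-with-$\vec y\ge\vec y^*$ configuration---along with the exact monotone encoding of the lexicographic comparisons inside the $2k+1$ clause budget, is where I expect the real work to lie.
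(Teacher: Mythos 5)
Your proposal is correct and follows essentially the same route as the paper: the same implication cycle (the paper proves $(3)\Rightarrow(2)$ directly and $(1)\Rightarrow(3)$ via the identical contrapositive argument that when no $\vec{y}$ is good, $b$ is never reset and the terminal step sets $e$ to true), the identical invariant $\neg e \land (b \rightarrow \vec{y} \le \vec{y}^*) \land ((a \land b) \rightarrow \vec{y} < \vec{y}^*)$ with $\vec{y}^*$ the lexicographically least witness, and the same style of consecution case analysis (the paper splits on $\vec{y} < \vec{v}$, $\vec{y} > \vec{v}$, $\vec{y} = \vec{v}$, resolving your "terminal step" concern exactly as you sketch). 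The only cosmetic difference is that the paper obtains the $\moncnf{2k+1}{}$ bound by an explicit recursive monotone-CNF encoding of the lexicographic comparisons rather than your downward-closure/minimal-violations counting.
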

\begin{proof}
There are two cases:
\begin{itemize}
	\item If $\exists \vec{y}. \, \forall \vec{x}. \, \phi(\vec{y},\vec{x})$ is true, let $\vec{v}$ be the first valuation for $\vec{y}$ that realizes the existential quantifiers. Then the following is an inductive invariant for $\fts{\phi}$:
	\begin{equation}
	\label{eq:sigma2-inv}
		I = \neg e
			\land (b \rightarrow \vec{y} \leq \vec{v})
			\land ((b \land a) \rightarrow \vec{y} < \vec{v})
	\end{equation}
	where the lexicographic constraint is expressed by the following recursive definition on $\vec{y}_{[d]} = (y_1,\ldots,y_d),\vec{v}_{[d]} = (v_1,\ldots,v_d)$:
	\begin{equation*}
		\vec{y}_{[d]} < \vec{v}_{[d]} \ \eqdef \
		\begin{cases}
			\neg y_d \lor (\vec{y}_{[d-1]} < \vec{v}_{[d-1]})	& v_d = \true
			\\
			\neg y_d \land (\vec{y}_{[d-1]} < \vec{v}_{[d-1]}) & v_d = \false
		\end{cases}
	\end{equation*}
	and $\vec{y} \leq \vec{v} \overset{\triangle}{=} \vec{y} < (\vec{v} + 1)$ (or $\true$ if $\vec{v} = \vec{1}$).

    $I \in \moncnf{2k+1}{}$: Note that $\vec{y}_{[k]} < \vec{v}_{[k]}$ can be written in CNF with at most $n$ clauses: in the first case a literal is added to each clause, and in the second another clause is added. Thus $I$ can be written in CNF with at most $2k+1$ clauses. Further, the literals of $\vec{y}$ appear only negatively in $\vec{y}_{[k]} < \vec{v}_{[k]}$, and hence also in $I$. The other literals ($\neg e, \neg a, \neg b$) also appear only negatively in $I$. Hence, $I$ is monotone.

	\iflong
	$I$ is indeed an inductive invariant: initiation and safety are straightforward. For consecution, consider a valuation to $\vec{y}$ in a pre-state satisfying the invariant. (We abuse notation and refer to the valuation by $\vec{y}$.)
	There are three cases:
	\begin{itemize}
		\item
			If $\vec{y} < \vec{v}$, then
			\begin{inparaenum}[(i)]
				\item $\neg e$ is retained by a step,
				\item $\vec{y} \leq \vec{v}$ holds after a step,
				\item $\vec{y} < \vec{v}$ still holds unless the transition is from the last evaluation for $\vec{y} = \vec{v} - 1$ to $\vec{v}$, in which case $a$ is turned to $\false$.
			\end{inparaenum}

		\item If $\vec{y} > \vec{v}$, the invariant guarantees that in the pre-state $b$ is false, and thus $e$ remains false after a step. $b$ also remains false and thus the rest of the invariant also holds in the post-state.

		\item If $\vec{y} = \vec{v}$, the invariant guarantees that in the pre-state either $b$ is false or $a$ is false. If $b$ is false the same reasoning of the previous case applies. Otherwise, we have that $a$ is false. By the definition of $\vec{v}$ all valuations for $\vec{x}$ results in $\phi(\vec{v},\vec{x})=\true$, so $a$ remains false after a step, and once we finish iterating through $\vec{x}$ we set $b$ to false immediately.
	\end{itemize}
	The claim follows.
	\else
	The proof that $I$ is indeed inductive appears in the extended version~\cite{extendedVersion}.
	\fi

	\item If $\exists \vec{y}. \, \forall \vec{x}. \, \phi(\vec{y},\vec{x})$ is not true, then $\fts{\phi}$ is not safe (and thus does not have an inductive invariant of any length). %
This is because for every valuation of $\vec{y}$ a violating $\vec{x}$ is found, turning $a$ to true, and $b$ never turns to false, so after iterating through all possible $\vec{y}$'s $e$ will become true.
\end{itemize}
\end{proof}

Before we turn to prove \Cref{thm:query-nopoly} and establish a lower bound on the query complexity in the Hoare model, we note that this construction also yields the computational hardness mentioned in \Cref{sec:inference-decision-problem}:
\begin{proof}[Proof of \Cref{thm:sigma2-hardness}]
\label{sec:proof-of-comp-hardness}
The upper bound is straightforward: guess an invariant in $\cnf{p(n)}{}$ %
and check it.
For the lower bound, use the reduction outlined above: given $\phi(y_1,\ldots,y_k,x_1,\ldots,x_k)$, construct $\fts{\phi}$. Note that the vocabulary size, $n$, is $2k+3$, and the invariant, when exists, is of length at most $2k+1 \leq n$.\footnote{For an arbitrary polynomial $p(n) =\Omega(n)$, e.g., $p(n) = c \cdot n$ with $0< c <1$, enlarge $\voc$, e.g., by adding to $\Init$ initialization of fresh variables that are not used elsewhere, to ensure existence of an invariant of length $\leq p(n)$.} The reduction is polynomial as the construction of $\fts{\phi}$ (and $n$) from $\phi$ is polynomial in $k$ and $\card{\phi}$: note that lexicographic incrementation can be performed with a propositional formula of polynomial size. %
\end{proof}
 
\subsubsection{Lower Bound's Proof}

We now turn to prove \Cref{thm:query-nopoly}.
\begin{changebar}
Given an algorithm with polynomial query complexity, the proof constructs two transition system: one that has a polynomial-length invariant and one that does not, and yet all the queries the algorithm performs do not distinguish between them. The construction uses the path the algorithm takes when all Hoare queries return $\false$ %
as much as possible. Intuitively, such responses are less informative and rule out less transition relations, because they merely indicate the existence of a single counterexample to a Hoare triple, opposed to the result $\true$ which indicates that \emph{all} transitions satisfy a property.
\end{changebar}
\OMIT{
	As a warmup (with interest of its own), we prove the following information-based lower bound on solving $\psigma{2}$ instances through $\psigma{1}$ queries:
	\begin{theorem}
	\label{thm:s2-s1-lower-bound}
	Deciding $\qbf{2}$ of $\exists \vec{y}. \, \forall \vec{x}. \, \phi(\vec{y},\vec{x})$ where $\card{\vec{x}} = \card{\vec{y}} = n$ requires $\explowerbound$ queries in a black-box model performing queries of the form $\textit{SAT}_{?[\phi]}(\theta)$ (the algorithm chooses $\theta$, and it can be exponentially long), each answered by whether the formula $\theta(\vec{y},\vec{x}) \land \phi(\vec{y},\vec{x})$ is satisfiable or not.
	\end{theorem}
	\begin{proof}
	Consider a set of queries $\textit{SAT}_{?[\phi]}(\theta_1),\ldots,\textit{SAT}_{?[\phi]}(\theta_m)$, $m < 2^n$.
	We construct two instances, $\Theta_1 = \exists \vec{y}. \, \forall \vec{x}. \, \phi_1(\vec{y},\vec{x})$ and $\Theta_2 = \exists \vec{y}. \, \forall \vec{x}. \, \phi_2(\vec{y},\vec{x})$, such that:
	\begin{enumerate}[(i)]
		\item $\Theta_1$ is true.
		\item $\Theta_2$ is false.
		\item $\textit{SAT}_{?[\phi_1]}(\theta_i) = \textit{SAT}_{?[\phi_2]}(\theta_i)$ for all $i=1\ldots m$.
	\end{enumerate}
	From which the claim follows.

	First, we can ignore all queries in which $\theta_i$ is unsatisfiable --- \emph{every} $\phi$ would yield $\textit{SAT}_{?[\phi]}(\theta_i) = \false$ in that case.
	Assume therefore that each $\theta_i$ has a satisfying assignment $v_i$.
	Take $\phi_1(\vec{y},\vec{x}) = \true$, and $\phi_2(\vec{y},\vec{x})$ that is true iff $\vec{y},\vec{x} = v_i$ for some $i$. %
	By this construction, $v_i \models \theta_i \land \phi_2$ for every $i$, implying $\textit{SAT}_{?[\phi_1]}(\theta_i) = \true$ for every $i$. The same of course also holds for $\phi_1$.

	It remains to show that $\Theta_2$ is false.
	Since $m < 2^n$ there is an assignment $\tilde{v}_x$ to $\vec{x}$ %
	that does not take part in any of the assignments $v_1,\ldots,v_m$; thus for every assignment of $\vec{y}$, the assignment $\tilde{v}_x$ to $\vec{x}$ yields the value $\false$ in $\theta_2$. Hence $\Theta_2$ is false.
	\end{proof}

	We now return to the setting of inference.
	The proof of \Cref{thm:query-nopoly} uses similar ideas to show the hardness of invariant inference for transition systems encoding $\qbf{2}$ instances. There are two complications beyond \Cref{thm:s2-s1-lower-bound}:
	\begin{inparaenum}[(i)]
			\item The queries the algorithm performs may be adaptive; and
			\item Hoare-queries on the transition relations $\horacle{\psigmatr{\phi}}{\alpha,\beta}$ are richer than SAT queries on the formulas $\textit{SAT}_{?[\phi]}(\alpha)$: the formula $\beta$ can query whether the valuations defined by $\alpha$ made $\phi$ true/false or a combination thereof (by referring to the flags), whereas the SAT query checks simply whether all valuations defined by $\alpha$ produced false for $\phi$.
	\end{inparaenum}
}
\begin{proof}[Proof of \Cref{thm:query-nopoly}]
Let $\A$ be a computationally unbounded Hoare-query algorithm.
We show that the number of Hoare queries performed by $\A$ on transition systems from $\hardclass$ with $\card{\voc}=n$ is at least $2^{\frac{n-1}{2}}$. %
To this end, we show that if $\A$ over $\card{\voc} = 2n+3$ %
performs less than $2^n$ queries, then there exist two formulas $\psi_1,\psi_2$ over $y_1,\ldots,y_n,x_1,\ldots,x_n$ such that %
\begin{itemize}
	\item all the Hoare queries performed by $\A$ on $\psigmatr{\psi_1}$ and $\psigmatr{\psi_2}$ (the transition relations of $\fts{\psi_1}$ and $\fts{\psi_2}$, respectively) return the same result, even though

	\item $\A$ should return different results when run on  $\fts{\psi_1} \in \ftset^n$ and $\fts{\psi_2}\in \ftset^n$, since $\fts{\psi_1}$ has an invariant in $\moncnf{2n+1}{}$ and $\fts{\psi_2}$ does not have an invariant (of any length). %
\end{itemize}
We begin with some notation. Running on input $\fts{\phi}$, we
abbreviate $\horacle{\psigmatr{\phi}}{\cdot, \cdot}$ by $\horacle{\phi}{\cdot, \cdot}$.
Denote the queries $\A$ performs and their results by $\horacle{\phi}{\alpha_1,\beta_1}=b_1,\ldots,\horacle{\phi}{\alpha_m,\beta_m}=b_m$.
We call an index $i$ \emph{sat} if $b_i = \false$.
We say that $\psi$ query-agrees with $\phi$ if $\horacle{\psi}{\alpha_i,\beta_i} = b_i$ for all $i$.
We say that $\psi$ sat-query-agrees with $\phi$ if for every $i$ such that $b_i = \false$ it holds that $\horacle{\psi}{\alpha_i,\beta_i} = \false$. %

We first find a formula $\phi$ over $y_1,\ldots,y_n,x_1,\ldots,x_n$ such that the sequence of queries $\A$ performs when executing on $\fts{\phi}$ is \emph{maximally satisfiable}: %
if $\psi$ sat-query-agrees with $\phi$, then $\psi$ (completely) query-agrees with $\phi$ on the queries, that is,
\begin{equation}
\forall \psi. \ \
	\left(\forall i. \ b_i=\false \Rightarrow \horacle{\psi}{\alpha_i,\beta_i}=b_i\right)
\implies
	\left(\forall i. \ \horacle{\psi}{\alpha_i,\beta_i}=b_i\right)
\end{equation}
	
We construct this sequence iteratively (and define $\phi$ accordingly) by always taking $\phi$ so that the result of the next query is $\false$ as long as this is possible while remaining consistent with the results of the previous queries:
Initially, choose some arbitrary $\phi^0$. At each point $i$, consider the first $i$ queries $\A$ performs on $\phi^i$, $\horacle{\phi^i}{\alpha_1,\beta_1}=b_1,\ldots,\horacle{\phi^i}{\alpha_i,\beta_i}=b_i$. If $\A$ terminates without performing another query, we are done: the desired $\phi$ is $\phi_i$.
Otherwise let $(\alpha_{i+1},\beta_{i+1})$ be the next query. Amongst formulas $\phi^{i+1}$ that query-agree on the first $i$ queries, namely, $\horacle{\phi^{i+1}}{\alpha_j,\beta_j} = b_j$ for all $j \leq i$, choose one such that $\horacle{\phi^{i+1}}{\alpha_{i+1},\beta_{i+1}} = \false$ if possible; if such $\phi^{i+1}$ does not exist take e.g.\ $\phi^{i+1} = \phi^{i}$. The dependency of $\A$ on $\phi^{i}$ is solely through the results of the queries to $\horacle{\psigmatr{\phi}}{\cdot,\cdot}$, so $\A$ performs the same $i$ initial queries when given $\phi^{i+1}$.
The result is a maximally satisfiable sequence, for if a formula $\psi$ differs in query $i+1$ in which the result is $\false$ instead of $\true$ we would have taken such a $\psi$ as $\phi^{i+1}$.

Let $\phi$ be such a formula with a maximally satisfiable sequence of queries $\A$ performs on $\phi$, $\horacle{\phi}{\alpha_1,\beta_1}=b_1,\ldots,\horacle{\phi}{\alpha_m,\beta_m}=b_m$.
For every sat $i$, take a counterexample $\sigma_i,\sigma'_i \models \alpha_i \land \psigmatr{\phi} \land \neg \beta'_i$. The single transition $(\sigma_i,\sigma'_i)$ of $\psigmatr{\phi}$ depends on the value of $\phi$ on at most one assignment to $\vec{x},\vec{y}$, %
so there exists a valuation $v_i: \vec{y} \cup \vec{x} \to \{\true,\false\}$ such that
\begin{equation}
\forall \psi. \ \ \psi(v_i) = \phi(v_i) \implies \sigma_i,\sigma'_i \models \alpha_i \land \psigmatr{\psi} \land \neg \beta'_i
\end{equation}
as well.
It follows that
\begin{equation}
\forall \psi. \ \ \psi(v_i) = \phi(v_i) \implies \horacle{\psi}{\alpha_i,\beta_i} = \horacle{\phi}{\alpha_i,\beta_i} = \false.
\end{equation}

Let $v_{i_1},\ldots,v_{i_t}$ be the valuations derived from the sat queries (concerning indexing, $b_i = \false$ iff $b_i = b_{i_j}$ for some $j$).
We say that a formula $\psi$ \emph{valuation-agrees} with $\phi$ on $v_{i_1},\ldots,v_{i_t}$ if $\psi(v_{i_j}) = \phi(v_{i_j})$ \emph{for all} $j$'s.
Since the sequence of queries is maximally satisfiable, if $\psi$ valuation-agrees with $\phi$ on $v_{i_1},\ldots,v_{i_t}$ then $\psi$ query-agrees with $\phi$, namely, $\horacle{\psi}{\alpha_i,\beta_i} = \horacle{\phi}{\alpha_i,\beta_i}$ for \emph{all} $i=1,\ldots,m$.
As the dependency of $\A$ on $\phi$ is solely through the results $b_1,\ldots,b_m$, it follows that $\A$ performs the same queries on $\psi$ as it does on $\phi$ and returns the same answer.

It remains to argue that if $m < 2^n$ then there exist two formulas $\psi_1,\psi_2$ that valuation-agree with $\phi$ on $v_{i_1},\ldots,v_{i_t}$ but differ in the correct result $\A$ should return: $\exists \vec{y}. \, \forall \vec{x}. \, \psi_1(\vec{y},\vec{x})$ is true, and so $\fts{\psi_1}$ has an invariant in $\moncnf{2n+1}{}$ (\Cref{lem:validity-vs-inference}), whereas $\exists \vec{y}. \, \forall \vec{x}. \, \psi_2(\vec{y},\vec{x})$ is not, and so $\fts{\psi_2}$ does not have an invariant of any length or form (\Cref{lem:validity-vs-inference}). %
This is possible because the number of constraints imposed by valuation-agreeing %
with $\phi$ on $v_{i_1},\ldots,v_{i_t}$ is less than the number of possible valuations of $\vec{x}$ for every valuation of $\vec{y}$ and vice versa:
\begin{equation}
	\psi_1(\vec{y},\vec{x}) =
		\bigwedge_{\substack{i=1..t \\ \theta(v_{i_j})=\false}}{
			(\vec{y},\vec{x}) \neq v_{i_j}
		}
\end{equation}
is true on all valuations except for some of $v_{i_1},\ldots,v_{i_t}$, and since $t \leq m < 2^n$ there exists some $\vec{y}$ such that for all $\vec{x}$, $(\vec{y},\vec{x})$ is not one of these valuations (recall that $\card{y} = n$ bits).
Dually,
\begin{equation}
	\psi_2(\vec{y},\vec{x}) =
		\bigvee_{\substack{i=1..t \\ \theta(v_{i_j})=\true}}{
			(\vec{y},\vec{x}) = v_{i_j}
		}
\end{equation}
is false on all valuations except for some of $v_{i_1},\ldots,v_{i_t}$, and since $t \leq m < 2^n$ for every $\vec{y}$ there exists $\vec{x}$ such that $(\vec{y},\vec{x})$ is not one of these valuations (recall that $\card{y} = n$ bits). This concludes the proof.
\end{proof}

\subsection{Extension to Interpolation-Based Algorithms}
\label{sec:interpolation-query-algorithms}
We now consider inference algorithms based on \emph{interpolation}, another operation supported by SAT solvers.
Interpolation has been introduced to invariant inference by~\citet{DBLP:conf/cav/McMillan03}, and since extended in many works (see~\Cref{sec:prelim-interpolation}).

Interpolation algorithms infer invariants from facts obtained with Bounded Model Checking (BMC), which we formalize as follows:
\begin{definition}[Bounded Reachabilitiy Check]
\label{def:extended-hoare-oracle}
The $k$-\emph{bounded reachability check} returns
	\begin{equation}
		\exthoracle{k}{\tr}{\alpha,\beta} \; \eqdef \; \alpha(\voc^0) \land \tr(\voc^0,\voc^1) \land \ldots \land \tr(\voc^{k-1},\voc^k) \implies \beta(\voc^k)
	\end{equation}
for $\alpha,\beta \in \Formulas{\Sigma}$,
where $\voc^0,\ldots,\voc^k$ are $k+1$ distinct copies of the vocabulary.
\end{definition}

\begin{definition}[Interpolation-Query Model] \label{def:interpolationQuery}
An \emph{interpolation-query oracle} is a query oracle $Q$ such that for every $\tr$, $\alpha,\beta \in \Formulas{\Sigma}$, and $k_1,k_2 \in \Nat$,
\begin{itemize}
	\item $\exthoraclet{Q}{k_1,k_2}{\tr}{\alpha,\beta} = \bot$ if $\exthoracle{k_1+k_2}{\tr}{\alpha,\beta} = \false$, and
	\item $\exthoraclet{Q}{k_1,k_2}{\tr}{\alpha,\beta} = \rho$ for $\rho \in \Formulas{\voc}$ such that $\exthoracle{k_1}{\tr}{\alpha,\rho}=\true$ and $\exthoracle{k_2}{\tr}{\rho,\beta}=\true$ otherwise.
\end{itemize}
We define $\itporaclesym$ to be the \emph{family} of all interpolation-query oracles.

An algorithm in the \emph{interpolation-query model}, also called an \emph{interpolation-query algorithm}, is an inference from queries algorithm
expecting any interpolation query oracle, where $k_1,k_2$ %
are bounded by a \emph{polynomial} in $n$ in all queries.
The query complexity in this model is %
$\querycomplexity_{\bbalg{}{}}^{\itporaclesym}(n)$.
\end{definition}
Interpolation-query oracles form a \emph{family} of oracles since different oracles can choose different $\rho$ for every $\tr,\alpha,\beta,k_1,k_2$.
Note that $\rho$ may be exponentially long. %

\OMIT{
	It starts with extending Hoare checks to Bounded Model Checking (BMC), which we formalize using extended Hoare queries:
	\begin{definition}[Extended Hoare-Query algorithm]
	An algorithm in the \emph{extended Hoare-query model} is an inference algorithm from extended Hoare queries, where the bound $k$ is bounded by a \emph{polynomial} in $n$.
	\sharon{so regarding my previous question, seems like you mean that there is an oracle for each $k$?}
	\end{definition}	
}

\subsubsection{Lower Bound on Interpolation-Query Algorithms}
We show an exponential lower bound on query complexity for interpolation-query algorithms.
To this end we prove the following adaptation of \Cref{thm:query-nopoly}:
\begin{theorem}
\label{thm:interpolation-nopoly}
Every interpolation-query inference algorithm, even computationally-unrestricted, deciding polynomial-length inference for the class of transition systems $\hardclass$ (\Cref{sec:hard-class}) %
has query complexity of $\explowerbound$. %
\end{theorem}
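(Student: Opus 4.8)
The plan is to mirror the proof of \Cref{thm:query-nopoly}, reusing the hard class $\hardclass$ (\Cref{sec:hard-class}) and \Cref{lem:validity-vs-inference}, and to adapt the adversary argument to the two new features of interpolation queries: a negative ($\bot$) answer now witnesses a bounded \emph{path} rather than a single transition, and a positive answer additionally returns an interpolant $\rho$. As before, I would run the (computationally unrestricted) algorithm against a greedily-built reference formula $\phi$ over $y_1,\ldots,y_n,x_1,\ldots,x_n$, at each step answering the query $\exthoraclet{\cdot}{k_1,k_2}{\psigmatr{\phi}}{\alpha,\beta}$ with $\bot$ whenever this is consistent with the values of $\phi$ fixed so far, and answering ``valid'' only when $\bot$ is impossible, so that the positive answer is \emph{forced}: every formula consistent with the answers so far also makes the query valid. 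For each $\bot$ answer I would fix one counterexample path $s_0,\ldots,s_{k_1+k_2}$ and record the $\le k_1+k_2$ valuations of $(\vec y,\vec x)$ at its pre-states; since $\psigmatr{\phi}$ is deterministic and consults $\phi$ only to update the flag $a$, each step depends on $\phi$ at exactly one valuation, so the whole path pins down $\phi$ on at most $k_1+k_2$ points. Because the interpolants returned at valid queries will be \emph{universal} (see below), ``consistent with prior answers'' reduces to agreeing with $\phi$ on these recorded valuations, keeping the greedy step and its maximality lemma identical in form to the original.

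The counting then proceeds as in \Cref{thm:query-nopoly}, only with a polynomial blow-up: after $m$ queries at most $m\cdot(k_1+k_2)$ valuations are constrained, and since $k_1,k_2$ are polynomial in $n$ this is at most $m\cdot\mathrm{poly}(n)$. If $m<2^{n}/\mathrm{poly}(n)$, then fewer than $2^n$ valuations are fixed, so exactly as in the original proof I can exhibit $\psi_1,\psi_2$ agreeing with $\phi$ on all constrained valuations with $\exists\vec y.\,\forall\vec x.\,\psi_1$ true but $\exists\vec y.\,\forall\vec x.\,\psi_2$ false; by \Cref{lem:validity-vs-inference}, $\fts{\psi_1}$ has an invariant in $\moncnf{2n+1}{}$ while $\fts{\psi_2}$ is unsafe, so a correct algorithm must answer differently on them. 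Hence $m=2^{\Omega(n)}=\explowerbound$.

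The crux — and the step I expect to be the main obstacle — is showing that the interpolants leak no extra information, i.e.\ that on a forced-valid query one can return a single \emph{universal} interpolant $\rho$ that is a valid $(k_1,k_2)$-interpolant simultaneously for \emph{every} transition relation $\psigmatr{\psi}$ with $\psi$ consistent. I would take $\rho$ to be the set of states $s'$ such that $\tr_\psi^{k_2}(s')\models\beta$ for every consistent $\psi$, where $\tr_\psi^{k}$ denotes the (well-defined, since deterministic) $k$-step successor. The $k_2$-half of the interpolant condition, $\exthoracle{k_2}{\psigmatr{\psi}}{\rho,\beta}=\true$, is then immediate; the $k_1$-half, $\exthoracle{k_1}{\psigmatr{\psi}}{\alpha,\rho}=\true$, is where determinism and the counter structure are essential. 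For $s\models\alpha$ I must show $\tr_\psi^{k_1}(s)\in\rho$, i.e.\ $\tr_{\psi'}^{k_2}(\tr_\psi^{k_1}(s))\models\beta$ for every consistent $\psi'$. Here I would stitch $\psi$ and $\psi'$ into one hybrid formula that uses $\psi$'s values on the counter values visited during the first $k_1$ steps and $\psi'$'s values during the next $k_2$ steps: because a path of length $k_1+k_2=\mathrm{poly}(n)\ll 2^{2n}$ visits pairwise distinct counter values, these two windows are disjoint, the hybrid is a well-defined consistent formula, and its validity on the full $(k_1+k_2)$-path (guaranteed as the query is forced) yields exactly $\tr_{\psi'}^{k_2}(\tr_\psi^{k_1}(s))\models\beta$. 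Defining the interpolation oracle to return this $\rho$ is legitimate (a member of the family $\itporaclesym$, since $\rho$ is a valid interpolant for each $\tr$ to which it is applied), and it makes the entire answer sequence — negative answers, positive answers, and interpolants alike — identical on $\phi$, $\psi_1$, and $\psi_2$, closing the indistinguishability argument.
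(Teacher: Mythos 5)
Your proposal is correct and follows essentially the same route as the paper's proof: a greedily-built, maximally-satisfiable reference formula $\phi$, a count of at most $m\cdot\mathrm{poly}(n)<2^n$ valuations pinned down by counterexample traces, the same $\psi_1,\psi_2$ endgame via \Cref{lem:validity-vs-inference}, and a single ``universal'' interpolant---justified by stitching different consistent formulas along a path whose $(\vec{y},\vec{x})$ valuations are pairwise distinct---so that positive answers leak no information beyond validity. The only (inessential) difference is presentational: the paper obtains the universal interpolant abstractly, as any interpolant of the union transition relation $\hat{\tr}=\bigvee_{\theta\in S}\psigmatr{\theta}$ (whose bounded-unreachability it proves by the same stitching argument), whereas you exhibit it concretely as the weakest $k_2$-step precondition of $\beta$ over all consistent formulas; the two constructions coincide.
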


We remark that the lower bound on the interpolation-query model does not follow directly from the result for the Hoare-query model: an interpolant for $\exthoracle{k_1+k_2}{\tr}{\alpha,\beta}=\true$ depends on all traces of length $k_1+k_2$ starting from states satisfying $\alpha$, which may be an exponential number, so it cannot be computed simply by performing a polynomial number of Hoare queries to find these traces and computing an interpolant based on them. In principle, then, an interpolant can convey information beyond a polynomial number of Hoare queries.
Our proof argument is therefore more subtle: %
we show that there exists a choice of an interpolant that is not more informative than the existence of some interpolant (i.e., only reveals information on $\exthoracle{k_1+k_2}{\cdot}{\cdot,\cdot}$), in the special case of systems in $\hardclass$, in the maximally satisfiable branch of an algorithm's execution as used in the proof of \Cref{thm:query-nopoly}.
\iflong
\begin{proof}
Let $\A$ be a computationally unbounded algorithm using bounded reachability queries with bounds $k_1,k_2 < r(n)$ for some polynomial $r(n)$ (here $n=(\card{\voc}-1)/2$, as in the proof of \Cref{thm:query-nopoly}), with query complexity $m < \frac{2^{n}}{r(n)}$.

To prove the theorem it is convenient to first consider the algorithmic model which performs bounded-reachability queries of polynomial depth but not interpolation queries, intuitively performing BMC but without obtaining interpolants from the SAT solver.
Formally we consider \Cref{def:extended-hoare-oracle} as a query-oracle and first prove the lower bound for algorithms using this query oracle.
The proof follows the argument from the proof of \Cref{thm:query-nopoly}, relying on the fact that the BMC bounds $k_1,k_2 < r(n)$

\begin{sloppypar}
Assume that $\A$ performs only bounded reachability queries (without obtaining interpolants).
In what follows, we abbreviate $\exthoracle{k}{\psigmatr{\phi}}{\cdot,\cdot}$ by $\exthoracle{k}{\phi}{\cdot,\cdot}$.
We start the same as the proof of \Cref{thm:query-nopoly} to obtain a formula $\phi$ such that the sequence of bounded reachability queries $\A$ performs when executing on $\fts{\phi}$ is maximally satisfiable. In this proof this reads,
$
\forall \psi. \ \
	\left(\forall i. \ b_i=\false \Rightarrow \exthoracle{k_i}{\phi}{\alpha_i,\beta_i}=b_i\right)
\implies
	\left(\forall i. \ \exthoracle{k_i}{\phi}{\alpha_i,\beta_i}=b_i\right).
$
\end{sloppypar}

The main difference is that every sat query produces a counterexample \emph{trace} rather than a counterexample \emph{transition} as in \Cref{thm:query-nopoly}.
For every sat query $i$, we take a counterexample trace $\sigma^i_1,\ldots,\sigma^i_{k_i}$, namely, $\sigma^i_1 \models \alpha, \sigma^i_{k_i} \models \neg \beta$, and $\sigma^i_j, \sigma^i_{j+1} \models \psigmatr{\phi}$.
Every such transition $\sigma^i_j, \sigma^i_{j+1} \models \psigmatr{\phi}$ depends on at most one valuation of $\phi$.
Thus there exists valuations $v^i_1,\ldots,v^i_{k_i}$ so that every $\psi$ that valuation-agrees with $\phi$ on these valuations also allows the aforementioned counterexample trace, and thus $\exthoracle{k_i}{\psi}{\alpha_i,\beta_i}=\false$ as well.
As in the proof of \Cref{thm:query-nopoly}, it follows that if $\psi$ valuation-agrees with $\phi$ on all valuations $v^1_1,\ldots,v^1_{k_1},\ldots,v^m_1,\ldots,v^m_{k_m}$, then all queries on $\psi$ give that same result as those on $\phi$.
Since $k_1,\ldots,k_m < r(n)$, the number of these valuations is less that $r(n) \cdot m < 2^n$. %
The rest of the proof is exactly as in \Cref{thm:query-nopoly}, constructing two formulas $\psi_1,\psi_2$ valuation-agreeing with $\phi$ on all valuations, but one is true (in the $\qbf{2}$ sense) and the other is not.

We now turn to interpolation queries. Assume without loss of generality that every interpolation query is preceded by a bounded reachability query with the same bound, and that if the result is $\false$ the algorithm skips the interpolation query.

Consider the algorithm's execution on $\phi$ constructed above. We show that there exists an interpolation-query oracle that returns the same interpolants %
on the queries performed by the algorithm for both $\psi_1,\psi_2$ from above, and thus the algorithm's execution still does not distinguish between them.

Consider a point when the algorithms seeks an interpolant based on $\exthoracle{k_1+k_2}{\phi}{\alpha,\beta} = \true$. %
Let $S$ be the set of all formulas consistent with $\phi$ on all the valuations $v_i$ from above.
In particular, $\exthoracle{k_1+k_2}{\theta}{\alpha,\beta} = \true$ for all $\theta \in S$.
We construct a single interpolant valid for all $\theta \in S$, that is, a formula $\rho$ s.t.\ for every $\theta \in S$, $\exthoracle{k_1}{\theta}{\alpha,\rho}=\true$ and $\exthoracle{k_2}{\theta}{\rho,\beta}=\true$.
In particular, $\psi_1,\psi_2 \in S$, so this gives the desired interpolant for them.

Take $\hat{\tr} = \bigvee_{\theta \in S}{\psigmatr{\theta}}$.  %
We argue that $\exthoracle{k_1+k_2}{\hat{\tr}}{\alpha,\beta} = \true$.
For otherwise, there exists a trace $(\sigma_0,\ldots,\sigma_{k_1+k_2})$ %
of $\hat{\tr}$ such with $\sigma_0 \models \alpha$ and $\sigma_{k_1+k_2} \not\models \beta$. By the definition of $\hat{\tr}$, each transition $(\sigma_j,\sigma_{j+1})$ originates from $\psigmatr{\theta_j}$ for some $\theta_j \in S$. As before, this transition depends on the truth value of $\theta_j$ at one valuation $\hat{v}_j$ at the most. Furthermore, $\hat{v}_1,\ldots,\hat{v}_{k_1+k_2-1}$ are successive valuations, because all the transition systems in the class increment the valuation in the same way. Thus $\hat{v}_{j_1} \neq \hat{v}_{j_2}$.
Take a formula $\hat{\theta} \in S$ that agrees with $\theta_j$ on $\hat{v}_j$ for all $j=1,\ldots,k_1+k_2-1$; one exists because
\begin{enumerate}
	\item $\theta_j(\hat{v}_j)$ %
cannot contradict the valuations $\phi(v_i)$, because $\theta_i \in S$, and
	\item $\theta_{j_1}(\hat{v}_{j_1})$ does not contradict $\theta_{j_2}(\hat{v}_{j_2})$ for $\hat{v}_{j_1} \neq \hat{v}_{j_2}$ for $j_1 \neq j_2$.
\end{enumerate}
Thus $(\sigma_0,\ldots,\sigma_{k_1+k_2})$ is also a valid trace of $\hat{\theta} \in S$, which is a contradiction to $\exthoracle{k_1+k_2}{\hat{\theta}}{\alpha,\beta} = \true$.

Thus there exists some $\rho$ an interpolant for $\hat{\tr},\alpha,\beta,k_1,k_2$.
We choose the interpolation query oracle so that
\begin{equation*}
	\exthoraclet{Q}{k_1,k_2}{\theta}{\alpha,\beta} = \rho
\end{equation*}
for all $\theta \in S$.
This is a valid choice of interpolant: $\exthoracle{k_1}{\theta}{\alpha,\rho}=\true$ and $\exthoracle{k_2}{\theta}{\rho,\beta}=\true$ because $\exthoracle{k_1}{\hat{\tr}}{\alpha,\rho}=\true$ and $\exthoracle{k_2}{\hat{\tr}}{\rho,\beta}=\true$ and $\hat{\tr}$ includes all the transitions of $\psigmatr{\theta}$.

The claim follows.
\end{proof}
\else
The full proof appears in the extended version~\cite{extendedVersion}.
\fi   %
\subsection{Impossibility of Generalization from Partial Information}
\label{sec:gen-intro}

Algorithms such as PDR use generalization schemes to generalize from specific states to clauses (see \Cref{sec:generalization-partial-info-motivation} and \Cref{sec:prelim-pdr}).
It is folklore that ``good'' generalization is the key to successful invariant inference.
In this section, we apply the results of \Cref{sec:information-bounds} to shed light on the question of generalization. %
Technically, this is a discussion of the results in \Cref{sec:information-bounds}.

Clearly, if the generalization procedure has \emph{full information}, that is, has unrestricted access to the input---including the transition relation---then unrestricted computational power makes the problem of generalization trivial (as is every other problem!). %
For example, ``efficient'' inference can be achieved by a backward-reachability algorithm (see \Cref{sec:overview-gen-dilemma}) that blocks counterexamples through a generalization that uses clauses from a target invariant it can compute. %
This setting of full-information, computationally-unrestricted generalization was used by \citet{pldi/PadonMPSS16} in an interactive invariant inference scenario.
\label{sec:ideal-gen}

Our analysis in \Cref{sec:information-bounds} implies that the situation is drastically different when generalization possesses \emph{partial information}: the algorithm does not know the transition relation exactly, and only knows the results of a polynomial number of Hoare queries. By \Cref{thm:query-nopoly}, no choice of generalization made on the basis on this information can in general achieve inference in a polynomial number of steps.
This impossibility holds even when generalization uses unrestricted computational power, and thus it is a problem of \emph{information}. To further illustrate the idea of partial information, we note that the problem remains hard even when generalization %
is equipped with information beyond the results of a polynomial number of Hoare queries, information of the reachability of the transition system from $\Init$ and backwards from $\Bad$ in a \emph{polynomial} number of steps\footnote{
	This can be shown by noting that in $\hardclass$ (used to establish the exponential lower bound) such polynomial-reachability information can be obtained from a polynomial number of Hoare queries, reducing this scenario to the original setting.};
in contrast, information of the states reachable in \emph{any} number of steps constitutes full information and the problem is again trivial with unrestricted computational power.

Finally, the same challenge of partial information is present in algorithms basing generalization on a polynomial number of interpolation queries, as follows from \Cref{thm:interpolation-nopoly}.
\section{The Power of  Hoare-Queries}
\label{sec:rice-vs-ice}

Hoare queries are rich in the sense that the algorithm can choose a precondition $\alpha$ and postcondition $\beta$ and check $\horacle{\tr}{\alpha,\beta}$, where $\alpha$ may be different from $\beta$.
As such, algorithms in the Hoare-query model can utilize more flexible queries beyond querying for whether a candidate is inductive. %
In practice, this %
richer form of queries facilitates an incremental %
construction of invariants in %
complex syntactic forms. %
For example, PDR~\cite{ic3,pdr} incrementally learns clauses in different frames via relative inductiveness checks, and interpolation learns at each iteration a term of the invariant from an interpolant~\cite{DBLP:conf/cav/McMillan03} (see \Cref{sec:prelim-pdr}).
In this section we analyze this important aspect of the Hoare-query model and show that it can be strictly stronger than inference based solely on presenting whole candidate inductive invariants. We formalize the latter approach by the model of \emph{inductiveness-query algorithms}, closely related to ICE learning~\cite{ICELearning}, and construct a class of transition systems for which a simple Hoare-query algorithm can infer invariants in polynomial time, but every inductiveness-query algorithm requires an exponential number of queries.

\subsection{Inductiveness-Query Algorithms}
\label{sec:inductiveness-query-algorithms}
We define a more restricted model of invariant inference using only inductiveness queries.

\begin{definition}[Inductiveness-Query Model] \label{def:inductivenessQuery}
An \emph{inductiveness-query oracle} is a query oracle $Q$ such that for every $\tr$ and $\alpha \in \Formulas{\Sigma}$ satisfying $\Init \implies \alpha$ and $\alpha \implies \neg \Bad$,
\begin{itemize}
	\item $\indclet{Q}{\tr}{\alpha} = \true$ if $\alpha \land \tr \implies \alpha'$, and
	\item $\indclet{Q}{\tr}{\alpha} = (\sigma, \sigma')$ such that $(\sigma,\sigma') \models \alpha \land \tr \land \neg \alpha'$ otherwise.
\end{itemize}
We define $\indclesym$ to be the \emph{family} of all inductiveness-query oracles.

An algorithm in the \emph{inductiveness-query model}, also called an \emph{inductiveness-query algorithm}, is an inference from queries algorithm
expecting any inductiveness query oracle.
The query complexity in this model is %
$\querycomplexity_{\bbalg{}{}}^{\indclesym}(n)$.
\end{definition}

Inductiveness-query oracles form a \emph{family} of oracles since different oracles can choose different $(\sigma,\sigma')$ for every $\tr,\alpha$.
Accordingly, the query complexity of inductiveness-query algorithms is measured as a worst-case query complexity over all possible choices of an inductiveness-query oracle in the family.

\paragraph{ICE learning and inductiveness-queries}
The inductiveness-query model is closely related to ICE learning~\cite{ICELearning}, except here the learner is provided with full information on $\Init,\Bad$ instead of positive and negative examples (and the algorithm refrains from querying on candidates that do not include $\Init$ or do not exclude $\Bad)$.
This model captures several interesting algorithms (see~\Cref{sec:prelim-ice}).
Our complexity definition in the inductiveness-query model being the worst-case among all possible oracle responses is in line with the analysis of strong convergence in~\citet{ICELearning}. Hence, lower bounds on the query complexity in the inductiveness query model imply lower bounds for the strong convergence of ICE learning.
\begin{changebar}
We formalize this in the following lemma, using terminology borrowed from \citet{ICELearning} (see \Cref{sec:prelim-ice}):
\begin{lemma}
Let $\progs{}$ be a class of transition systems, and $\mathcal{L}$ a class of candidate invariants.
Assume that deciding the existence of an invariant in $\mathcal{L}$, given an instance from $\progs{}$, requires at least $r$ queries in the inductiveness-query model.
Then every strongly-convergent ICE-learner for $(\progs{},\mathcal{L})$ has round complexity at least $r$.
\end{lemma}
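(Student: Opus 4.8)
The plan is to prove the contrapositive by a simulation argument. Assuming a strongly-convergent ICE-learner for $(\progs{},\mathcal{L})$ with round complexity strictly less than $r$, I would construct an inductiveness-query algorithm that decides the existence of an invariant in $\mathcal{L}$ using strictly fewer than $r$ queries, contradicting the hypothesis. The inductiveness-query algorithm simulates the learner and itself plays the role of the ICE teacher. The crucial observation is that, since the algorithm receives $\Init$ and $\Bad$ as explicit inputs (\Cref{def:inference-query-algorithm}), it can answer every counterexample request of the learner, but it needs to spend an actual oracle query only when the presented hypothesis already respects the boundary conditions $\Init \implies H$ and $H \implies \neg\Bad$, which is precisely the precondition of the inductiveness oracle.

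First I would fix the canonical ICE target associated with the input system $(\Init,\tr,\Bad)$: take $R = \tr$, let $P = \reachStates$ be the reachable states, and let $N$ be the states from which $\Bad$ is reachable. I would then record the key equivalence, using standard facts about inductive invariants: a hypothesis $H \in \mathcal{L}$ is ICE-correct for this target if and only if $H$ is an inductive invariant. Indeed, containment of $\Init$, exclusion of $\Bad$, and inductiveness together give $P \subseteq H$, $N \cap H = \emptyset$, and implication closure under $R=\tr$, and conversely the three ICE constraints force $H$ to be inductive.

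Next comes the simulation. On each hypothesis $H$ produced by the learner, the algorithm responds as follows. If $\Init \implies H$ fails it returns a positive counterexample from $\Init \setminus H$; if $H \implies \neg\Bad$ fails it returns a negative counterexample from $\Bad \cap H$; both are read off directly from the explicit inputs and cost \emph{no} oracle query. Otherwise $H$ satisfies the oracle's precondition, and the algorithm issues exactly one query $\indclet{Q}{\tr}{H}$: if it returns $\true$ the algorithm answers \emph{exists} and halts; if it returns a pair $(\sigma,\sigma')$, this is a valid implication counterexample which is fed back to the learner. I would verify that this emulated teacher is consistent with the canonical target (each returned example genuinely lies in $P$, $N$, or $R$) and that a valid counterexample is always available: a boundary-respecting $H$ that is not ICE-correct cannot be inductive, so the oracle is forced to return an implication pair rather than $\true$.

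Finally I would bound the query count and settle the decision. Since every round consumes at most one oracle query, it suffices to cap the simulation at $r-1$ rounds. If a correct hypothesis exists, strong convergence with round complexity $< r$ guarantees the learner presents one within this budget, at which point the boundary-respecting inductiveness query returns $\true$ and we answer \emph{exists}; if the cap is reached with no $\true$ response, then no invariant exists (otherwise it would have been found within the budget), and we answer \emph{does not exist}. Either way at most $r-1 < r$ queries are used, which is the sought contradiction. The main obstacle is the bookkeeping that makes the emulated teacher simultaneously valid and query-frugal---chiefly the argument that a boundary-respecting non-invariant is necessarily non-inductive, so that the single query per round always yields an implication counterexample consistent with $R=\tr$---together with checking that the worst-case over the oracle family $\indclesym$ matches the worst-case over ICE teachers that underlies strong convergence.
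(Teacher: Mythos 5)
Your proposal is correct and follows essentially the same route as the paper's own proof: the contrapositive via a simulation in which the inductiveness-query algorithm plays the ICE teacher, answering boundary violations for free from the explicit $\Init,\Bad$ inputs, spending one inductiveness query per boundary-respecting candidate, and invoking strong convergence to conclude that reaching the round cap without a $\true$ answer certifies non-existence. Your added details (the canonical target, the equivalence between ICE-correctness and inductiveness, and the worst-case matching between oracles in $\indclesym$ and ICE teachers) are all consistent with, and merely make explicit, what the paper's proof leaves implicit.
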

\end{changebar}
\iflong
\begin{proof}
Given a strongly-convergent ICE-learner $\A$ with round-complexity at most $r$, we construct an inductiveness-query algorithm for deciding $(\progs{},C)$ in at most $r$ queries, in the following way.
Simulate at most $r$ rounds of $\A$, and implement a teacher as follow:
When $\A$ produces a candidate $\theta \in C$,
\begin{itemize}
	\item Check that $\Init \implies \theta$, otherwise produce a positive example, a $\sigma$ s.t.\ $\sigma \models \Init, \sigma \not\models \theta$;
	\item Check that $\theta \implies \neg\Bad$, otherwise produce a negative example, a $\sigma$ s.t.\ $\sigma \models \Bad, \sigma \models \theta$;
	\item Perform an inductiveness query for $\theta$.
		  If $\theta$ is inductive, we are done---return \emph{true}. Otherwise, the inductiveness-query oracle produces a counterexample---pass it to $\A$.
\end{itemize}
If $r$ rounds did not produce an inductive invariant, return \emph{false}.

The teacher we implement always extends the learner's sample with an example that actually is an example in the target description (see \Cref{sec:prelim-ice}), and that rules out the current candidate. Thus, if there exists a correct $h \in C$, $\A$ finds one after at most $r$ iterations, and we return \emph{true}. Otherwise, we terminate after at most $r$ with the last candidate not an inductive invariant, and we return \emph{false}.
\end{proof}
\else
The proof appears in the extended version~\cite{extendedVersion}.
\fi

\paragraph{Inductiveness queries vs. Hoare queries}
Inductiveness queries are specific instances of Hoare queries, where the precondition and postcondition are the same. Since Hoare queries can also find a counterexample in a polynomial number of queries (\Cref{lem:hoare-cti}), inductiveness-query algorithms can be simulated by Hoare-query algorithms.
Our results in the rest of this section establish that the converse is not true.
\OMIT{
Inductiveness-queries can be simulated using Hoare-queries, in the following sense:
	\begin{lemma}
	\label{lem:ind-vs-hoare}
	If $\bbalg{\indclesym}{}(\Init,\Bad,n,\bb{\tr})$ decides length-parametric %
	in a polynomial number of inductiveness-queries, then there exists a Hoare-query algorithm $\bbalgt{\B}{\indclesym}{}(\Init,\Bad,n,\bb{\tr})$ deciding length-parametric inference %
	in a polynomial number of Hoare-queries.
	\end{lemma}
	\begin{proof}
	Upon query $\indcle{\tr}{\alpha}$ by $\A$, query $\horacle{\tr}{\alpha,\alpha}$, and return a counterexample via \Cref{lem:hoare-cti} when the answer is $\false$. The counterexamples $\B$ finds in this way constitute an inductiveness-query oracle, and thus the number of queries before $\A$ finds an invariant must be polynomial.
	\end{proof}	
}

\subsection{Separating Inductiveness-Queries from Hoare-Queries}
\label{sec:ind-vs-hoare-all}
In this section we show that the Hoare query model (\Cref{def:hoareQuery})  is strictly
stronger than the inductiveness query model (\Cref{def:inductivenessQuery}). %
\OMIT{
	We construct a class of transition systems and invariants %
	for which
	\begin{itemize}
		\item invariant inference \emph{can} be solved in polynomial query complexity %
	by a simple algorithm in the Hoare query model, PDR with one frame, which is not only efficient in its query complexity but also in its time complexity, whereas
		\item \emph{no} algorithm operating through inductiveness queries alone can infer an invariant in a polynomial number of queries.
	\end{itemize}
	To the best of our knowledge, this is the first lower bound on the number of queries required for ICE learning~\cite{ICELearning}.
	\sharon{no longer the first. rephrase}
}
We will prove the following main theorem:
\begin{theorem}
\label{thm:hoare-inductive-main-theorem}
There exists a class of systems $\monmax$ %
for which
\begin{itemize}
	\item polynomial-length invariant inference has \emph{polynomial} query complexity
in the \emph{Hoare-query model} %
(in fact, also polynomial time complexity %
modulo the query oracle), but
	\item every algorithm in the \emph{inductiveness-query model} requires an \emph{exponential} number of queries.
\end{itemize}
\end{theorem}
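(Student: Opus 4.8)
The plan is to build $\monmax$ as the class of \emph{maximal systems for monotone invariants}: for each monotone CNF formula $\mu$ with $O(n)$ clauses, let $\maxts{\mu}$ be the system whose transition relation contains \emph{every} transition permitted by $\mu$ being invariant --- every pair of states that both satisfy $\mu$, both violate $\mu$, or that moves from a $\mu$-violating state to a $\mu$-satisfying one. I would first record two structural facts to be used throughout: (i) $\mu$ is an inductive invariant of $\maxts{\mu}$, and in fact the \emph{unique} one (cf.\ the footnote in \Cref{sec:overview-results-inductiveness}), and (ii) $\maxts{\mu}$ has \emph{diameter one} --- every reachable state is reachable from $\Init$ in a single step --- so the reachable set coincides exactly with the models of $\mu$.

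\textbf{Upper bound (Hoare side).} I would show that PDR-1 (\Cref{alg:pdr1-gen} driving \Cref{alg:backward-reach}) infers $\mu$ in $O(n^2)$ Hoare queries and polynomial time. The diameter-one property is the engine: the test in \cref{algln:genpdr1:gen-cond}, namely $\horacle{\tr}{\Init,\neg t}$ together with $\Init\implies\neg t$, asks whether the cube $t$ is reachable in at most one step, which by (ii) is precisely whether $t$ has a model satisfying $\mu$. Thus PDR-1 effectively wields a \emph{membership oracle} for $\mu$. Each blocked counterexample state is a non-model of $\mu$; greedily dropping its literals down to a minimal unreachable cube $t$ yields a clause $\neg t$ that is a \emph{prime consequence} of the monotone formula $\mu$, at a cost of at most $n$ queries per generalization. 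Appealing to the prime-consequence structure of monotone formulas, $O(n)$ such clauses suffice to reconstruct $\mu$, so \Cref{alg:backward-reach} converges after $O(n)$ generalizations, i.e.\ $O(n^2)$ queries; this is essentially Angluin's polynomial learner for monotone formulas transported into the Hoare-query model.

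\textbf{Lower bound (inductiveness side).} Here I would exploit that inductiveness is \emph{monotone in the transition relation}: if $\tr'\subseteq\tr$, then every counterexample to the induction of a candidate $\alpha$ under $\tr'$ is also a counterexample under $\tr$. Since $\maxts{\mu}$ contains, as subsystems, \emph{all} systems whose transitions respect $\mu$ --- in particular the $\qbf{2}$-encoding systems $\fts{\phi}$ of \Cref{sec:hard-class}, which may be safe (with the monotone invariant $\mu$, QBF true) or unsafe (no invariant, QBF false) --- I would run an adversary parallel to the proof of \Cref{thm:query-nopoly}. When the algorithm issues an inductiveness query $\indcle{\tr}{\alpha}$ on $\maxts{\mu}$ with $\alpha\neq\mu$, the oracle returns a counterexample to induction of the maximal system; by monotonicity this same answer is simultaneously valid for a carefully chosen hard subsystem. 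Because inductiveness queries are a special case of Hoare queries, the counting argument of \Cref{thm:query-nopoly} (fewer than $2^n$ queries cannot pin down $\mu$ among exponentially many monotone candidates, nor separate a safe $\mu$-completion from an unsafe one) then forces $\explowerbound$ queries.

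\textbf{Main obstacle.} The delicate point is the lower bound: I must show that the \emph{rich} transition structure of a maximal system conveys no usable information to inductiveness queries beyond what an arbitrary $\mu$-respecting subsystem would. The tension is that ``maximal'' means as many transitions as possible, which naively seems to give the oracle \emph{more} counterexamples to reveal; the resolution is precisely the monotonicity-in-$\tr$ observation, which guarantees that each adversarial counterexample remains globally consistent with both a safe short-invariant completion and an unsafe completion, so that a subexponential sequence of answers cannot distinguish the two. Making this consistency argument fit the maximally-satisfiable-branch scaffolding of \Cref{thm:query-nopoly} --- and confirming that the uniqueness of the invariant leaves the learner with no confirmation short of guessing $\mu$ itself --- is the crux of the proof.
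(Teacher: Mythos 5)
Your upper bound is essentially the paper's own argument and is sound: the diameter-one property of maximal systems turns the one-step Hoare query $\horacle{\tr}{\Init,\neg t}$ into a membership oracle for $\mu$, literal-dropping produces prime consequences, and the prime-consequence property of monotone CNF (\Cref{thm:montone-prime-consequence}) guarantees each blocked clause is a clause of $\mu$, giving $n\cdot m = O(n^2)$ queries exactly as in \Cref{lem:pdr1-complexity} and \Cref{cor:monmax-upper}. The lower bound, however, has a genuine gap, and it starts with your definition of the class. You take $\monmax$ to consist only of the safe maximal systems $(\Init,\maxtr{\mu},\Bad)$, every one of which has a monotone invariant with $O(n)$ clauses. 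Since polynomial-length inference is a \emph{decision} problem (\Cref{def:bounded-inference}), on such a class the answer is always ``yes'' and an algorithm making \emph{zero} queries is correct, so no exponential lower bound can hold. This is why the paper's class is $\monmaxerror = \monmaxnoerror \cup \set{\badts{k} \mid k \in \mathbb{N}}$ with $\badts{k}=(\Init_k,\true,\Bad_k)$ unsafe: the unsafe systems are what make the problem nontrivial. Your own counting step implicitly needs them---you ask the adversary to keep alive both ``a safe $\mu$-completion and an unsafe completion''---but the unsafe completion is not in your class, and it cannot be smuggled in as a subsystem of a maximal system: since $\Init\implies\mu$ and $\maxtr{\mu}=\mu\rightarrow\mu'$, every subsystem of a maximal system stays inside $\mu$ and is safe, so the unsafe $\fts{\phi}$ (QBF false) are \emph{not} subsystems of any $\maxtr{\mu}$, contrary to what your lower-bound paragraph asserts.

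The second problem is that your adversary applies the monotonicity observation in the wrong direction. You state it correctly (a counterexample under $\tr'\subseteq\tr$ remains one under $\tr$), but then have the oracle return ``a counterexample to induction of the maximal system'' and claim it is ``simultaneously valid for a carefully chosen hard subsystem.'' That is the converse implication, and it is false: $\maxtr{\mu}$ contains essentially all transitions among $\mu$-violating states, almost none of which are transitions of the lexicographic-counter system $\fts{\phi}$. The paper's reduction (\Cref{thm:ind-max-as-all}) runs in the opposite direction: given a purported efficient inductiveness-query algorithm for the covering class, run it on an input $\fts{\phi}\in\hardclass$; the oracle's answers are transitions of $\fts{\phi}$ itself, hence by monotonicity they are legal answers for the covering system ($\maxtr{\mu}$ when $\fts{\phi}$ is safe with invariant $\mu$, and $\badts{k}$ when it is unsafe), so the run doubles as a legal run on the covering system and its verdict transfers back to $\fts{\phi}$. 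Hardness then follows from \Cref{thm:query-nopoly-mon} for $\hardclass$ (which already covers inductiveness queries, since they are a special case of Hoare queries), rather than from a fresh adversary argument on the maximal systems themselves. With these two repairs---adding the unsafe systems to the class, and inverting the direction of the simulation---your outline becomes the paper's proof; as written, the lower-bound half does not go through.
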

The upper bound is proved in \Cref{cor:monmax-upper}, and the lower bound in \Cref{cor:monmax-lower}.

\subsubsection{Maximal Transition Systems for Monotone Invariants}
We first define the transition systems with which we will prove
\Cref{thm:hoare-inductive-main-theorem}. %
We start with a definition:

\OMIT{

	\subsubsection{Monotone CNF Invariants}
	\sharon{they are not used as invariants yet. I wouldn't use "invariants" in the title}
	\begin{definition}[Monotone CNF Formula]
	  Let  $\cnf{n}{}$ be the set of propositional formulas in CNF having at most $n$ clauses.
	  Let $\moncnf{n}{}$ be the subset of $\cnf{n}{}$ in which all literals are negative.
	\end{definition}
}

\begin{definition}[Maximal System]
Let $\Init, \Bad \not\equiv \false$ and let $\varphi$ be a formula such that $\Init \implies \varphi$ and $\varphi \implies \lnot \Bad$.
The \emph{maximal transition system} w.r.t.\ $\varphi$ is ($\Init, \maxtr{\varphi}, \Bad$) where
\iflong
\begin{equation*}
		\maxtr{\varphi} = \varphi \rightarrow \varphi'.
\end{equation*}
\else
$\maxtr{\varphi} = \varphi \rightarrow \varphi'$. \linebreak
\fi
A maximal transition system is illustrated as follows:

\begin{center}
  \includegraphics[width=0.2\textwidth]{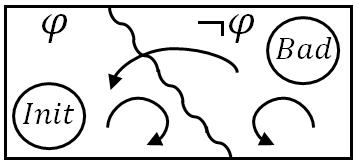}
\end{center}
\end{definition}

Note that $\maxtr{\varphi}$ goes from any state satisfying $\varphi$ to any state satisfying $\varphi$, and
from any state satisfying $\lnot \varphi$ to all states, good or bad.
$\maxtr{\varphi}$ is \emph{maximal} in the sense that it allows all transitions that do not violate the consecution of $\varphi$.
Thus any transition relation $\tilde{\tr}$ for which $\varphi$ satisfies consecution has $\tilde{\tr} \implies \maxtr{\varphi}$.

\begin{lemma}
\label{lem:maxsys-unique-inv}
A maximal transition system %
($\Init, \maxtr{\varphi}, \Bad$) has a unique inductive invariant, $\varphi$.
\end{lemma}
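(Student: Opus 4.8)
The plan is to prove the two conditions hidden in the statement: that $\varphi$ is an inductive invariant at all, and that it is the \emph{only} one. The first is a routine check of the three defining conditions. Initiation ($\Init \implies \varphi$) and safety ($\varphi \implies \neg\Bad$) are precisely the hypotheses placed on $\varphi$ in the definition of a maximal system. For consecution I would just compute $\varphi \land \maxtr{\varphi} \equiv \varphi \land (\neg\varphi \lor \varphi') \equiv \varphi \land \varphi' \implies \varphi'$, so $\varphi$ indeed satisfies $\varphi \land \maxtr{\varphi} \implies \varphi'$. The real content is uniqueness, which is where the maximality of $\maxtr{\varphi}$ does the work: I would take an arbitrary inductive invariant $I$ and prove $I \equiv \varphi$ by establishing the two inclusions separately, each exploiting a different ``half'' of $\maxtr{\varphi} = \varphi \rightarrow \varphi'$.

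For $I \implies \varphi$, I would argue by contradiction: suppose some state $\sigma \models I \land \neg\varphi$. Since $\sigma \models \neg\varphi$, the implication $\maxtr{\varphi} = \varphi \rightarrow \varphi'$ holds vacuously on the pair $(\sigma,\sigma')$ for \emph{every} post-state $\sigma'$, so $\sigma$ has an outgoing transition to all states. Consecution $I \land \maxtr{\varphi} \implies I'$ then forces $\sigma' \models I$ for every $\sigma'$, i.e.\ $I \equiv \true$. But then safety $I \implies \neg\Bad$ gives $\Bad \equiv \false$, contradicting the standing assumption $\Bad \not\equiv \false$. Hence no such $\sigma$ exists, and $I \implies \varphi$. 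For the reverse inclusion $\varphi \implies I$, I would use $\Init \not\equiv \false$ to fix some $\sigma_0 \models \Init$; then $\sigma_0 \models I$ by initiation and $\sigma_0 \models \varphi$ since $\Init \implies \varphi$. Because $\sigma_0 \models \varphi$, the pair $(\sigma_0,\sigma')$ satisfies $\maxtr{\varphi}$ for every $\sigma'$ with $\sigma' \models \varphi$; consecution applied to $\sigma_0$ then yields $\sigma' \models I$ for all such $\sigma'$, i.e.\ $\varphi \implies I$. Combining the two inclusions gives $I \equiv \varphi$, as desired.

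The step I expect to be the crux is the inclusion $I \implies \varphi$, since it rests on the observation that a single $\neg\varphi$-state lying inside $I$ collapses $I$ to $\true$ through its unrestricted outgoing transitions, and then on invoking $\Bad \not\equiv \false$ to finish. The non-triviality assumptions $\Init,\Bad \not\equiv \false$ are exactly what make both directions go through and are the easiest thing to overlook; everything else is a direct unfolding of the definitions of $\maxtr{\varphi}$ and of inductiveness.
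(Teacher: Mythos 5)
Your proposal is correct and follows essentially the same route as the paper's proof: $\varphi \implies I$ because every $\varphi$-state is reachable (in one step) from a nonempty $\Init$, and $I \implies \varphi$ because a $\neg\varphi$-state inside $I$ would, via the unrestricted transitions of $\maxtr{\varphi}$ and consecution, force bad states into $I$, contradicting $\Bad \not\equiv \false$. The only difference is that you also spell out the (trivial) check that $\varphi$ is itself inductive, which the paper leaves implicit.
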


\begin{proof}
Let $I$ be any invariant of ($\Init, \maxtr{\varphi}, \Bad$).
By the definition of $\maxtr{\varphi}$ and the fact that  $\Init \implies \varphi$, the set of
states reachable from $\Init$ is exactly the set of states satisfying $\varphi$.  Thus
$\varphi \implies I$.

Since $\maxtr{\varphi}$ allows transitions from any state satisfying $\lnot \varphi$ to $\Bad$,
$I \implies \varphi$.
\end{proof}

The class of transition systems on which we focus, $\monmaxerror$, is the class of maximal systems for monotone invariants, $\monmaxnoerror$, together with certain unsafe systems.

Formally, %
for each $k \in \mathbb{N}$, we define $\monmaxnoerror^k$ as the class of all transition systems $(\Init_k, \maxtr{\varphi}, \Bad_k)$ for $\Init_k,\Bad_k$ from $\hardclass^k$ (\Cref{sec:hard-class}) and  $\varphi \in \moncnf{2k+1}{}$ such that $\Init_k \implies \varphi$ and $\varphi \implies \lnot \Bad_k$. We then define $\monmaxnoerror = \bigcup_{k \in \mathbb{N}} \monmaxnoerror^k$.
Further, for each $k$ we take the unsafe program $\badts{k} =  (\Init_k, \true, \Bad_k)$, and define the class $\monmaxerror = \monmaxnoerror \cup \{\badts{k} \mid k \in \mathbb{N}\}$.
Below we abbreviate and refer to the class $\monmaxerror$ as ``monotone maximal systems''.

Note that for each $k$, only a single transition system, $\badts{k}$, in $\monmaxerror$  does not have an invariant, and the others have a monotone invariant.
Still, \Cref{cor:monmax-lower} establishes a lower bound on polynomial-length inference for $\monmaxerror$  using inductiveness queries.
This means that using inductiveness queries alone, it is hard to distinguish between monotone invariants (otherwise decision would have been feasible via search).
On the other hand, with Hoare queries, search becomes feasible (establishing the upper bound).

\subsubsection{Upper Bound for Hoare-Query Algorithms for Monotone Maximal Systems}
\label{sec:monotone-upper}
A simple algorithm can find inductive invariants of monotone maximal systems with a polynomial number of queries. It is essentially PDR with a single frame.
The ability to \emph{find} invariants for $\monmax$ (and check invariants) shows that it is possible to \emph{decide} polynomial-length inference for $\monmaxerror$. %

We now present the PDR-1 algorithm (which was also discussed in \Cref{sec:generalization-partial-info-motivation}, and is cast here formally as a Hoare-query algorithm).
This is a backward-reachability algorithm, operating by repeatedly checking for the existence of a counterexample to induction, and obtaining a concrete example by the method discussed in \Cref{lem:hoare-cti}. The invariant is then strengthened by conjoining the candidate invariant with the negation of the formula \textsc{Block} returns. This formula is a subset of the cube of the pre-state.
In PDR-1, \textsc{Block} performs generalization by dropping a literal from the cube whenever the remaining conjunction does not hold for any state reachable in at most one step from $\Init$. The result is the strongest conjunction whose negation does not exclude any state reachable in at most one step. (This might exclude reachable states in general transition systems, but not in monotone maximal systems, since maximality ensures that their diameter is one.)

\begin{algorithm}
\caption{PDR-1 invariant inference in the Hoare-query model}
\label{alg:pdr-1}
\begin{algorithmic}[1]
\begin{footnotesize}
\Procedure{PDR-1}{\Init, \Bad, $\bb{\tr}$}	\qquad \qquad \quad \qquad // Backward-reachability with PDR-1 generalization
	\State $I \gets \neg\Bad$
	\While{$\horacle{\tr}{I,I} = \false$} \qquad \qquad \qquad \quad $\,$ // $I$ not inductive
		\State $\label{algln:cti}$ $(\sigma,\sigma') \gets \Call{model}{\bb{\tr}, I, \neg I'}$ \qquad \qquad \quad // counterexample to induction of $I$. implemented using \Cref{lem:hoare-cti}
		\State $\label{algln:min-ret}$ $d \gets$ \Call{Block-PDR-1}{\Init, $\bb{\tr},\sigma$}
		\State $I \gets I \land \neg d$
	\EndWhile
	\Return $I$
\EndProcedure
\\
\Procedure{Block-PDR-1}{\Init, $\bb{\tr}, \sigma$} \quad \ $\,$ \qquad \quad // Generalization according to one-step reachability
	\State $d \gets \cube{\sigma}$
	\For{$l \in \cube{\sigma}$}
		\State $t \gets d \setminus \set{l}$
		$\label{algln:minimize}$ \If{$\Init \implies \neg t \land \horacle{\tr}{\Init, \neg t}$} \qquad // $\Init \implies t \land \Init \land \tr \implies \neg t'$
			\State $d \gets t$
		\EndIf
	\EndFor
	\Return $d$
\EndProcedure
\end{footnotesize}
\end{algorithmic}
\end{algorithm}

The main property of monotone CNF formulas we exploit in the upper bound is the
ability to reconstruct them from \emph{prime consequences}. %
\begin{definition}[Prime Consequence]
A clause $c$ is a \emph{consequence} of $\varphi$ if $\varphi \implies c$.
A \emph{prime consequence}, $c$,  of $\varphi$ is a minimal consequence of $\varphi$, i.e., no proper subset
of $c$ is a consequence of $\varphi$.
\end{definition}

\begin{theorem}[Folklore]
\label{thm:montone-prime-consequence}
If $\varphi \in \moncnf{n}{}$ and a clause $c$ is a prime consequence of $\varphi$ then $c$ is a clause of $\varphi$.
\end{theorem}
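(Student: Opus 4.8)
The plan is to prove the statement in two steps, using throughout the \emph{downward-closure} property of monotone formulas: if every literal of $\varphi \in \moncnf{n}{}$ is negative, then the set of assignments satisfying $\varphi$ is closed under flipping variables from $\true$ to $\false$. Indeed, a clause $\bigvee_{p\in S}\neg p$ is satisfied exactly when some $p\in S$ is false, and this can only become easier as more variables are turned off. Write $\varphi=\bigwedge_i c_i$ with $c_i=\bigvee_{p\in S_i}\neg p$, and let $c$ be a prime consequence of $\varphi$.

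First I would show that $c$ contains only negative literals. Suppose toward contradiction that some variable $q$ occurs positively in $c$, and let $c'=c\setminus\set{q}$. To contradict primality it suffices to show $\varphi\implies c'$. If not, pick $\sigma\models\varphi$ with $\sigma\not\models c'$; since $\varphi\implies c$ and the only literal of $c$ not in $c'$ is $q$, the assignment $\sigma$ must satisfy $c$ through $q$, i.e.\ $\sigma(q)=\true$. Now flip $q$ to $\false$, obtaining $\sigma'\le\sigma$; by downward closure $\sigma'\models\varphi$, yet every literal of $c$ is false in $\sigma'$ (the literals of $c'$ were already false in $\sigma$ and do not mention $q$, and the literal $q$ is now false), so $\sigma'\not\models c$, contradicting $\varphi\implies c$. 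Hence $\varphi\implies c'$, contradicting that $c$ is prime. Therefore $c=\bigvee_{p\in S}\neg p$ for some set of variables $S$.

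Second, I would identify $c$ with a clause of $\varphi$. Consider the assignment $\sigma_S$ that sets exactly the variables of $S$ to $\true$. Then $c$ is false under $\sigma_S$, so $\varphi\implies c$ forces $\sigma_S\not\models\varphi$; thus some clause $c_i$ is violated, which for an all-negative clause means precisely $S_i\subseteq S$. Since $c_i$ is a clause of $\varphi$ we have $\varphi\implies c_i$, and its literal set $\{\neg p : p\in S_i\}$ is contained in that of $c$. Primality of $c$ then rules out a proper containment, forcing $S_i=S$, i.e.\ $c=c_i$ is a clause of $\varphi$.

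The steps are individually short; the part deserving the most care is the first, where the downward-closure argument must be deployed correctly to ensure that deleting the positive literal $q$ genuinely preserves being a consequence. Edge cases are benign: an all-negative CNF with no empty clause is satisfied by the all-$\false$ assignment, so the empty clause can be a consequence only when it is literally present in $\varphi$.
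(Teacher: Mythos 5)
Your proof is correct and follows essentially the same route as the paper's: first showing the prime consequence must be all-negative via the downward-closure (monotonicity) argument, then using the assignment that sets exactly the variables of $c$ to $\true$ to exhibit a clause $c_i$ of $\varphi$ with $c_i \subseteq c$, and invoking primality to force $c_i = c$. The only differences are cosmetic---you drop positive literals one at a time rather than all at once, and phrase the second step directly rather than contrapositively---and both your proof and the paper's share the same implicit convention that clauses are non-tautological (never containing a variable together with its negation).
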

\Cref{thm:montone-prime-consequence} is the dual of the folklore theorem on prime implicants of monotone DNF formulas as used e.g.\ by \citet{DBLP:journals/cacm/Valiant84}.
\iflong
\neil{this should be a basic, standard fact which you should not reprove in a research paper.}
For completeness, we provide a proof here:
\begin{proof}
Write $\varphi = c_1 \land \ldots \land c_n$.

First we argue that $c$ is monotone. This is because dropping all positive literals from $c$ also results in a consequence $\tilde{c}$ of $\varphi$; otherwise there is a valuation $v \models \varphi, v \not\models \tilde{c}$, but $v \models c$. Consider $\tilde{v}$ which is obtained by turning the literals in $c \setminus \tilde{c}$ to $\false$. Since $\varphi$ is monotone, $\tilde{v} \models \varphi$ still. As these literals appear positively in $c$, $\tilde{v} \not\models c$ the same way that $v \not\models \tilde{c}$. This is a contradiction to the premise.

Second, we argue that if $c'$ is monotone and is a consequence of $\varphi$, then there exists $c_i$ s.t.\ $c_i \subseteq c'$.
Otherwise, let $v$ be the valuation that assigns $\true$ to all variables in $c'$ and $\false$ to the rest. Clearly $v \not\models c'$. However, $v \models c_i$ for every $i$, since by our assumption there exists a literal $\neg x_i \in c_i \setminus c'$ to which $v$ assigns $\true$. Thus $v \models \varphi$. This is a contradiction to $c'$ being a consequence of $\varphi$.

The claim follows.
\end{proof}
\else
For completeness we provide a proof in the extended version~\cite{extendedVersion}.
\fi

We use this property to show that PDR-1 efficiently finds the invariants of the safe maximal monotone systems $\monmaxnoerror$, as implied by the following, slightly more general, lemma:
\begin{lemma}
\label{lem:pdr1-complexity}
Let $\TS = (\Init,\tr,\Bad)$ be a transition system over $\voc$, $n=\card{\voc}$, and $m \in \Nat$ %
such that
\begin{enumerate}[(i)]
	\item $\TS$ is safe,
	\item every reachable state in $\TS$ is reachable in at most one step from $\Init$,
	\item \iflong this set can be described by a $\moncnf{m}{}$ formula, namely, there is $\varphi \in \moncnf{m}{}$ such that $\sigma \models \varphi$ iff $\sigma \models \Init$ or $\exists \sigma_0 \in \Init$ s.t.\ $(\sigma_0,\sigma) \models \tr$.
\else  this set can be described by a formula $\varphi \in \moncnf{m}{}$.
\fi
\end{enumerate}
Then \mbox{\rm PDR-1}($\Init,\Bad,\bb{\tr}$) returns the inductive invariant $\varphi$ for %
$\TS$ with at most $n \cdot m$ Hoare queries. %
\end{lemma}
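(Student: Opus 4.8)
The plan is to track a loop invariant on the candidate $I$ and to show that each strengthening step conjoins a genuine clause of $\varphi$, so that the run essentially learns $\varphi$ clause by clause. First I would record that $\varphi$ is itself inductive: by (iii) $\varphi$ denotes exactly the states reachable in at most one step, by (ii) these are all the reachable states, and by (i) none of them is bad; hence $\Init \implies \varphi$, $\varphi \implies \neg\Bad$, and $\varphi$ is closed under $\tr$ (a $\tr$-successor of a $\varphi$-state is reachable, so by (ii) reachable in one step, hence in $\varphi$). I would then prove by induction on the while iterations that $I$ always satisfies $\Init \implies I$, $I \implies \neg\Bad$, and $\varphi \implies I$. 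Initialization holds since $I = \neg\Bad$ and $\varphi \implies \neg\Bad$; preservation holds because the block step conjoins only $\neg d$, where \textsc{Block-PDR-1}'s acceptance test ($\Init \implies \neg t$ together with $\horacle{\tr}{\Init,\neg t} = \true$) guarantees that no state reachable in at most one step satisfies the returned $d$, i.e.\ $\varphi \implies \neg d$.

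A consequence I would extract next is that every counterexample to induction $(\sigma,\sigma')$ returned by \textsc{model} has an \emph{unreachable} pre-state $\sigma$: since $\sigma' \not\models I$ and $\varphi \implies I$, we get $\sigma' \not\models \varphi$, so $\sigma'$ is unreachable; as $(\sigma,\sigma') \models \tr$, $\sigma$ cannot be reachable either. Hence $\sigma \not\models \varphi$, so no state reachable in at most one step satisfies $\cube{\sigma}$, which is exactly what makes $\cube{\sigma}$ a legal starting point for \textsc{Block-PDR-1}.

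The crux of the proof is to show that the cube $d$ returned by \textsc{Block-PDR-1} satisfies that $\neg d$ is a clause of $\varphi$. On one hand $\varphi \implies \neg d$, so $\neg d$ is a consequence of $\varphi$. On the other hand I must show $\neg d$ is a \emph{prime} consequence, i.e.\ dropping any surviving literal destroys the consequence; this is the main obstacle, because the greedy dropping test is evaluated against a \emph{shrinking} cube, so minimality of the final $d$ is not immediate. The key observation is monotonicity of the test in the cube size: when a surviving literal $l \in d$ was examined, the then-current cube $d_l$ satisfied $d \subseteq d_l$ and $l \in d_l$, and the test failed, so some state reachable in at most one step satisfies the weaker cube $d_l \setminus \set{l}$; since $d \setminus \set{l} \subseteq d_l \setminus \set{l}$ as literal sets, that same state satisfies $d \setminus \set{l}$, whence $\neg(d \setminus \set{l})$ is \emph{not} a consequence of $\varphi$. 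Thus $\neg d$ is a prime consequence, and by \Cref{thm:montone-prime-consequence} (as $\varphi \in \moncnf{m}{}$) it is a clause of $\varphi$.

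Finally I would assemble correctness and the query bound. Each iteration conjoins $\neg d$, a clause of $\varphi$ that the blocked $\sigma$ violates while $\sigma \models I$; hence $\neg d$ is not already implied by $I$, so it is a \emph{new} clause, and since $\varphi$ has at most $m$ clauses the loop runs at most $m$ times, terminating with an inductive $I$ satisfying $\varphi \implies I \implies \neg\Bad$ and $\Init \implies I$. When $\varphi$ is the unique inductive invariant of the system—as for the maximal systems to which this lemma is applied (\Cref{lem:maxsys-unique-inv}), and in particular once all clauses of $\varphi$ have been accumulated—this forces $I \equiv \varphi$. For the count, each of the $\le m$ iterations performs one inductiveness check, one counterexample extraction costing $O(n)$ queries via \Cref{lem:hoare-cti}, and at most one Hoare query per literal of $\cube{\sigma}$ inside \textsc{Block-PDR-1} (so $\le n$ queries), giving $O(n \cdot m)$ Hoare queries overall, in line with the stated bound; moreover every step is polynomial time, so the algorithm is efficient modulo the oracle.
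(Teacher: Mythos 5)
Your proof is correct and follows essentially the same route as the paper's: the loop invariant $\varphi \implies I$, the observation that CTI pre-states must violate $\varphi$, the monotonicity-of-the-shrinking-cube argument showing the returned $\neg d$ is a prime consequence of $\varphi$ (hence a clause of $\varphi$ by \Cref{thm:montone-prime-consequence}), and the newness-of-each-clause count giving at most $m$ iterations and $O(n \cdot m)$ Hoare queries. The one small divergence is the final identification $I \equiv \varphi$: the paper simply asserts $I \implies \varphi$ after at most $m$ iterations, whereas you hedge the early-termination case via uniqueness of the invariant for maximal systems (\Cref{lem:maxsys-unique-inv}); this is, if anything, a more careful reading, since strictly the paper's conclusion is immediate only when the loop runs until every clause of $\varphi$ has been conjoined.
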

\iflong
\begin{proof}
Let $\varphi$ be as in the premise.
We show that $I$ of \Cref{alg:pdr-1} (1) always overapproximates $\maxinv{\varphi}$, (2) is strengthened with a new clause from $\maxinv{\varphi}$ in every iteration.

(1) We claim by induction on the number of iterations in \Cref{alg:pdr-1} that $\maxinv{\varphi} \implies I$. Clearly this holds initially.
In \cref{algln:cti}, $\sigma \not\models \maxinv{\varphi}$, otherwise $\sigma' \models \maxinv{\varphi}$, and $\maxinv{\varphi} \implies I$ gives $\sigma' \models I$ which is a contradiction to $(\sigma,\sigma')$ being a CTI.
Now, for every $\sigma \not\models \maxinv{\varphi}$, its minimization $\neg d$ is a consequence of $\maxinv{\varphi}$, namely, $\maxinv{\varphi} \implies \neg d$, because $\neg d$ holds for all states reachable in at most one step ($\Init \implies \neg d, \, \Init \land \tr \implies \neg d'$) and those are states satisfying $\maxinv{\varphi}$.
Thus also $\maxinv{\varphi} \implies I \land \neg d$.

(2) %
As we have argued earlier, $\neg d$ in \cref{algln:min-ret} is a consequence of $\maxinv{\varphi}$.
We refer to it as the clause $c = \neg d$.
We %
argue that $c$ is a clause of $\varphi$.
By \Cref{thm:montone-prime-consequence} is suffices to show that $c$ is a \emph{prime} consequence of $\varphi$, seeing that $\varphi$ is monotone.
Assume (for the sake of contradiction) that $\tilde{c} \subsetneq c$ is a consequence of $\varphi$. Consider the minimization procedure as it attempts to remove a literal $\tilde{l} \in (\neg c) \setminus (\neg \tilde{c})$ (\cref{algln:minimize}). This literal is not removed, so $\Init \land t$ or $\Init \land \tr \land t'$ is satisfiable at this point. So there is a state reachable in at most one step that satisfies $t$, which means that $\varphi \land t$ is satisfiable. %
From this point onwards, literals are only omitted from $t$, apart from $\tilde{l}$ that is resurrected; thus $\neg c \subseteq t \cup \set{\tilde{l}}$.
These are conjunctions, %
so $\varphi \land (\neg c \setminus \set{\tilde{l}})$ is satisfiable.
But this means that $\varphi \not\implies (c \setminus \set{\neg \tilde{l}})$.
In particular, it follows that $\varphi \not\implies \tilde{c}$ since $\tilde{c} \subseteq c \setminus \set{\neg \tilde{l}}$.
Thus $\tilde{c}$ is not a consequence of $\varphi$, which is a contradiction to the premise. Therefore $c$ must be a prime consequence of $\varphi$.

It remains to argue that $s$ in \cref{algln:min-ret} is not already present in $I$. But this is true because $\sigma \models s$, and $s \not\models I$. %

Overall, after at most $m$ such iterations, $I \implies \varphi$ from (2), and also $\varphi \implies I$ from (1). Thus $I \equiv \varphi$, which is indeed an inductive invariant (it captures exactly the reachable states).
Minimization performs $n=\card{\voc}$ queries, and so the total number of queries is at most $nm$.
\end{proof}
\else
\fi

From this lemma and the uniqueness of the invariants (\Cref{lem:maxsys-unique-inv}) the upper bound for $\monmax$ follows easily\iflong
\else
~(the proofs appear in the extended version~\cite{extendedVersion})
\fi:
\begin{corollary}
\label{cor:monmax-upper}
Polynomial-length invariant inference of $\monmax$ %
can be decided in a polynomial number of Hoare queries.
\end{corollary}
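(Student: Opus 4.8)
The plan is to decide inference for $\monmax$ by running the \textsc{PDR-1} algorithm of \Cref{alg:pdr-1} under a polynomial query budget, and to read the answer off from whether it converges to a genuine invariant within that budget. Recall the structure of the class: for each $k$ (so $n = \card{\voc} = 2k+3$), $\monmax$ contains on the one hand the safe maximal systems $(\Init_k, \maxtr{\varphi}, \Bad_k)$, each with a unique inductive invariant $\varphi \in \moncnf{2k+1}{}$ (\Cref{lem:maxsys-unique-inv}), and on the other hand the single unsafe system $\badts{k} = (\Init_k, \true, \Bad_k)$, which has no inductive invariant of any length. Since $2k+1 \le n$, every invariant that exists already lies in $\cnf{p(n)}{}$, so deciding polynomial-length inference here amounts to deciding safety.

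First I would instantiate \Cref{lem:pdr1-complexity} on the safe subsystems, verifying its three premises. Safety is immediate. For the one-step reachability premise, note that by maximality $\maxtr{\varphi}$ permits a transition from every $\varphi$-state to every $\varphi$-state; since $\Init_k$ is a single state satisfying $\varphi$, the set of states reachable in at most one step from $\Init_k$ is exactly the set of $\varphi$-states, which also coincides with the full reachable set (the diameter is one). Finally this reachable set is described by $\varphi \in \moncnf{2k+1}{}$, giving $m = 2k+1$. Hence \Cref{lem:pdr1-complexity} guarantees that on any safe system of the class, \textsc{PDR-1} returns the invariant $\varphi$ using at most $n \cdot (2k+1) \le n^2$ Hoare queries.

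The decision procedure is then: run \textsc{PDR-1}$(\Init, \Bad, \bb{\tr})$ with a hard budget of $n \cdot (2k+1)$ Hoare queries; if it halts within budget returning a candidate $I$ for which $\Init \implies I$, answer \emph{yes}, and otherwise answer \emph{no}. The check $\Init \implies I$ uses no query to $\tr$ and is computable in polynomial time because $\Init_k$ is a single valuation, so it suffices to evaluate $I$ at that state; the remaining invariant conditions, $I \implies \neg \Bad$ and consecution, are already ensured by \textsc{PDR-1}, since $I$ is initialized to $\neg\Bad$ and only strengthened, and the loop exits only when $\horacle{\tr}{I,I} = \true$. Correctness on the safe systems is then immediate from the previous paragraph.

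The delicate case, and the main obstacle, is the unsafe system $\badts{k}$, where $\tr = \true$ and the \textsc{PDR-1} generalization degenerates: for $\tr = \true$ we have $\horacle{\true}{\Init, \neg t} = \false$ for every satisfiable subcube $t$, so \textsc{Block-PDR-1} can never drop a literal and blocks one concrete state per iteration. I would argue that this causes no harm. Since $\badts{k}$ is genuinely unsafe it has no inductive invariant, so either \textsc{PDR-1} exhausts the budget, in which case we answer \emph{no}, or it exits the while loop with some $I$ satisfying $\horacle{\true}{I,I} = \true$; but for $\tr = \true$ this forces $I \equiv \true$ or $I \equiv \false$, and since $I \implies \neg\Bad = \neg e$ rules out $I \equiv \true$, we get $I \equiv \false$, whence $\Init \not\implies I$ and we again answer \emph{no}. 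Either branch yields the correct \emph{no}. The procedure uses $O(n^2)$ Hoare queries and runs in polynomial time modulo the oracle, establishing the corollary (and the upper-bound half of \Cref{thm:hoare-inductive-main-theorem}).
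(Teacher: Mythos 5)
Your proposal is correct and takes essentially the same route as the paper: run \textsc{PDR-1} under a polynomial query budget, invoke \Cref{lem:pdr1-complexity} (whose premises you rightly verify via maximality, so the diameter is one and the reachable set is exactly the $\varphi$-states) for the safe systems, and return \emph{no} whenever the run is cut off or the produced candidate fails to be an invariant. The only differences are cosmetic: the paper's procedure finishes by checking that the candidate lies in $\cnf{p(n)}{}$ and is inductive, whereas you observe that consecution and safety are automatic from \textsc{PDR-1}'s loop structure and so check only initiation, and you additionally spell out the degenerate behavior on $\badts{k}$ (where $\tr = \true$ forces $I \equiv \false$ at loop exit), which the paper dispatches with the shorter remark that no inductive invariant in $\cnf{p(n)}{}$ can be produced for an unsafe system.
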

\iflong
\begin{proof}
Let $p(\cdot)$ be the polynomial dictating the target length in \Cref{def:bounded-inference}. %
The Hoare-query algorithm runs PDR-1 for $p(n) \cdot n$ queries. %
If PDR-1 does not terminate, return \emph{no}. Otherwise, it produces a candidate invariant $\psi$. If $\psi \not\in \cnf{p(n)}{}$, return \emph{no}. Perform another check for whether $\psi$ is indeed inductive: %
if it is inductive, return \emph{yes}, otherwise \emph{no}.

Correctness: let $(\Init,\tr,\Bad) \in \monmax$.
If there exists an inductive invariant in $\cnf{p(n)}{}$, it is unique, and is
 the formula $\maxinv{\varphi} \in \moncnf{p(n)}{}$ that characterizes the set of states reachable in at most one step. By \Cref{lem:pdr1-complexity} PDR-1 finds $\maxinv{\varphi}$ in a polynomial number of Hoare queries.
Otherwise, $(\Init,\tr,\Bad)$ is not safe, or its unique inductive invariant $\varphi \not\in \cnf{p(n)}{}$. In both cases PDR-1 cannot produce an inductive invariant in $\cnf{p(n)}{}$, and terminates/is prematurely terminated after $p(n) \cdot n$ Hoare queries.
\end{proof}
\else
\fi

\begin{remark}
The condition that the invariant is monotone in \Cref{lem:pdr1-complexity} can be relaxed to \emph{pseudo-monotonicity}: A formula $\varphi$ in CNF is \emph{pseudo-monotone} if no propositional variable appears in $\varphi$ both positively and negatively. Thus it can be made monotone by renaming variables. It still holds for a pseduo-monotone CNF $\varphi$ that a prime consequence is a clause of $\varphi$, and therefore PDR-1 successfully finds an invariant in
a polynomial number of Hoare queries also for the class of maximal systems for \emph{pseudo-}monotone invariants. %
\end{remark}

\subsubsection{Lower Bound for Inductiveness-Query Algorithms for Monotone Maximal Systems}
\label{sec:monotone-lower}

We now prove that every inductiveness-query algorithm for the class of monotone maximal systems requires exponential query complexity.
The main idea of the proof is that inductiveness-query algorithms are oblivious to adding transitions:

\begin{theorem}
\label{thm:ind-max-as-all}
Let $X,Y$ be sets of transition systems, such that $Y$ \emph{covers} the transition relations of $X$, that is,
for every $(\Init,\tr,\Bad) \in X$ %
there exists $(\Init, \hat{\tr}, \Bad) \in Y$ over the same vocabulary s.t.\
\iflong
\begin{enumerate}
\else
\begin{inparaenum}[(1)]
\fi
	\item \label{it:max-to-all-cond1} $\tr \implies \hat{\tr}$, and
	\item \label{it:max-to-all-cond2} if $(\Init,\tr,\Bad)$ has an inductive invariant in $\cnf{p(n)}{}$, then so does $(\Init, \hat{\tr}, \Bad)$.
\iflong
\end{enumerate}
\else
\end{inparaenum}
\fi
Then if $\A$ %
is an inductiveness-query algorithm for $Y$ %
with query complexity $t$, then $\A$ is also an inductiveness-query algorithm for $X$ with query complexity $t$.
\end{theorem}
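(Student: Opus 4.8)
The plan is to show that running the given algorithm $\A$ on any system in $X$ is indistinguishable from running $\A$ on the corresponding covering system in $Y$, except at a single ``early'' event that can only certify a correct answer. I would start with a preliminary observation relating the two decision answers. Fix $(\Init,\tr,\Bad)\in X$ and a cover $(\Init,\hat{\tr},\Bad)\in Y$ over the same vocabulary (so the two instances share $n=\card{\voc}$). These two systems have the \emph{same} decision answer: condition~(\ref{it:max-to-all-cond2}) gives that a poly-length invariant for $\tr$ entails one for $\hat{\tr}$; conversely, if $I\in\cnf{p(n)}{}$ is inductive for $\hat{\tr}$ then, since $\tr\implies\hat{\tr}$ (condition~(\ref{it:max-to-all-cond1})), $I\land\tr\implies I\land\hat{\tr}\implies I'$, so $I$ is inductive for $\tr$ as well. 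Hence ``$\tr$ has a poly-length invariant'' holds if and only if ``$\hat{\tr}$ has one''.

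Next I would fix an arbitrary inductiveness-query oracle $Q$ for $(\Init,\tr,\Bad)$ and run $\A$ on this input. The key monotonicity fact is that $\tr\implies\hat{\tr}$ makes every counterexample to induction for $\tr$ a counterexample for $\hat{\tr}$: if $(\sigma,\sigma')\models\alpha\land\tr\land\neg\alpha'$ then $(\sigma,\sigma')\models\alpha\land\hat{\tr}\land\neg\alpha'$. Consequently, whenever $\indclet{Q}{\tr}{\alpha}$ returns a counterexample, that same pair is a legal response of some inductiveness-query oracle $\hat{Q}$ for $\hat{\tr}$, and I would define $\hat{Q}$ to echo exactly these responses. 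As long as $\A$'s run under $Q$ receives only counterexamples, it is therefore step-for-step identical to the run of $\A$ on $(\Init,\hat{\tr},\Bad)$ under $\hat{Q}$ (the algorithm being deterministic, equal histories yield the same next query). Note also that no queried $\alpha$ can be inductive for $\hat{\tr}$ in this situation, since that would make it inductive for $\tr$ and force $Q$ to answer $\true$; thus $\hat{Q}$ really does return a counterexample at each such $\alpha$, and the coupling is consistent. Since $\A$ solves inference for $Y$ within $t$ queries, this run halts in at most $t$ queries with the correct answer for $\hat{\tr}$, which by the preliminary observation is correct for $\tr$.

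The main obstacle is the single point where the runs can diverge: a query $\alpha$ for which $\indclet{Q}{\tr}{\alpha}=\true$, i.e.\ $\alpha$ is inductive for $\tr$ but possibly \emph{not} for $\hat{\tr}$ — the extra transitions of the covering system can exhibit counterexamples that $\tr$ lacks. This asymmetry is exactly what the theorem must absorb, and I would resolve it by observing that at such a query $\A$ has already certified a solution: the oracle precondition guarantees $\Init\implies\alpha$ and $\alpha\implies\neg\Bad$, and $\alpha$ lies in the candidate class $\cnf{p(n)}{}$, so $\alpha$ is a genuine polynomial-length inductive invariant of $(\Init,\tr,\Bad)$ and the correct answer is \emph{yes}. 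Thus when $\A$ — which, like any inference algorithm, reports success upon verifying a candidate invariant — receives this first $\true$, its output is correct for $\tr$.

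Finally, for the query bound in this diverging case, I would compare against the $\hat{Q}$-run that returns a counterexample at this same $\alpha$ (a legal response precisely because $\alpha$ is not inductive for $\hat{\tr}$): that run agrees with the $\tr$-run on all earlier queries and only continues afterwards, so it makes at least as many queries as $\A$ performed before halting, while making at most $t$ in total; hence $\A$ halts on the $X$-input within $t$ queries. Combining the non-diverging and diverging cases shows $\A$ is a correct inductiveness-query algorithm for $X$ with query complexity at most $t$. The crux throughout is the monotonicity of counterexamples under $\tr\implies\hat{\tr}$ together with the fact that the only way the larger system can appear ``harder'', namely by offering additional counterexamples, never hurts the $X$-run: those missing counterexamples arise only for candidates that are already invariants of $\tr$.
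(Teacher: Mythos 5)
Your proposal is correct and takes essentially the same approach as the paper's proof: you couple the run on $(\Init,\tr,\Bad)$ with a run on the covering system $(\Init,\hat{\tr},\Bad)$ via an oracle that simply echoes the counterexamples (legal since $\tr \implies \hat{\tr}$ preserves them), treat a $\true$ answer as the only possible divergence, and transfer the verdict back through the two covering conditions. The paper organizes the argument identically---it splits upfront on whether some queried candidate is inductive for $\tr$ (dismissed with ``we are done'') and otherwise simulates with a derived oracle $Q'$ defined to return exactly the same counterexamples---and its ``we are done'' step rests on the same implicit modeling conventions (queried candidates serving as witnesses for the decision problem, and $\A$ reporting success upon a $\true$ answer) that you spell out in your divergent case, so those assumptions are not a gap relative to the paper's own level of rigor.
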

\begin{proof}
Let $\A$ be an algorithm for $Y$ %
as in the premise. We show that $\A$ also solves the problem for $X$.
Let $(\Init,\tr,\Bad) \in X$ %
and analyze $\bbalgt{\A}{Q}{}(\Init,\Bad,\bb{\tr})$, where $Q$ is some inductiveness-query oracle. %
Consider the first $t$ candidates, $\alpha_1,\ldots,\alpha_t$.
If one of them is an inductive invariant for $(\Init,\tr,\Bad)$, we are done (recall that the inductiveness query is only defined for queries with $\alpha_i$ s.t.\ $\Init \implies \alpha_i$ and $\alpha_i \implies \neg\Bad)$.
If we are not done, let $(\Init,\hat{\tr},\Bad) \in Y$ as in the premise for the given $(\Init,\tr,\Bad)$. %
We show that in this case $\bbalgt{\A}{Q}{}(\Init,\Bad,\bb{\tr})$ simulates $\bbalgt{\A}{Q'}{}(\Init,\Bad,\bb{\hat{\tr}})$ where $Q'$ is an inductiveness-query oracle derived from $Q$ by
$
\indclet{Q'}{\hat{\tr}}{\alpha_i}
		\; = \;
			\indclet{Q}{\tr}{\alpha_i}
$
for all $i=1,\ldots,t$.
Note that $\indclet{Q'}{\hat{\tr}}{\cdot}$ is a valid inductiveness-query oracle: by the assumption that $\alpha_i$ is not inductive for $\tr$, $\indclet{Q}{\tr}{\alpha} = (\sigma,\sigma')$, that is, $\sigma,\sigma' \models \alpha \land \tr \land \lnot \alpha'$.
From %
condition \ref{it:max-to-all-cond1}, %
$\tr \implies \hat{\tr}$, and so we deduce that also $\sigma,\sigma' \models \alpha \land \hat{\tr} \land \lnot \alpha'$.
Therefore, after at most $t$ queries, $\bbalgt{\A}{Q'}{}(\Init,\Bad,\bb{\hat{\tr}})$ terminates, returning either
\begin{inparaenum}[(i)]
	\item an inductive invariant %
	$\varphi \in \cnf{p(n)}{}$ for $(\Init,\hat{\tr},\Bad)$, which is also an inductive invariant for $(\Init,\tr,\Bad)$, by condition \ref{it:max-to-all-cond1}; or
	\item no inductive invariant in %
	$\cnf{p(n)}{}$ for $(\Init,\hat{\tr},\Bad)$, in which case this is also true for $(\Init,\tr,\Bad)$, by condition \ref{it:max-to-all-cond2}.
\end{inparaenum}
Either way $\bbalgt{\A}{Q}{}(\Init,\Bad,\bb{\tr})$ is correct and uses $\leq t$ queries.
\end{proof}

The lower bound for monotone maximal systems results from \Cref{thm:ind-max-as-all} together with the hardness previously obtained in \Cref{thm:query-nopoly-mon}:
\begin{corollary}
\label{cor:monmax-lower}
\label{thm:ind-max-lower}
Every inductiveness-query algorithm, even computationally-unrestricted, deciding polynomial-length inference for $\monmaxerror$ %
has query complexity of $\explowerbound$.
\sharon{the problem for which we show a lower bound here is actually invariant inference for $\cnf{n}{}$ and not arbitrary $p(n)$. We can inflate the vocabulary to get to other polynomilas, but this changes the class and its no longer $\monmaxerror$. But the upper bound still holds, so it still demonstrates the gap for every polynomial (thus the formulation of the main theorem is correct, just not necessarily with the same class)}
\end{corollary}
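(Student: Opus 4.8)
The plan is to derive the corollary by combining the reduction of \Cref{thm:ind-max-as-all} with the Hoare-query lower bound of \Cref{thm:query-nopoly-mon}. Concretely, I would instantiate \Cref{thm:ind-max-as-all} with $X = \hardclass$ and $Y = \monmaxerror$, so that it suffices to prove that $\monmaxerror$ \emph{covers} $\hardclass$. Granting this, every inductiveness-query algorithm solving inference for $\monmaxerror$ in $t$ queries is, by the theorem, also an inductiveness-query algorithm for $\hardclass$ with query complexity $t$; I would then show the latter forces $t = \explowerbound$.

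To establish the covering, I would take an arbitrary $\fts{\phi} = (\Init_k,\psigmatr{\phi},\Bad_k) \in \hardclass^k$ and split on whether it is safe, using \Cref{lem:validity-vs-inference}. If $\fts{\phi}$ is safe, it admits the monotone invariant $\varphi \in \moncnf{2k+1}{}$ guaranteed by \Cref{lem:validity-vs-inference}; I set $\hat{\tr} = \maxtr{\varphi} = \varphi \rightarrow \varphi'$, yielding the maximal system $(\Init_k,\maxtr{\varphi},\Bad_k)$, which lies in $\monmaxnoerror^k$ since $\Init_k \implies \varphi$ and $\varphi \implies \lnot\Bad_k$. If $\fts{\phi}$ is unsafe, I set $\hat{\tr} = \true$, so the covering system is $\badts{k} \in \monmaxerror$. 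In both cases the covering system shares the vocabulary of $\fts{\phi}$.

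Verifying the two covering conditions is the technical core. For condition~(1), $\psigmatr{\phi} \implies \hat{\tr}$: in the unsafe case this is trivial, and in the safe case it follows from maximality---since $\varphi$ is an invariant of $\fts{\phi}$ it satisfies consecution for $\psigmatr{\phi}$, so every transition of $\psigmatr{\phi}$ lies inside $\varphi \rightarrow \varphi'$, exactly the remark following the definition of maximal systems. For condition~(2), if $\fts{\phi}$ has an invariant in $\cnf{p(n)}{}$ then it is safe, placing us in the first case, and by \Cref{lem:maxsys-unique-inv} the covering maximal system has the unique invariant $\varphi \in \moncnf{2k+1}{} \subseteq \cnf{p(n)}{}$ (using $p(n) = \Omega(n)$ and $2k+1 \le \card{\voc}$); in the unsafe case the premise of~(2) is vacuous. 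Thus $\monmaxerror$ covers $\hardclass$, and \Cref{thm:ind-max-as-all} transfers the query complexity.

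It remains to lower-bound the inductiveness-query complexity of $\hardclass$, which I would obtain from the Hoare-query bound since inductiveness queries are a special case of Hoare queries: a $t$-query inductiveness-query algorithm is simulated by a Hoare-query algorithm that answers each inductiveness check on $\alpha$ by the Hoare query $\horacle{\tr}{\alpha,\alpha}$ and recovers the counterexample with $\card{\voc}$ further Hoare queries via \Cref{lem:hoare-cti}, for $O(t\cdot n)$ Hoare queries overall. By \Cref{thm:query-nopoly-mon} (taking target class $\mathcal{L} = \cnf{p(n)}{} \supseteq \moncnf{n}{}$) this total must be $\explowerbound$, whence $t = \explowerbound$ as well. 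I expect the main obstacle to be condition~(1) of the covering---checking carefully that maximality makes every $\psigmatr{\phi}$ transition admissible under $\maxtr{\varphi}$---together with the bookkeeping that the $O(n)$ blow-up in the inductiveness-to-Hoare simulation still leaves an exponential bound.
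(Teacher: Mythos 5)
Your proposal is correct and follows essentially the same route as the paper: cover $\hardclass$ by $\monmaxerror$ (the maximal system $\maxtr{\varphi}$ for the monotone invariant in the safe case, $\badts{k}$ in the unsafe case), apply \Cref{thm:ind-max-as-all}, and derive the inductiveness-query lower bound for $\hardclass$ from \Cref{thm:query-nopoly-mon} via the inductiveness-to-Hoare simulation of \Cref{lem:hoare-cti}. The only cosmetic difference is that you instantiate \Cref{thm:query-nopoly-mon} with $\mathcal{L} = \cnf{p(n)}{}$ directly, whereas the paper runs the argument for $\moncnf{n}{}$ and only at the end passes to $\cnf{n}{}$ using the fact that systems in $\monmaxerror$ have a monotone invariant or none at all; both handle this class mismatch equivalently.
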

\iflong
\begin{proof}
For $\hardclass$ with invariant class $\moncnf{n}{}$, %
an exponential number of Hoare queries is necessary, by \Cref{thm:query-nopoly-mon}.
It follows that in the inductiveness-query model, an exponential query complexity is also required (since a Hoare-query algorithm can implement a valid inductiveness-query oracle).
By arguing that $\monmaxerror$ covers $\hardclass$ we can
apply \Cref{thm:ind-max-as-all} to deduce that $\monmaxerror$ with invariant class $\moncnf{n}{}$ %
also necessitates an exponential number of inductiveness queries:

Let $(\Init_k,\tr,\Bad_k) \in \hardclass$. Recall that in these systems, $n = 2k+3$ (the vocabulary size). %
If the system does not have an inductive invariant in $\moncnf{n}{}$, then $\badts{k} = (\Init_k,\true,\Bad_k) \in \monmaxerror$ satisfies the conditions of \Cref{thm:ind-max-as-all} (condition \ref{it:max-to-all-cond1} holds as evidently $\tr \implies \true$, and condition \ref{it:max-to-all-cond2} holds vacuously).
Otherwise, there exists an inductive invariant $\varphi \in \moncnf{n}{}$ for $(\Init_k,\tr,\Bad_k)$. In this case, the system $(\Init_k,\maxtr{\varphi},\Bad_k)$ satisfies the conditions of \Cref{thm:ind-max-as-all}: condition \ref{it:max-to-all-cond1} is due to the maximality of $\maxtr{\varphi}$, and \ref{it:max-to-all-cond2} holds as $\varphi$ is an inductive invariant.

Thus $\monmaxerror$ with the class $\moncnf{n}{}$ requires an exponential number of inductiveness queries.
Since $\monmaxerror$ has a monotone invariant or none at all, it follows that an exponential number inductiveness queries is also required for $\monmaxerror$ with
$\cnf{n}{}$, as desired.
\end{proof}
\else
The proof applies \Cref{thm:ind-max-as-all} to $\monmax$, which covers $\hardclass$, while the hardness of the latter was established in \Cref{thm:query-nopoly}. The full proof appears in the extended version~\cite{extendedVersion}.
\fi

We note that the transition relations in $\monmax$ are themselves polynomial in $\card{\voc}$. Hence the query complexity in this lower bound is exponential not only in $\card{\voc}$ but also in $\card{\tr}$ (see~\Cref{rem:short-tr}).

\begin{changebar}
Finally, it is interesting to notice that the safe systems in $\monmax$ have a \emph{unique} inductive invariant, and still the problem is hard.
\end{changebar}

\OMIT{
	\begin{theorem}
	A Hoare-query algorithm deciding polynomial-length inference for $\invtswbad{\moncnf{n}{}}$ requires an exponential query complexity.
	\sharon{you only refer to the transition systems and not to the class of invariants. Do you mean that the language of invariants includes only monotone CNF? do you consider all CNF? Here it matters}
	\end{theorem}
	\begin{proof}
	This follows directly from \Cref{thm:hoare-nopoly}, since this class contains all the transition systems in the proof of \Cref{thm:hoare-nopoly}: their invariants are in fact monotone. \yotam{probably good to put this strengthening (monotone-CNF, not just CNF) somewhere also in the previous section}
	\end{proof}
	Using \Cref{lem:ind-vs-hoare} it immediately follows that this class is hard for inductiveness-query algorithms:
	\begin{corollary}
	\label{lem:ind-all-mon}
	\sharon{inline in the proof}
	An inductiveness-query algorithm deciding polynomial-length inference for $\invtswbad{\moncnf{n}{}}$ \sharon{with which class of invariants?} requires an exponential query complexity.
	\end{corollary}
	The key point is that for inductiveness query algorithms, maximal transition systems are as hard as all transition systems with the same invariants.	
}

\section{Invariant Learning \& Concept Learning with Queries}
\label{sec:concept-vs-invariant}
The theory of \emph{exact concept learning}~\cite{DBLP:journals/ml/Angluin87} asks a learner to identify an unknown formula\footnote{In general, a concept is a set of elements; here we focus on logical concepts.} $\varphi$ from a class $\mathcal{L}$ using queries posed to a teacher. Prominent types of queries include \emph{membership}---given state $\sigma$, return whether $\sigma \models \varphi$---and \emph{equivalence}---given $\theta$, return true if $\theta \equiv \varphi$ or, otherwise, a counterexample, a $\sigma$ s.t.\ $\sigma \not\models \theta, \sigma \models \varphi$ or vice versa.

What are the connections and differences between concept learning \emph{formulas} in $\mathcal{L}$ and learning \emph{invariants} in $\mathcal{L}$? Can concept learning algorithms be translated to inference algorithms? These questions have spurred much research~\cite[e.g.][]{ICELearning,DBLP:journals/acta/JhaS17}.
In this section we study these questions with the tool of query complexity and our aforementioned results.

The most significant outcome of this analysis is a new hardness result (\Cref{thm:ice-vs-concept}) showing that ICE-learning is provably harder than classical learning: namely, that, as advocated by~\citet{ICELearning}, learning from counterexamples to induction is inherently harder than learning from examples labeled positive or negative.
The proof of this result builds on the lower bound of \Cref{cor:monmax-lower}.
We also establish (im)possibility results for directly applying algorithms from concept learning to invariant inference.

\para{Complexity: the easy, the complex, and the even-more-complex}
\begin{table}
\begin{footnotesize}
  \centering
  \caption{Concept vs.\ invariant learning: query complexity of learning $\moncnf{n}{}$}
  \begin{threeparttable}
    \begin{tabular}{l|l|l||l|l}
    \multicolumn{3}{c||}{Invariant Inference} & \multicolumn{2}{c}{Concept Learning} \\
    \hline\hline
          & Maximal Systems & General Systems &       &  \\
    \hline
    Inductiveness & \begin{tabular}{@{}l@{}}Exponential \\ (\Cref{cor:monmax-lower})\end{tabular} & \begin{tabular}{@{}l@{}}Exponential \\ (\Cref{thm:query-nopoly})\end{tabular} & Equivalence & \begin{tabular}{@{}l@{}} \ Subexponential\tnote{1} \, / \, Polynomial\tnote{2} \\ \cite{DBLP:journals/jmlr/HellersteinKSS12,DBLP:journals/ml/Angluin87}\end{tabular} \\
    \hline
    Hoare & \begin{tabular}{@{}l@{}}Polynomial \\ (\Cref{cor:monmax-upper})\end{tabular} & \begin{tabular}{@{}l@{}}Exponential \\ (\Cref{thm:query-nopoly})\end{tabular} & \begin{tabular}{@{}l@{}}Equivalence\,+\\Membership\end{tabular} & \begin{tabular}{@{}l@{}}Polynomial\\\cite{DBLP:journals/ml/Angluin87}\end{tabular} \\
    \end{tabular}%
  \begin{tablenotes}
  \item[1] proper learning \item[2] with exponentially long candidates
  \end{tablenotes}
  \end{threeparttable}
  \label{tab:concept-vs-results}%
\end{footnotesize}
\end{table}%
 In this paper we have studied the complexity of inferring $\mathcal{L}=\moncnf{n}{}$ invariants using Hoare/inductiveness queries in two settings: for \emph{general systems} (in \Cref{sec:information-bounds}), and for \emph{maximal systems} in \Cref{sec:rice-vs-ice}. \Cref{tab:concept-vs-results} summarizes our results and contrasts them with known complexity results in classical concept learning for the same class of formulas.
For the sake of the comparison, the table maps inductiveness queries to equivalence queries (as these are similar at first sight) and maps the more powerful setting of Hoare queries to the more powerful setting of equivalence and membership queries.

Starting with similarity, the gap in the complexity between Hoare- and inductiveness-queries in learning invariants for maximal systems parallels the gap between equivalence and equivalence + membership queries in concept learning. Our proof for the upper bound for Hoare queries is related to the upper bound in concept learning and simulations of concept learning algorithms (see below), but the lower bound for inductiveness queries uses very different ideas, and establishes stronger lower bounds than possible in concept learning, as we describe below. %

The similarity ends here.
First, %
{\textit{general systems are harder}}, and inferring $\mathcal{L}=\moncnf{n}{}$ invariants for them is harder than concept learning with the same $\mathcal{L}$, even with the full power of Hoare queries. This, unsurprisingly, illustrates the challenges stemming from transition systems with complex reachability patterns, such as a large diameter.
Second, even the hard cases for concept learning have \emph{lower} complexity than the hard invariant inference problems: learning concepts in $\mathcal{L}=\moncnf{n}{}$ has \emph{subexponential} query complexity (or even polynomial complexity when exponentially-long candidates are allowed), whereas we prove \emph{exponential} lower bounds for inference.
One important instance of this discrepancy shows that inductiveness queries are inherently \emph{weaker} than equivalence queries, as
learning $\moncnf{n}{}$ invariants in the \emph{inductiveness} model is \emph{harder} than learning $\moncnf{n}{}$ formulas using \emph{equivalence} queries.
Put differently, this is a hardness result for concept learning with \emph{ICE-equivalence queries}, which are like equivalence queries, only when the given $\theta$ is not equivalent to the target concept $\varphi$ the teacher responds with an \emph{implication} counterexample~\cite{ICELearning}: a pair $\sigma,\sigma'$ s.t.\ $\sigma \models \theta$ and $\sigma' \not\models \theta$, but $\sigma \not\models \varphi$ or $\sigma' \models \varphi$. Our results thus imply:
\begin{corollary}
\label{thm:ice-vs-concept}
There exists a class of formulas $\mathcal{L}$ that can be learned using a subexponential number of equivalence queries, but requires an exponential number of ICE-equivalence queries.
\end{corollary}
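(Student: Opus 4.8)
The plan is to instantiate $\mathcal{L} = \moncnf{n}{}$ and obtain both halves of the separation from results already in hand rather than reproving anything from scratch. For the upper bound on ordinary equivalence queries I would simply invoke the classical concept-learning results for monotone CNF recorded in \Cref{tab:concept-vs-results}: $\moncnf{n}{}$ is properly learnable from a subexponential number of equivalence queries, so no new work is needed there. All the content lies in the ICE-equivalence lower bound, and the plan is to \emph{derive} it from the inductiveness-query lower bound of \Cref{cor:monmax-lower}.

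The key observation I would establish is that ICE-equivalence queries for a target $\varphi \in \moncnf{n}{}$ are \emph{exactly} inductiveness queries for the maximal system $\maxtr{\varphi}$. Concretely, for a candidate $\theta$ with $\Init \implies \theta \implies \neg\Bad$, the query $\indcle{\maxtr{\varphi}}{\theta}$ returns $\true$ iff $\theta$ is inductive, which by uniqueness of the invariant (\Cref{lem:maxsys-unique-inv}) happens iff $\theta \equiv \varphi$; otherwise it returns $(\sigma,\sigma')$ with $\sigma \models \theta$, $\sigma' \not\models \theta$, and $(\sigma,\sigma') \models \varphi \rightarrow \varphi'$, i.e.\ $\sigma \not\models \varphi$ or $\sigma' \models \varphi$. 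This is precisely the response format of an ICE-equivalence query for $\varphi$. Hence a purported ICE-equivalence learner $L$ for $\moncnf{n}{}$ with query complexity $q(n)$ can be run inside an inductiveness-query algorithm, each of its queries answered by a single inductiveness query.

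From this simulation I would build an inductiveness-query \emph{decision} algorithm for $\monmaxerror$ and invoke \Cref{cor:monmax-lower}. On a safe maximal system $(\Init_k,\maxtr{\varphi},\Bad_k)$ the simulation faithfully reproduces ICE-equivalence learning of $\varphi$, so $L$ receives $\true$ within $q(n)$ queries and we output ``invariant exists''; on the unsafe $\badts{k} = (\Init_k,\true,\Bad_k)$ no constrained candidate is inductive (for $\tr = \true$ an inductive $\theta$ must be valid or unsatisfiable, both forbidden by $\Init_k \implies \theta \implies \neg\Bad_k$), so $L$ never receives $\true$ and after exhausting the budget we output ``no invariant''. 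Candidates $\theta$ violating the constraints never require a transition query: using only $\Init_k,\Bad_k$ one exhibits a refuting implication pair, e.g.\ when $\Init_k \not\Rightarrow \theta$ pick $\sigma' \models \Init_k \land \neg\theta$ (so $\sigma' \models \varphi$ on safe systems) together with any $\sigma \models \theta$. Since this decision algorithm uses $O(q(n))$ inductiveness queries while \Cref{cor:monmax-lower} forces $\explowerbound$ queries, we conclude $q(n) = \explowerbound$, which is the desired ICE-equivalence lower bound and completes the separation.

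The hard part will be the bookkeeping in the reduction rather than any deep idea: one must check that the translation of ICE-equivalence responses to inductiveness counterexamples is faithful even for candidates outside the $\Init/\Bad$ window, and must dispose of the degenerate candidates $\theta \equiv \true$ and $\theta \equiv \false$, for which no implication counterexample exists at all---these can be assumed away, since a learner gains nothing by querying them. The conceptual crux is simply recognizing that the maximal-system construction makes inductiveness queries and ICE-equivalence queries literally the same oracle, so that the exponential hardness already proved for maximal systems transfers verbatim to the concept-learning side.
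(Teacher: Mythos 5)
Your proposal is correct and follows essentially the same route as the paper: instantiate $\mathcal{L} = \moncnf{n}{}$, cite the classical subexponential equivalence-query upper bound, and transfer the exponential lower bound of \Cref{cor:monmax-lower} by observing that an inductiveness query on a maximal system $(\Init_k,\maxtr{\varphi},\Bad_k)$ is, by \Cref{lem:maxsys-unique-inv} and the definition of $\maxtr{\varphi}$, exactly an ICE-equivalence query for the target $\varphi$. The paper leaves this reduction implicit (``our results thus imply''), so your explicit bookkeeping of out-of-window and degenerate candidates is a faithful elaboration of the same argument rather than a different one.
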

This result quantitatively corroborates the difference between %
{\textit{counterexamples to induction}} and %
{\textit{examples labeled positive or negative}}, a distinction advocated by~\citet[][]{ICELearning}.

The higher complexity of inferring invariants
has consequences for the feasibility of simulating queries (and algorithms) from concept learning in invariant inference, as we discuss next.

\para{Queries: some unimplementable algorithms}
\begin{table}
  \footnotesize
  \centering
  \caption{Concept vs.\ invariant learning: implementability of concept-learning queries}
    \begin{tabular}{c|c|c|c|c}
          & \multicolumn{2}{c|}{Maximal Systems} & \multicolumn{2}{c}{General Systems} \\
          \hline
          & Inductiveness & Hoare & Inductiveness & Hoare \\
          \hline
    Equivalence & \xmark     & \vmark     & \xmark     & \xmark \\
    Membership & \xmark     & \vmark     & \xmark     & \xmark \\
    \end{tabular}%
  \label{tab:concept-vs-queries}%
\end{table}
\Cref{tab:concept-vs-queries} summarizes our results for the possibility and impossibility of simulating concept learning algorithms in invariant learning.
This table depicts implementability (\vmark) or unimplementability (\xmark) of membership and equivalence queries used in concept learning a class of formulas $\mathcal{L}$ through inductiveness and
Hoare queries used in learning invariants for
{maximal systems} over $\mathcal{L}$, and
for {general systems} with candidate invariants in $\mathcal{L}$.
The proofs of impossibilities are based on the differences in complexity described above:
that neither equivalence nor membership queries can be simulated over general systems using even Hoare queries is implied by the hardness of general systems;
that neither equivalence nor membership can be simulated even over maximal systems using inductiveness queries is implied by the higher complexity of these compared to concept learning.
The only possibility result is of simulating inductiveness and membership queries using Hoare queries over maximal systems; the idea is that a Hoare query $\horacle{\maxtr{\varphi}}{\Init,\neg \cube{\sigma}}\overset{?}{=}\false$ implements a membership query on $\sigma$, thanks to fact that the inductive invariant is exactly the set of states reachable in one step, and that a membership query can disambiguate a counterexample to induction into a labeled example, so it is possible to simulate an equivalence query by an inductiveness query.
Interestingly, the algorithm we use to show the polynomial upper bound on Hoare queries for maximal systems,
PDR-1, can be obtained as such a translation of an algorithm from \citet{DBLP:journals/ml/Angluin87} performing concept learning of $\moncnf{n}{}$ using equivalence and membership queries.
\section{Related Work}
\label{sec:related-work}

\para{Complexity of invariant inference}
The fundamental question of the complexity of invariant inference in propositional logic has been studied by \citet{DBLP:conf/cade/LahiriQ09}. They show that deciding whether an invariant exists is PSPACE-complete. This includes systems with only exponentially-long invariants, which are inherently beyond reach for algorithms aiming to \emph{construct} an invariant. In this paper we focus on the search for polynomially-long invariants.
\citet{DBLP:conf/cade/LahiriQ09} study the related problem of template-based inference, and show it is $\psigma{2}$-complete. Polynomial-length inference for CNF formulas can be encoded as specific instances of template-based inference; the $\psigma{2}$-hardness proof of \citet{DBLP:conf/cade/LahiriQ09} uses more general templates and therefore does not directly imply the same hardness for polynomial-length inference.
The same work also shows that inference is only $\ppi{1}=\coNP$-complete when candidates are only conjunctions (or, dually, disjunctions). In this paper we focus on the richer class of CNF invariants.

\OMIT{
\paragraph{Decidability of invariant inference}
The decidability of the safety problem has been studied in many settings~\cite{?}.
\yotam{copied from intro}
\citet{DBLP:conf/popl/PadonISKS16} analyze classes of programs and candidate invariants where the safety problem is undecidable, and consider the decidability of finding an invariant in specific syntactic classes in these cases. In this paper we also take into account the length of the target invariant, leading to the study of complexity rather than of decidability.	
}

\OMIT{
\paragraph{Template-based invariant inference}
Many works employ syntactical templates for invariants, used to constrain the search~\cite[e.g.][]{DBLP:conf/cav/ColonSS03,DBLP:conf/sas/SankaranarayananSM04,DBLP:conf/pldi/SrivastavaG09,DBLP:journals/sttt/SrivastavaGF13,DBLP:series/natosec/AlurBDF0JKMMRSSSSTU15}.
In the definition of polynomial-length invariant inference (\Cref{def:bounded-inference}) we formalize the problem addressed by recent invariant inference algorithms that search for invariants in rich syntactical forms, with the motivation of achieving generality of the verification method and potentially improving the success rate.
\yotam{repetition} \citet{DBLP:conf/cade/LahiriQ09} study the problem with general templates and show it is $\psigma{2}$ complete. Their hardness proof does not directly imply the same hardness for polynomial-length inference.
}

\para{Black-box invariant inference}
Black-box access to the program in its analysis is widespread in research on testing~\cite[e.g.][]{blackboxtesting}.
In invariant inference, Daikon~\cite{DBLP:journals/tse/ErnstCGN01} initiated the black-box learning of \emph{likely} program invariants~\cite[see e.g.][]{DBLP:conf/icse/CsallnerTS08,DBLP:conf/issta/SankaranarayananCIG08}.
In this paper we are interested in inferring necessarily correct inductive invariants.
The ICE learning model, introduced by~\citet{ICELearning,DBLP:conf/popl/0001NMR16}, and extended to general Constrained Horn Clauses in later work~\cite{DBLP:journals/pacmpl/EzudheenND0M18}, pioneered a black-box view of inference algorithms such as Houdini~\cite{DBLP:conf/fm/FlanaganL01} and symbolic abstraction~\cite{DBLP:conf/vmcai/RepsSY04,DBLP:journals/entcs/ThakurLLR15}. %
The inductiveness model in our work is inspired by this work, focusing on black-box access to the transition relation while providing the learner with full knowledge of the set of initial and bad states.
Capturing PDR in a black-box model was achieved by extending ICE with relative-inductiveness queries~\cite{DBLP:conf/vmcai/VizelGSM17}. Our work shows that an extension is necessary, and applies to any Hoare-query algorithm.

\para{Lower bounds for black-box inference}
To the best of our knowledge, our work provides the first unconditional exponential lower bound for rich black-box inference models such as the Hoare-query model. An impossibility result for ICE learning in polynomial time in the setting of quantified invariants was obtained by~\citet{ICELearning}, based on the lower bound of~\citet{DBLP:journals/ml/Angluin90} for concept learning DFAs with equivalence queries. %
Our lower bound for monotone maximal systems
\begin{inparaenum}[(i)]
	\item demonstrates an exponential gap between ICE learning and Hoare-query algorithms such as PDR (\Cref{sec:rice-vs-ice}), and
	\item separates ICE learning from concept learning (\Cref{sec:concept-vs-invariant}); in particular, it holds even
	when candidates may be exponentially long (see~\Cref{thm:ind-max-lower} and~\citet[][Appendix B]{ICELearning-techreport}).
\end{inparaenum}
\OMIT{
	\citet{DBLP:journals/ml/Angluin87} introduced the model of exact concept learning with query types such as membership and equivalence, spurring a long line of study~\cite[e.g.][]{DBLP:journals/tcs/Angluin04,DBLP:conf/colt/Angluin89,DBLP:journals/ml/Angluin90,DBLP:journals/ipl/Bshouty96,DBLP:journals/jcss/BshoutyCGKT96,DBLP:conf/colt/BshoutyDVY17}. Notable applications in formal methods include the use of the $L^{*}$ algorithm~\cite{DBLP:journals/iandc/Angluin87} for synthesizing assumptions and guarantees in compositional reasoning~\cite{DBLP:conf/cav/AlurMN05,DBLP:conf/popl/AlurCMN05}.
	In this work we study a different notion of exact learning, learning \emph{inductive invariants} using \emph{Hoare queries} and related SAT-based checks. As we show, this differs from concept learning the invariants directly; for instance, monotone CNF formulas can be efficiently concept-learned, whereas transition systems with monotone CNF invariants are already hard (\Cref{thm:query-nopoly}), and again easy in maximal systems (\Cref{cor:monmax-upper}).
	Lower bounds techniques based on combinatorial dimensions have proved valuable in exact concept learning (see~\citet{DBLP:journals/tcs/Angluin04} for a survey); our lower bounds are based on different ideas. One notable difference is that our lower bounds apply even when queries may use exponentially long formulas---the lower bound on learning monotone invariants with inductiveness queries is in this sense stronger than the classical lower bound on learning monotone formulas using equivalence queries~\cite{DBLP:journals/ml/Angluin87,DBLP:journals/ml/Angluin90}. An interesting direction is to whether the assumption that transition relations are of polynomial length (see \Cref{sec:poly-tr}) could be advantageous in practice and incorporated into our theory.
}

\para{Learning and synthesis with queries}
Connections with exact learning with queries~\cite{DBLP:journals/ml/Angluin87} are discussed in \Cref{sec:concept-vs-invariant}.
The lens of synthesis has inspired many works applying ideas from machine learning to invariant inference~\cite[e.g.][]{DBLP:conf/icse/JhaGST10,DBLP:conf/sas/0001GHAN13,DBLP:conf/cav/SharmaNA12,DBLP:conf/esop/0001GHALN13,ICELearning,DBLP:journals/fmsd/SharmaA16}.
The role of learning with queries is recognized in prominent synthesis approaches such as Counterexample-Guided Inductive Synthesis (CEGIS)~\cite{DBLP:conf/asplos/Solar-LezamaTBSS06} and synthesizer-driven approaches~\cite[e.g.][]{DBLP:conf/synasc/Gulwani12,DBLP:conf/icse/JhaGST10,DBLP:journals/corr/LePPRUG17}, which learn from equivalence and membership queries~\cite{DBLP:journals/acta/JhaS17,DBLP:series/natosec/AlurBDF0JKMMRSSSSTU15,DBLP:conf/colt/BshoutyDVY17,DBLP:conf/cav/Drachsler-Cohen17}.
The theory of oracle-guided inductive synthesis~\cite{DBLP:journals/acta/JhaS17} theoretically studies the convergence of CEGIS in infinite concept classes using different types of counterexamples-oracles, and relates the finite case to the teaching dimension~\cite{DBLP:journals/jcss/GoldmanK95}.
In this work we study inference based on a different form of queries, and prove lower bounds on the convergence rate in finite classes.

\para{Proof complexity}
Proof complexity studies the power of polynomially-long proofs in different proof systems. A seminal result is that a propositional encoding of the pigeonhole principle has no polynomial resolution proofs~\cite{DBLP:journals/tcs/Haken85}. Ideas and tools from proof complexity have been applied to study SAT solvers~\cite[e.g.][]{DBLP:journals/ai/PipatsrisawatD11} and recently also SMT~\cite{DBLP:conf/cav/RobereKG18}. Proof complexity is an alternative technical approach to study the complexity of proof search algorithms, by showing that some instances do not have a short proof, showing a lower bound regardless of how search is conducted.
Our work, inspired by learning theory, provides exponential lower bounds on query-based search even when the proof system is sufficiently strong to admit short proofs: in our setting, there is always a short derivation of an inductive invariant by generalization in backward-reachability, blocking counterexamples with the optimal choice, using clauses from a target invariant (see \Cref{sec:ideal-gen}).
We expect that proof complexity methods would prove valuable in further study of inference.  
\vspace{-0.1cm}
\section{Conclusion}
Motivated by the rise of SAT-based invariant inference algorithms, we have attempted to elucidate some of the principles on which they are based by a theoretical complexity analysis of algorithms attempting to infer invariants of polynomial size.
We have developed information-based analysis tools, inspired by machine learning theory, to investigate two focal points in SAT-based inference design:
\begin{inparaenum}
	\item \emph{Generalization}, which we have shown to be impossible from a polynomial number of Hoare queries in the general case;
	\item \emph{Rich Hoare queries}, beyond presenting candidate invariants, which we have shown to be pivotal in some cases.
\end{inparaenum}
Our upper bound for PDR on the class of monotone maximal systems is a first step towards theoretical conditions guaranteeing polynomial running time for such algorithms.
\begin{changebar} 
One lesson from our results is the importance of characteristics of the transition relations (rather than of candidate invariants), which make the difference between the lower bound for general systems (\Cref{thm:query-nopoly}) and the upper bound for maximal systems (\Cref{cor:monmax-upper}), both for the same class of candidate invariants. 
\end{changebar}We believe that theoretical guarantees of efficient inference would involve special classes of transitions systems and algorithms using repeated generalization employing rich Hoare queries. %

At the heart of our analysis lies the observation that many interesting SAT-based algorithms can be cast in a black-box model.
This work focuses on %
the limits and opportunities in black-box inference and shows interesting information-theoretic lower bounds.
One avenue for further research is an information-based analysis of black-box models extended with white-box capabilities, e.g.\ by investigating syntactical conditions on the transition relation that simplify generalization. %

\section*{Acknowledgments} 
\iflong
\else{\small 
\fi
We thank our shepherd and the anonymous referees for comments that improved the paper. We thank Kalev Alpernas, %
Nikolaj Bj{\o}rner, 
P.\ Madhusudan, Yishay Mansour, Oded Padon, Hila Peleg, Muli Safra, and James~R.~Wilcox for insightful discussions and suggestions, and Gil Buchbinder for saving a day.
The research leading to these results has received funding from the European Research Council under the European Union's Horizon 2020 research and innovation programme (grant agreement No [759102-SVIS]).
This research was partially supported by the National Science Foundation (NSF) grant no. CCF-1617498, by Len Blavatnik and the Blavatnik Family foundation,
the Blavatnik Interdisciplinary Cyber Research Center, Tel Aviv University,
the United States-Israel Binational Science Foundation (BSF) grant No.\ 2016260,
and the Israeli Science Foundation (ISF) grant No.\ 1810/18.
\iflong
\else
}
\fi 

\clearpage
\bibliography{refs}

\iflong
\clearpage
\appendix

\section{Duality of Backward- \& Forward- Reachability}
\label{sec:forward-backward-duality}
Throughout the paper we study %
invariant inference w.r.t.\
CNF formulas with $p(n)$ clauses (where $n= \card{\voc}$). %
Our results apply also to the case of %
Disjunctive Normal Form (DNF) formulas with at most $p(n)$
cubes.
This is a corollary of the known duality between backward- and forward-reachability:
The dual transition system is
$
	\dual{(\Init,\tr,\Bad)} \eqdef (\Bad, \tr^{-1}, \Init)
$
where $\tr^{-1}$ is the inverse (between pre- and post-states) of the transition relation, obtained from $\tr$ by switching the roles of $\voc$ and $\voc'$.
The dual of a class of transition systems is the class of dual transition systems.
The dual of a formula is $\dual{\varphi} \eqdef \neg \varphi$. The dual of a class of formulas is the class of dual formulas.
We have that $I$ is an inductive invariant w.r.t.\ $(\Init,\tr,\Bad)$ iff $\dual{I}$ is an inductive invariant w.r.t.\ $\dual{(\Init,\tr,\Bad)}$.
For an algorithm $\A$, The \emph{dual algorithm} $\dual{\A}$ is the algorithm that, given as input $(\Init,\tr,\Bad)$, executes $\A$ on $\dual{(\Init,\tr,\Bad)}$ and returns the dual invariant. By applying the dual algorithm to the dual transition systems and target invariants, we obtain:
\begin{lemma}
Polynomial-length inference of $\progs{}$ w.r.t.\ $\cnf{p(n)}{}$ has the same complexity as polynomial-length inference of $\dual{\progs{}}$ w.r.t.\ $\dnf{p(n)}{}$.
\end{lemma}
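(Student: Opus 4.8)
The plan is to exhibit dualization as an efficiently-computable involution that sets up a two-way reduction between the two inference problems, and then to invoke the inductiveness correspondence recorded just above to match yes-instances with yes-instances. The two structural facts on which everything rests are: (i) dualization of transition systems is an involution, since inverting the roles of $\voc$ and $\voc'$ twice is the identity, i.e.\ $(\tr^{-1})^{-1} = \tr$, so $\dual{\dual{(\Init,\tr,\Bad)}} = \dual{(\Bad,\tr^{-1},\Init)} = (\Init,\tr,\Bad)$ and hence $\dual{\dual{\progs{}}} = \progs{}$; and (ii) dualization carries the target invariant classes onto one another. For (ii), by De Morgan the negation of a CNF formula $c_1 \land \cdots \land c_k$ with clauses $c_i$ is the DNF formula $\neg c_1 \lor \cdots \lor \neg c_k$ whose cubes $\neg c_i$ number exactly $k$, and conversely; since $\dual{\varphi} = \neg\varphi$ preserves the vocabulary (and hence $n = \card{\voc}$ and the bound $p(n)$), this gives $\dual{\cnf{p(n)}{}} = \dnf{p(n)}{}$ and, using $\neg\neg\varphi = \varphi$, also $\dual{\dnf{p(n)}{}} = \cnf{p(n)}{}$.

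Next I would combine these with the inductiveness correspondence stated above: $I$ is an inductive invariant of $\TS$ iff $\dual{I} = \neg I$ is an inductive invariant of $\dual{\TS}$. Read through the class correspondence (ii), this shows that $\TS \in \progs{}$ admits an inductive invariant in $\cnf{p(n)}{}$ iff $\dual{\TS} \in \dual{\progs{}}$ admits one in $\dnf{p(n)}{}$: given an inductive $I \in \cnf{p(n)}{}$ for $\TS$, its negation is a $\dnf{p(n)}{}$ inductive invariant for $\dual{\TS}$, and conversely every $\dnf{p(n)}{}$ invariant $J$ of $\dual{\TS}$ arises as $\neg I$ for the $\cnf{p(n)}{}$ formula $I = \neg J$. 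Hence $\TS$ is a yes-instance of $(\progs{},\cnf{p(n)}{})$-inference exactly when $\dual{\TS}$ is a yes-instance of $(\dual{\progs{}},\dnf{p(n)}{})$-inference.

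Finally I would observe that the map $\TS \mapsto \dual{\TS}$ is computable in time linear in $\card{\TS}$, as it merely swaps $\Init$ with $\Bad$ and relabels $\voc \leftrightarrow \voc'$ in $\tr$; it is therefore a polynomial-time many-one reduction from one decision problem to the other. Because the map is an involution and the dual of $\dnf{p(n)}{}$ is again $\cnf{p(n)}{}$, the same argument applied to $\dual{\progs{}}$ yields the reverse reduction; equivalently, the dual algorithm $\dual{\A}$ defined above converts any decision procedure for one problem into one for the other with only fixed linear-time pre/post-processing, so the two problems have the same complexity. I do not anticipate a genuine obstacle: the only point requiring care is the bookkeeping of clause/cube counts and vocabulary size under negation so that the same bound $p(n)$ transfers, which is routine De Morgan accounting, while the one nontrivial ingredient---the inductiveness correspondence---is already granted.
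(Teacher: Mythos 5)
Your proposal is correct and follows essentially the same route as the paper, which proves the lemma in one line by applying the dual algorithm $\dual{\A}$ to the dual systems and dual invariant classes, relying on the stated correspondence that $I$ is inductive for $\TS$ iff $\dual{I}$ is inductive for $\dual{\TS}$. Your write-up merely fills in the routine details the paper leaves implicit (dualization is an involution, De Morgan maps $\cnf{p(n)}{}$ onto $\dnf{p(n)}{}$, and the dual map is computable in linear time), so there is nothing to correct.
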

\section{Polynomial-Length Transition Systems and Hoare Information Complexity}
\label{sec:poly-tr}
In our definitions, the complexity of a black-box query-based inference algorithm (\Cref{def:query-complexity}) is a function of the target invariant length derived from the vocabulary size, and does not depend on the length of the representation of the transition relation, which the inference algorithm cannot access directly.
Accordingly, the general case in our lower bound on the query complexity of Hoare-query algorithms is the class of \emph{all} transition systems, and we show that to solve the general case the number of Hoare queries must be exponential in $n=\card{\voc}$.
\sharon{I thought that the following interrupts the flow and can be omitted. Maybe when this section appears after talking about concept learning it makes more sense, not sure} \yotam{it's here to justify that our results are interesting, so I'd like to have it somewhere, somewhere else is OK}%
We emphasize that this result holds despite the reasonably-sized class of target invariants (an exponential number, not doubly exponential), and that Hoare queries are rich (can be used with any precondition and postcondition).

Can inference algorithms utilize an assumption that the transition relation can be expressed by formulas of polynomial size? Formally, this asks for an analysis of the query complexity as it depends also on $\card{\tr}$ in addition to $\card{\voc}$.
An important technical difference is that in this setting an algorithm may attempt to ``breach'' our black-box definition, and (concept-) learn the transition relation formula itself; once it is obtained the algorithm can deduce from it whether an invariant exists using unlimited computational power (see \Cref{sec:gen-intro}).
The possibility of such concept learning reflects on the statement of \Cref{thm:hoare-nopoly} when the complexity definition is altered.
With unlimited computational power and exponentially-long queries, the algorithm can learn the formula of $\tr$:
\begin{lemma}
\label{lem:lem-tr-length-unrestricted-poly}
There exists a computationally unrestricted Hoare-query inference algorithm $\bbalg{\horaclesym}{}$ with query complexity polynomial in $\card{\voc},\card{\TS}$ for the class of transition systems $\hardclass$ (\Cref{sec:hard-class}). %
\end{lemma}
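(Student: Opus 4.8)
The plan is to exploit the fact that every system in $\hardclass$ is fully determined by a single Boolean function $\phi(\vec{y},\vec{x})$ over $2k$ variables: the transition relation $\psigmatr{\phi}$ is deterministic, and its only dependence on the instance is through the value of $\phi$. By \Cref{lem:validity-vs-inference}, once the algorithm knows $\phi$ as a function it can decide, with its unlimited computational power, whether $\exists\vec{y}.\,\forall\vec{x}.\,\phi$ holds, and hence whether $\fts{\phi}$ admits an invariant in $\moncnf{2k+1}{}\subseteq\cnf{p(n)}{}$. So it suffices to \emph{learn $\phi$} using a number of Hoare queries polynomial in $n=\card{\voc}$ and $\card{\TS}$.

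First I would show that Hoare queries let the algorithm read off $\phi$ through the behaviour of the flag bits $a,b$. Starting from a pre-state with $a=\false$ and $\vec{x}\neq\vec{1}$, the post-state satisfies $a'=\true$ exactly when $\phi(\vec{y},\vec{x})=\false$; starting from $a=\false, b=\true, \vec{x}=\vec{1}$, the post-state satisfies $b'=\false$ exactly when $\phi(\vec{y},\vec{1})=\true$ (the reset of $a$ at $\vec{x}=\vec{1}$ forces the use of $b$ there). Consequently, for any hypothesis function $\hat\phi$ — represented, if need be, by an exponentially long formula, which the model permits — a single Hoare query of the form $\horacle{\tr}{(\neg a\land b\land\neg e\land(\vec{x}\neq\vec{1})\land\hat\phi),\,\neg a}$ tests $\hat\phi\implies\phi$ on the region $\vec{x}\neq\vec{1}$, and three analogous queries (using the dual postcondition $a$, and using $\neg b/b$ on the region $\vec{x}=\vec{1}$) complete a test of $\hat\phi\equiv\phi$. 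When one of these four queries returns $\false$, \Cref{lem:hoare-cti} recovers a witness pre-state in $n$ further queries, from which the algorithm reads off a point $(\vec{y}_0,\vec{x}_0)$ together with the true value $\phi(\vec{y}_0,\vec{x}_0)$. Thus an \emph{equivalence query} on $\phi$ (returning either ``equal'' or a correctly labelled counterexample) is simulated by $O(n)$ Hoare queries.

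With equivalence queries in hand I would run the halving (majority-vote) learner over a size-bounded class of candidate functions, using iterative deepening to avoid knowing $\card{\TS}$ in advance. For $s=1,2,4,\dots$ let $\mathcal{C}_s$ be the set of functions expressible by a formula of size $\le s$, so that $\log_2\card{\mathcal{C}_s}=O(s\log(sk))$; run halving over $\mathcal{C}_s$, querying the pointwise majority hypothesis of the current version space and discarding at least half of it on each returned counterexample. If the version space empties without an ``equal'' answer, the true $\phi$ has size exceeding $s$, so double $s$ and repeat. Since $\card{\phi}\le\card{\psigmatr{\phi}}\le\card{\TS}$, the loop terminates by $s=O(\card{\TS})$, and the total query count is $\sum_{s}O(s\log(sk))\cdot O(n)=\mathrm{poly}(n,\card{\TS})$; soundness of the simulated equivalence query guarantees the algorithm only halts with a correct answer. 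The main obstacle — and the reason plain halving over all $2^{2^{2k}}$ functions does not work — is precisely this size issue: a simple $\phi$ yields a small $\card{\TS}$ yet still lives in a doubly-exponential ambient class, so the learner must be made size-aware (via the deepening on $s$) to keep the query count polynomial in $\card{\TS}$ rather than exponential in $k$.
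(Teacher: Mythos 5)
Your proposal is correct and takes essentially the same route as the paper's own proof: identify $\phi$ exactly via the halving/majority-vote algorithm, implementing each equivalence query (including labelled counterexample extraction via \Cref{lem:hoare-cti}) with $O(n)$ Hoare queries, and then decide the $\qbf{2}$ instance---hence invariant existence via \Cref{lem:validity-vs-inference}---using unrestricted computational power. The differences are only in bookkeeping: the paper realizes the equivalence test as a single Hoare query whose postcondition is the post-image of the hypothesis system (rather than your four precondition-embedded flag tests, which are arguably cleaner since they handle the $\vec{x}=\vec{1}$ region explicitly through $b$), and it assumes the size bound on $\phi$ is given rather than recovering it by iterative deepening as you do.
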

\begin{proof}
Assume we are attempting to infer with Hoare queries an invariant for $\fts{\phi} \in \hardclass^k$, with $\card{\phi} \leq m$. We show how to identify $\phi$ in a number of Hoare queries polynomial in $m$, from which using unlimited computational power the algorithm can check if an invariant for $\fts{\phi}$ exists.

Use the halving algorithm/majority vote~\cite{barzdins1972prediction,DBLP:journals/ml/Angluin87,DBLP:journals/ml/Littlestone87} to learn $\phi$. For completeness, we describe it here:
At each step, consider $S$, the set of formulas (over the same vocabulary as $\phi)$ of length at most $m$ that are consistent with the results of the queries performed so far (namely, $\theta \in S$ if performing the same Hoare queries on $\fts{\theta}$ yields the same results as were observed). We can terminate if $S$ includes only one formula.
Otherwise, take $\hat{\phi} \in S$ to be the formula that is $\true$ on a valuation $v$ iff the majority of formulas in $S$ so far are $\true$ on $v$. Check whether $\phi \equiv \hat{\phi}$ using a Hoare query to be described below. If $\phi \equiv \hat{\phi}$ we are done, otherwise we obtain a counterexample: a valuation on which $\phi,\hat{\phi}$ disagree. In the next step, the set $S$ is reduced by at least a factor of two, so the process terminates after a number of iterations polynomial in $m$.

It remains to implement using Hoare queries the check of whether $\phi \equiv \hat{\phi}$ and obtaining a counterexample.\footnote{
	\Cref{cor:monmax-lower} implies that inductiveness queries alone cannot perform this check, because there the transition relation formulas are also polynomial in $n$, as we discuss below concerning the results of \Cref{thm:hoare-inductive-main-theorem}. \sharon{where is "below?} \yotam{in this appendix, don't have a ref, added ``concerning''}
}
This can be done using the query $\horacle{\phi}{\alpha,\beta}$ where:
$\alpha$ is the formula $\neg a \land \neg b \land \neg e$ %
(with any $\vec{y},\vec{x}$),
and $\beta$ is the post-image of $\fts{\hat{\phi}}$ on $\alpha$.
\yotam{was informal and you said confusing, rephrased}If the result is $\true$, the equivalence query returns $\true$.
Otherwise, the valuation differentiating $\phi,\hat{\phi}$ is obtained from the pre-state (in the propositions $\vec{y},\vec{x}$) of the counterexample to the Hoare query (using \Cref{lem:hoare-cti}). This is correct because $\psigmatr{\phi},\psigmatr{\hat{\phi}}$ have different transitions from the same state $\sigma$ iff $\phi(\vec{y},\vec{x}) \neq \hat{\phi}(\vec{y},\vec{x})$ where ($\vec{y},\vec{x}$ are from the interpretation of these propositions in $\sigma$.
\end{proof}

As~\citet{DBLP:journals/ml/Angluin87} recognizes, this result relies on queries on candidates that are exponentially-long.
When queries can be performed only on polynomially-long formulas and their choice can use unrestricted computational power, the question of whether a result analogous to \Cref{thm:query-nopoly} exists is related to open questions in concept learning, such as\sharon{not convinced why this is good} \yotam{slightly repharsed. otherwise, just say it's related to open questions without details?} whether a polynomial number of equivalence and membership queries can identify a formula of length at most $m$~\cite{DBLP:journals/jcss/BshoutyCGKT96}\yotam{is this actually true? old reference}.
We thus propose the following conjecture, a variant of \Cref{thm:query-nopoly} where the complexity depends also on the size of the transition relation:
\begin{conjecture}
\label{conj:lem-tr-length-length-exp} %
Every Hoare-query inference algorithm $\bbalg{\horaclesym}{}$, \emph{even computationally-unrestricted}, \emph{querying on formulas polynomial in $\card{\voc}+\card{\TS}$}, for the class of transition systems $\hardclass$ (\Cref{sec:hard-class}) and %
and for any class of target invariants $\mathcal{L}$ s.t.\ $\moncnf{n}{} \subseteq \mathcal{L}$,
has query complexity superpolynomial in $\card{\voc}+\card{\TS}$.
\end{conjecture}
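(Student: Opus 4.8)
The plan is to attack \Cref{conj:lem-tr-length-length-exp} by turning it into a lower bound on \emph{learning} the hidden formula $\phi$ underlying $\fts{\phi}\in\hardclass$, under the constraint that queries stay short. The motivating observation is the dual reading of \Cref{lem:lem-tr-length-unrestricted-poly}: with exponentially long queries the learner can run the halving algorithm to pin down $\phi$ and then decide $\qbf{2}$ with unrestricted computation, so the entire content of the conjecture is that forbidding long queries destroys this strategy. First I would make the correspondence sketched in the footnote of \Cref{lem:lem-tr-length-unrestricted-poly} fully precise while tracking query length. A Hoare query whose precondition $\alpha$ fixes a single reset pre-state $v$ and whose postcondition is $\neg a$ satisfies $\horacle{\phi}{\alpha,\neg a}=\true$ exactly when $\phi(v)=\true$; this is a \emph{membership} query and is always of polynomial length. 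An \emph{equivalence} query against a candidate $\hat\phi$ is implemented, as in \Cref{lem:lem-tr-length-unrestricted-poly}, by taking the postcondition to be the one-step image of $\fts{\hat\phi}$, and this postcondition is short only when $\hat\phi$ itself is short. Hence the polynomial-length restriction confines the algorithm to membership queries together with equivalence queries against \emph{short} candidates, and in particular forbids the exponentially long majority-vote candidate on which the halving algorithm of \Cref{lem:lem-tr-length-unrestricted-poly} relies.

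Second, I would run the information-theoretic adversary of the proof of \Cref{thm:query-nopoly} against such a restricted algorithm, but now insisting that the two indistinguishable witnesses be of polynomial size. Following that proof I would walk down the maximally satisfiable branch, record the polynomially many valuations on which the answered queries actually pin down $\phi$, and then attempt to complete this partial constraint into two \emph{short} total formulas $\psi_1,\psi_2$: one with $\exists\vec{y}.\,\forall\vec{x}.\,\psi_1$ \true{}, so that $\fts{\psi_1}$ has an invariant in $\moncnf{2k+1}{}$ by \Cref{lem:validity-vs-inference}, and one with $\exists\vec{y}.\,\forall\vec{x}.\,\psi_2$ \false{}, so that $\fts{\psi_2}$ has no invariant at all. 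If such short completions exist after only polynomially many polynomial-length queries, then no algorithm in this model can have separated the safe from the unsafe instance, yielding the desired superpolynomial lower bound.

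The main obstacle, and precisely why this is stated as a conjecture rather than a theorem, is this completion step. In \Cref{thm:query-nopoly} the witnesses $\psi_1,\psi_2$ are built as conjunctions and disjunctions ranging over \emph{all} of the constrained valuations and are therefore exponentially long, which is exactly what the polynomial-size requirement outlaws. Constructing short formulas consistent with a polynomial set of membership answers yet disagreeing on their $\qbf{2}$ value is essentially the open question of whether a formula of length at most $m$ can fail to be determined by polynomially many polynomial-length membership and equivalence queries, which is an open problem in exact concept learning (see \citet{DBLP:journals/jcss/BshoutyCGKT96}). I would therefore expect an unconditional proof to require resolving that learning-theoretic question, and would instead aim either for a \emph{conditional} lower bound assuming such a concept-learning hardness result, or for directly embedding a query-learning-hard family of short formulas into $\phi$; the delicate point in the latter being that deciding the single $\qbf{2}$ bit of $\exists\vec{y}.\,\forall\vec{x}.\,\phi$ is weaker than fully identifying $\phi$, so the embedded family must make that one bit hard to extract, not merely make $\phi$ hard to learn in full.
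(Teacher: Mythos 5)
The statement you were given is a \emph{conjecture}: the paper offers no proof of it, only a justification for why it is left open, and your proposal correctly ends in the same place rather than claiming a theorem. Your analysis in fact reconstructs the paper's own reasoning quite faithfully. The paper precedes the conjecture with \Cref{lem:lem-tr-length-unrestricted-poly}, showing that a computationally unrestricted algorithm with \emph{exponentially long} queries can learn $\phi$ via the halving algorithm (implementing an equivalence query whose postcondition is the post-image of the candidate, and extracting counterexamples via \Cref{lem:hoare-cti}), and then remarks that once queries are restricted to polynomial length, the existence of a lower bound is tied to open questions in exact concept learning --- citing precisely the identification question you name. Your observations that single-state preconditions yield polynomial-length membership queries, and that the equivalence-query simulation is only as short as the candidate being tested, sharpen this into a clean account of what the length restriction actually removes.

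Your diagnosis of where the \Cref{thm:query-nopoly} adversary breaks down is also essentially right, though one refinement is worth making. Against an algorithm making only polynomially many queries, the witnesses $\psi_1,\psi_2$ of that proof are not literally exponential: they carry one constraint per sat query, hence have size roughly $q \cdot n$ where $q$ is the query count. The real obstruction is a circularity in the conjecture's complexity measure: the query budget is $p(\card{\voc}+\card{\TS})$, so for the constructed witnesses to be legitimate members of the class against which the bound is measured one needs $q \cdot n \leq \card{\TS}$ while simultaneously $q = p(\card{\voc}+\card{\TS})$, which is unsatisfiable for every polynomial $p$ of degree at least one. Either way the conclusion coincides with yours and with the paper's: any proof must produce \emph{short} indistinguishable witnesses, which is exactly the open identification problem (a proof of the conjecture would in particular refute efficient exact learning of short formulas in this query model, since identification plus unbounded computation decides inference), and your closing caveat --- that hardness of identifying $\phi$ does not by itself give hardness of extracting the single $\qbf{2}$ bit --- correctly states the remaining gap in the other direction.
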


If \Cref{conj:lem-tr-length-length-exp} is true, a result analogous to \Cref{thm:hoare-nopoly} can be obtained, obtaining superpolynomial lower bounds not only in $\card{\voc}$ but also in $\card{\TS}$.

\sharon{relate this to equiv only?} We emphasize that the exponential lower bound we obtain in \Cref{thm:hoare-inductive-main-theorem} is already exponential also in $\card{\TS}$ (and this holds even when candidates can be exponentially long), %
as the transition relations in $\monmax$ are all of size polynomial in their vocabulary.
\sharon{too early here} \yotam{I guess it's OK now that it's in the appendix}
 \fi

\end{document}